\documentclass[UKenglish]{llncs}
\usepackage{fullpage} 
\usepackage{microtype}

\bibliographystyle{plainurl}


\title{Dichotomy Results for Classified Rank-Maximal Matchings and Popular Matchings}

\author{Meghana Nasre\inst{1} \and Prajakta Nimbhorkar\inst{2} \and Nada Pulath \inst{1}}

\authorrunning{M. Nasre et al.}

\institute{Indian Institute of Technology, Madras, India \and Chennai Mathematical Institute, India}
\usepackage{algorithm}
\usepackage{algorithmic}
\usepackage{amssymb}
\usepackage{amsmath} 
\usepackage{soul}
\usepackage{tikz}
\usetikzlibrary{positioning}
\usetikzlibrary{arrows.meta, positioning, decorations.markings}




\newcommand{\tH}{X}
\newcommand{\tf}{g}

\newcommand{\agt}{applicant }
\newcommand{\agtse}{applicants}

\newcommand{\prgtse}{posts}
\newcommand{\prgte}{post}
\newcommand{\newE}{F}
 \newcommand{\SetU}{U}
\newcommand{\Res}{H_k(f_k)}
\newcommand{\res}{H(f)}
\newcommand{\REM}[1]{}
\newcommand{\NP}{\mbox{{\sf NP}}}
\newcommand{\CRMM}{\mbox{{\sf CRMM}}}
\newcommand{\monEsat}{{\sf monotone 1-in-3 SAT}}

\newcommand{\CHAT}{\mbox{{\sf CHAT}}}
\newcommand{\CPM}{\mbox{{\sf CPM}}}
\newcommand{\LCPM}{\mbox{{\sf LCPM}}}

\begin{document}

\maketitle
\begin{abstract}
In this paper, we consider the problem of computing an optimal matching in a bipartite graph where 
elements of one side of the bipartition specify preferences over the other side, and one or both sides can have capacities and classifications.
The input instance is a bipartite graph $G=(A\cup P,E)$, where $A$ is a set of applicants, $P$ is a set of posts, and
each applicant ranks its neighbors in an order of preference, possibly involving ties. Moreover, each vertex $v\in A\cup P$
has a quota $q(v)$ denoting the maximum number of partners it can have in any allocation of applicants to posts - referred to
as a {\em matching} in this paper.
A classification $\mathcal{C}_u$ for a vertex $u$ is a collection of subsets of neighbors of $u$. Each subset (class) $C\in \mathcal{C}_u$
has an {\em upper quota} denoting the maximum number of vertices from $C$ that can be matched to $u$. The goal is to find a matching
that is {\em optimal} amongst all the {\em feasible matchings}, which are matchings that respect quotas of all the vertices and classes.

We consider two well-studied notions of optimality namely {\em popularity} and {\em rank-maximality}.
The notion of {\em rank-maximality} involves finding a matching in $G$ with maximum number of rank-$1$ edges, subject to that, maximum
number of rank-$2$ edges and so on. 
We present an {$O(|E|^2)$}-time algorithm for finding a feasible rank-maximal matching, when each classification
is a {\em laminar} family. We complement this with an \NP-hardness result when classes are non-laminar even under strict preference lists, and even when only posts have
classifications, and each applicant has a quota of one.
We show an analogous dichotomy result for computing a popular matching  amongst feasible matchings (if one exists) in a bipartite graph with 
posts having capacities and classifications and applicants having a quota of one. 

To solve the classified rank-maximal and popular matchings problems, we present a framework that involves computing max-flows in multiple flow networks.  
We use the fact that, in {\em any} flow network, w.r.t. {\em any} max-flow the vertices can be decomposed
into three disjoint sets and this decomposition is {\em invariant } of the flow. This simple fact turns out to be surprisingly useful
in the design of our combinatorial algorithms. 
We believe that our technique of flow networks 
will find applications
in other capacitated matching problems with preferences.



 \end{abstract}

\section{Introduction}
\label{sec:intro}
The input to our problem is a bipartite graph $G = (A \cup P, E)$ where $A$ is the set of \agtse,
$P$ is the set of \prgtse.
Every vertex $a\in A$ has a preference ordering over  its neighbors in $P$, possibly involving ties, referred to as the 
{\em preference list of $a$}. An edge $(a,p)\in E, a\in A, p\in P$ is said to be a {\em rank-$k$ edge} if $p$ is a $k$-th choice of $a$.
Every vertex $ u \in A \cup P$ specifies a non-zero quota $q(u)$ denoting the maximum number of elements from the other
set it can get matched to. 
Finally, every vertex $u \in A \cup P$ can  specify a classification over its set of neighbors $N(u)$ in $G$. 
A classification $\mathcal{C}_u$ is a family of subsets (referred to as {\em classes} here onwards) of $N(u)$.
Each class $C_u^i\in \mathcal{C}_u$ has an associated quota $q(C_u^i)$ denoting the maximum number of elements
from $C_u^i$ that can be assigned to $u$ in any matching.
\begin{definition} \label{defn:feasible} A matching $M$ is a subset of $E$ and $M(u)$ is the set of all neighbors of $u$ in $M$. An assignment or a matching $M$ in $G$ is said to be {\em feasible} if, for every vertex $u$, the following conditions hold:
\begin{itemize}
\item $|M(u)| \le q(u)$ and
\item for every $C_u^i \in \mathcal{C}_u$, we have $|M(u) \cap C_u^i| \le q(C_u^i)$.
\end{itemize}
\end{definition}
We refer to this setting as the {\em many-to-many setting}, since each vertex can have multiple partners in $M$. A special case is the {\em many-to-one setting}, 
where each applicant can be matched to at most one post, and a post can have multiple applicants matched to it.

Classifications arise naturally in matching problems. While allotting courses to students, a student does not want to be allotted too many courses on closely related topics.
Also, an instructor may not want a course to have too many students from the same department. Another example is allotting tasks to employees, where employees prefer not to
be working on many tasks of similar nature, and for any task, it is wasteful to have too many employees with the same skill-set. These constraints are readily modeled using classifications.

A natural question is to find a feasible matching that is {\em optimal} with respect to the preferences of the \agtse.
In this paper, we consider two well-studied notions of optimality namely {\em rank-maximality} and {\em popularity}. In rank-maximality, the goal is to compute a 
feasible matching in $G$ that has maximum number of rank-1 edges,
subject to this, maximum number of rank-2 edges and so on.
We call such a matching as a {\em Classified Rank-Maximal Matching} (\CRMM). 
The concept of {\em signature}, defined below, is useful to compare two matchings with respect to rank-maximality.
\begin{definition}
The {\em signature} $\sigma_M$ of a matching $M$ is an $r$-tuple $(x_1,\ldots,x_r)$ where $r$ denotes
the largest rank used by an \agt to rank any \prgte. For $1 \le k \le r$,
 $x_k$ denotes the number of rank $k$ edges in $M$.
\end{definition}
\noindent {Let $\sigma_M = (x_1, \ldots , x_r)$ and $\sigma_{M'} = (x'_1, \ldots , x'_r)$.  We say $M \succ M'$ if $x_i = x'_i$ for $1 \leq i < k$ and $x_k > x'_k$, for some $k$.} 
A matching $M$ is said to be {\em rank-maximal} if there does not exist any matching $M'$ in $G$ such that $M' \succ M$.
Thus, our goal is to compute a matching that is rank-maximal among all feasible matchings. We refer to this problem
as the \CRMM\ problem.

In the many-to-one setting, we consider the notion of popularity, which involves comparison of two matchings through the votes
of the applicants.
Given two feasible matchings $M,M'$, an applicant votes
for $M$ if and only if he prefers $M(a)$ over $M'(a)$, and applicants prefer being matched to one of their neighbors over remaining unmatched. 
\begin{definition}
The matching $M$ is {\em more popular than} $M'$ if the number of votes that $M$ gets w.r.t. $M'$ is more than the number of votes that $M'$ gets w.r.t. $M$.  A matching $M$ is 
said to be {\em popular} if there is no matching more popular than $M$. 
\end{definition}
We consider the problem of computing a  popular matching in the presence of classifications, where 
each applicant can be matched to at most one post, and posts have classifications and quotas. Unlike rank-maximal matchings, a popular matching need not exist (see \cite{AIKM07} for a simple instance), since the relation {\em more popular than} is not transitive. 
Our goal therefore is to characterize instances that admit a popular matching and output one if it exists. We call this  the \CPM\ problem.
Note that when a popular matching exists, no majority of applicants can force a migration to another matching; this makes popularity an appealing notion of optimality.

Figure~\ref{fig:example1} shows an example instance where
$A = \{a_1, \ldots, a_5\}$ and $P = \{p_1,\ldots, p_5\}$.
The preferences of the applicants, and the classifications and quotas can be read from the figure. The matching $M~=~\{(a_1, p_4), (a_2, p_1),$ \\$ (a_3, p_3), (a_4, p_5), (a_5, p_2)\}$ is a feasible matching with signature $(3,2)$. The matching $M' = \{(a_1, p_1), (a_2, p_1),$\\$ (a_3, p_3), (a_4, p_5), (a_5, p_2)\}$ has signature $(4, 1)$ but is infeasible because of the classification $C_{p_1}^1$. We will show that the matching is $M$ is both \CRMM\ and \CPM\ in the instance.
\begin{figure}[ht]
\vspace{-0.2in}
\begin{minipage}{0.4\linewidth}
\begin{eqnarray*}
a_1 & : & p_1, p_4 \\
a_2 & : & p_1, p_5\\
a_3 & : & (p_1, p_2, p_3) \\
a_4 & : & p_5, p_1\\
a_5 & : & p_5, p_2
\end{eqnarray*}
{\bf Applicant  Preferences}
\end{minipage}
\begin{minipage}{0.59\linewidth}
\begin{eqnarray*}
\mathcal{C}_{p_1} & = & \{C_{p_1}^1=\{a_1, a_2, a_3\},  \ \ C_{p_1}^2=\{a_4\}\}\\
q(p_1) &=& 2; \ \   q(C_{p_1}^{1})  \ \ =  \ \  q(C_{p_1}^2)  \ \ = \ \ 1 \\
q(p_i) &=& 1 \ \ \ \ {\mbox for}  \ \ \ \ \ i = 2, \ldots, 5 \\
q(a_i) &=& 1   \ \ \ \ {\mbox for}  \ \ \ \ \ i = 1, \ldots, 5 \\
\vspace{-0.008in}
\end{eqnarray*}
\begin{center}{\bf Classifications and Quotas}
\end{center}
\end{minipage}
\caption{Preferences to be read as: $a_1$ treats $p_1$ as rank-$1$ post and $p_2$ as rank-2 post and so on. Applicant $a_3$ treats $p_1, p_2, p_3$ as its rank-1 posts. Although $q(p_1) = 2$, the class $C_{p_1}^1 \in \mathcal{C}_{p_1}$ implies
that  in any feasible matching post $p_1$ can be matched to at most one applicant from $\{a_1, a_2, a_3\}$. }
\label{fig:example1}
\end{figure}

Matchings in the presence of preferences and  classifications have been studied in the setting 
where both sides of the bipartition have preferences over the other side. {\em Stability}~\cite{GS62} is a widely
accepted notion of optimality in this setting. Huang \cite{Huang10} 
considered the stable matching problem in the many-to-one case, where one side of the bipartition has classifications. This was later extended to the many-to-many setting where both
sides have classifications \cite{FK12}.
We remark that the setting in \cite{Huang10} and \cite{FK12} involves both upper and lower quotas on vertices and classes, whereas
our setting has only upper quotas. 
However, this problem has not been studied in the case where only one side of the bipartition expresses preferences. 

%

In the stable matching case, existence of a stable matching respecting the classifications can be determined in polynomial-time
if the classes specified by each vertex form a {\em laminar} family
\cite{Huang10,FK12}, and otherwise the problem is \NP-complete \cite{Huang10}.
In our setting, the preferences
being only on one side and the optimality criteria being rank-maximality or popularity
are very different from the stable matching setting. Yet we show similar results as those of \cite{Huang10} and \cite{FK12}.
 A family $\mathcal{F}$ of subsets of a set $S$ 
is said to be {\em laminar} if, for every pair of sets $X,Y\in \mathcal{F}$, either $X\subseteq Y$ or $Y\subseteq X$ or $X\cap Y=\emptyset$.
Laminar classifications are natural in settings like student allocation to schools where schools may want at most a certain number of students
from a particular region, district, state, country and so on.
Laminar classification includes the special case of {\em partition}, where the classes are required to be disjoint.
This is a very natural classification arising in many real-world applications.

\subsection{Our Contribution}
We show the following new results in this paper. Let $G = (A \cup P, E)$ denote an instance of the \CRMM\ problem or the \CPM\ problem.
\begin{theorem}\label{thm:poly}
There is an $O(|E|^2)$-time algorithm for the \CRMM\ problem when the classification for every vertex is a laminar family.
\end{theorem}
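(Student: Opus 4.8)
The plan is to solve the \CRMM\ problem by reducing it to a sequence of maximum-flow computations, thereby lifting the classical rank-maximal matching algorithm of Irving, Kavitha, Mehlhorn, Michail and Paluch from bipartite matchings to flows. The first ingredient is an encoding of \emph{feasible} matchings (Definition~\ref{defn:feasible}) as integral flows in a suitable network $\mathcal{N}$, so that a maximum-size feasible matching using a prescribed set of edges corresponds to an integral max-flow. Given such an encoding, rank-maximality is obtained by the usual rank-by-rank scheme: compute a max-flow using only rank-$1$ edges; then repeatedly add the next rank class of edges and re-augment, while forbidding any augmentation that would decrease the number of lower-rank edges already committed. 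The technical heart is to show that this ``forbidding'' step can be carried out combinatorially, using a three-way decomposition of the network that is invariant over all max-flows, exactly as the Dulmage--Mendelsohn decomposition is used in the uncapacitated case.

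First I would build the flow network. For each vertex $u\in A\cup P$ I exploit the fact that $\mathcal{C}_u$ is laminar: adding $N(u)$ as the top set, the classes of $\mathcal{C}_u$ ordered by inclusion form a rooted tree $T_u$ whose root is $u$ (capacity $q(u)$) and whose internal nodes are the classes $C_u^i$ (capacity $q(C_u^i)$). Each edge $(a,p)\in E$ is then routed into $T_a$ at the minimal class of $a$ containing $p$ and into $T_p$ at the minimal class of $p$ containing $a$; flow travels from the source $s$ through $T_a$, across the edge $(a,p)$, up through $T_p$ to the sink $t$, with every tree arc carrying the quota of the corresponding class as its capacity. Because the capacity constraints induced by a laminar family are exactly the nested constraints realized by such a tree, integral flows in $\mathcal{N}$ are in bijection with feasible matchings using the edges present in $\mathcal{N}$, and the value of the flow equals the size of the matching. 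Since a laminar family over $N(u)$ has at most $2|N(u)|-1$ members, summing over all $u$ gives a network with $O(|E|)$ nodes and $O(|E|)$ arcs.

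The iterative procedure then maintains, after processing ranks $1,\dots,i$, a pruned network $\mathcal{N}_i$ together with a max-flow $f_i$ realizing the optimal signature on the first $i$ ranks. Having computed $f_i$, I would decompose the nodes of $\mathcal{N}_i$ into three sets according to reachability in the residual graph of $f_i$ from the under-saturated source/sink side --- the flow analogue of the even, odd and unreachable sets. The key structural lemma is that this partition is \emph{independent of the chosen max-flow $f_i$}: by max-flow/min-cut, an arc carries flow in no max-flow, respectively a node is saturated to capacity in every max-flow, precisely when this is forced by the decomposition. Using this invariance I would apply two pruning rules: delete arcs that carry flow in no max-flow, and delete the edges of rank greater than $i$ incident to vertices and saturated classes that are matched to capacity in every max-flow of $\mathcal{N}_i$. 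I then add the surviving rank-$(i+1)$ edges to form $\mathcal{N}_{i+1}$ and augment $f_i$ to a max-flow $f_{i+1}$; the pruning guarantees that augmentation never cancels flow committed to lower ranks, so $f_{i+1}$ realizes the optimal signature on the first $i+1$ ranks.

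The main obstacle I anticipate is establishing the correctness of the pruning rules in the classified, capacitated setting: I must prove that restricting to the $\mathcal{E}$-type nodes and arcs preserves every feasible matching of the correct partial signature (so that nothing optimal is lost), while preventing any higher-rank edge from displacing a lower-rank one (so that signature optimality is maintained rank-by-rank), and that all of this remains consistent with the laminar quota constraints encoded in the trees $T_u$. This is where the flow-invariance of the three-way decomposition does the work that the Dulmage--Mendelsohn theorem does classically. For the running time, each augmentation increases the integral flow by at least one and the final flow value is at most $|E|$ (a matching is a subset of $E$), so the total number of augmentations across all ranks is $O(|E|)$, each costing $O(|E|)$ in the $O(|E|)$-size network; computing each decomposition is a single $O(|E|)$ graph search, and there are at most $r\le|E|$ ranks. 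Together these give the claimed $O(|E|^2)$ bound of Theorem~\ref{thm:poly}.
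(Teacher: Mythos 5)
Your plan follows the paper's approach essentially step for step --- the same laminar-tree flow network, the same rank-by-rank scheme, the invariant three-way decomposition $S/T/U$ of the residual graph, the same deletion of higher-rank edges at classes that are full in every max-flow, and the same running-time accounting --- but your pruning rules omit the one deletion that actually protects the signature. You delete only (i) arcs carrying flow in no max-flow (these are the reverse arcs of the min-cut $(S_k, T_k\cup U_k)$, which by Lemma~\ref{lem:for-rev-flow} carry zero flow in every max-flow) and (ii) higher-rank edges incident to saturated vertices and classes. The paper's Step~\ref{step:EE-EU} instead deletes \emph{all} residual arcs from $T_k\cup U_k$ to $S_k$, and these come in two kinds: the zero-flow reverse cut arcs (your rule (i)), and the \emph{backward residual arcs of forward cut arcs}, i.e., of arcs that are saturated in every max-flow. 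This second kind is what forbids later augmentations from pushing flow back across the cut, and it is implied by neither of your rules: a forward cut arc carries flow in \emph{every} max-flow, so rule (i) never touches it, and its residual reversal is not an edge of rank greater than $k$, so rule (ii) never touches it.

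The omission is fatal, and the paper's own running example shows why. In $H_1$ the tree arc $(C_{p_1}^1, C_{p_1}^*)$ is a forward cut arc ($C_{p_1}^1\in S_1$, $C_{p_1}^*\in T_1$), saturated by the rank-1 flow through $(a_1,p_1)$. If its residual reversal $(C_{p_1}^*, C_{p_1}^1)$ is not deleted, then after the rank-2 edges are added the network contains the augmenting path $\rho_1$ displayed in Section~\ref{sec:pseudocode}: it matches $a_5$ to $p_5$, unmatches $a_4$ from $p_5$ and rematches it to $p_1$ through $C_{p_1}^2$ via a rank-2 edge, crosses backward through $(C_{p_1}^*, C_{p_1}^1)$, unmatches $a_1$ from $p_1$, and rematches $a_1$ to $p_4$ via a rank-2 edge. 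Augmenting along $\rho_1$ raises the flow value from $3$ to $4$ but drops the rank-1 count from $3$ to $2$, so the max-flow your procedure computes in $\mathcal{N}_2$ is not rank-maximal; note that both rank-2 edges on $\rho_1$ survive your rule (ii), exactly as they survive the paper's Step~\ref{step:del-higher-rank}. Hence your assertion that ``the pruning guarantees that augmentation never cancels flow committed to lower ranks'' does not follow from the rules you stated. One further caution: even after adding the missing deletion, that assertion is stronger than what is true or needed. Later augmenting paths \emph{can} cancel individual flow-carrying lower-rank edges (they may traverse the reversed $R$-$L$ arcs inside a subtree whose link to its parent was cut); what the deletions guarantee, and what the paper proves as Lemma~\ref{lem:signature}, is that the \emph{number} of flow-carrying edges of each rank at most $k$ is invariant in all subsequent iterations. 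This count invariance, rather than edge-by-edge preservation of the flow, is the induction hypothesis your argument needs.
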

We also show the above result for the \CPM\ problem in the many-to-one setting.
\begin{theorem}\label{thm:pop-poly}
There is an $O(|A||E|)$-time algorithm for the \CPM\ problem when the classification for every post is a laminar family.
\end{theorem}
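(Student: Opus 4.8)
The plan is to adapt the classical characterization of popular matchings in (capacitated) house-allocation instances to the classified setting, using laminarity to encode all the class quotas inside a single flow network. First I would, for each applicant $a$ (recall $q(a)=1$), compute the set $f(a)$ of its rank-$1$ posts (its \emph{first posts}) and the set $s(a)$ of its most-preferred posts that are not a first post of \emph{any} applicant (its \emph{second posts}); appending a private last-resort post to each applicant's list guarantees $s(a)\neq\emptyset$ and lets ``unmatched'' be treated uniformly as being matched to the last resort. Call a post an \emph{f-post} if it is a first post of some applicant. As in the unclassified case, I expect the relevant characterization to be: a feasible matching $M$ is popular iff (i) every applicant is matched to a post in $f(a)\cup s(a)$, and (ii) $M$ routes as many first-choice applicants into the f-posts as possible, i.e.\ no applicant matched to a second post can be \emph{promoted} to a first post while keeping the matching feasible. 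The only genuine change from the classical statement is that ``as many first-choice applicants as possible'' must now be read subject to both the post quotas \emph{and} the class quotas.

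The key step is to realize condition (ii), and feasibility itself, as a max-flow condition, and here I would exploit laminarity directly. Since $\mathcal{C}_p$ is a laminar family over $N(p)$, its classes, together with the full set $N(p)$ carrying quota $q(p)$, form a rooted tree. I would build a flow network $N$ with a source $s$ feeding each applicant with capacity $1$, an edge from $a$ to each post in $f(a)\cup s(a)$, and, for each post $p$, a gadget in which the edge $(a,p)$ enters the class-tree of $p$ at the smallest class of $\mathcal{C}_p$ containing $a$ and flow then travels up the tree to the sink $t$, every tree edge carrying capacity equal to the corresponding class (or post) quota. Because a laminar family is exactly a tree, integral $s$--$t$ flows in $N$ are in bijection with feasible matchings in $G'=(A\cup P,\{(a,p): p\in f(a)\cup s(a)\})$; this is precisely the step that fails for non-laminar families and underlies the companion hardness result. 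A popular matching then corresponds to an integral max-flow that additionally saturates the first-choice edges into the f-posts, which I would obtain in two phases: first compute a max-flow using only first-choice edges (saturating f-posts as far as the quotas allow), then augment this flow to route every remaining applicant through second-choice edges without disturbing the flow already placed on first-choice edges.

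To verify popularity and read off the answer I would use the flow-invariant three-way vertex decomposition stated in the abstract: with respect to any max-flow, the vertices split into those reachable from $s$ in the residual network, those that can reach $t$, and the rest, and this partition is independent of the chosen flow. Applied to the first-phase network, this decomposition identifies exactly which f-posts and classes are ``over-demanded'' and hence must be fully used by first-choice applicants in \emph{every} popular matching, yielding a clean feasibility test: a popular matching exists iff, after forcing these saturations, the second-phase augmentation matches every applicant. If it does, the resulting integral flow decodes into a \CPM; otherwise none exists. The running time is dominated by the flow computations: $N$ has $O(|E|)$ edges (a laminar family on $N(p)$ has $O(|N(p)|)$ classes), the max-flow value is $O(|A|)$, and each of the $O(|A|)$ augmentations costs $O(|E|)$, for a total of $O(|A||E|)$.

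The main obstacle I anticipate is proving the popularity characterization itself in the presence of class quotas. The classical ``if'' and ``only if'' directions rest on vote-counting along alternating paths: to show a candidate matching is popular one argues that any more-popular matching induces an augmenting structure, and conversely. In the classified setting every matching appearing in such a comparison must remain \emph{feasible}, so a promotion that improves one applicant's vote may be blocked by a class quota. I would therefore need to show that laminarity guarantees that the alternating/augmenting paths witnessing improvements can always be rerouted to respect the class quotas---equivalently, that the flow decomposition correctly captures which promotions are \emph{simultaneously} feasible---so that the clean two-condition characterization survives. Establishing this feasibility-preserving exchange argument, rather than the flow bookkeeping, is where the real work lies.
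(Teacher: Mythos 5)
Your overall architecture (a rank-$1$ max-flow phase followed by an $s$-post phase, on a network whose post gadgets are the laminar class trees) matches the paper's, but your definition of $s(a)$ is the one from the one-to-one, strict-preference setting of Abraham et al., and it is wrong as soon as posts have capacities --- before classifications even enter the picture. In a capacitated instance, a post that is some applicant's first choice can still have residual room after all first-choice demand is routed, and it must then be available as an $s$-post. Concretely: let $a_1,a_2$ both have preference list $p_1,p_2$, let $a_3$ have list $p_2$, with $q(p_1)=1$, $q(p_2)=2$, and no classes. Under your definition, $p_2$ is a first post (of $a_3$), so $s(a_1)=s(a_2)=\{\ell\}$ (last resort), and your algorithm outputs something like $M'=\{(a_1,p_1),(a_3,p_2),(a_2,\ell_{a_2})\}$. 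But $M=\{(a_1,p_1),(a_2,p_2),(a_3,p_2)\}$ is feasible and more popular than $M'$ ($a_2$ strictly prefers it, everyone else is indifferent), so $M'$ is not popular, your candidate characterization (i)+(ii) is false as stated, and your algorithm returns a wrong answer on an instance that does admit a popular matching. The fix is exactly what the paper does: $s(a)$ cannot be defined combinatorially up front; it must be read off the decomposition of the first-phase flow, as the most preferred posts $p$ whose leaf node $C_p^a$ lies in $T_1$, i.e.\ can reach $t$ in the residual network of the rank-$1$ max-flow. You do invoke the invariant $S/T/U$ decomposition, but only to detect ``over-demanded'' posts, not inside the definition of $s(a)$, which is where it is actually needed --- the right notion is the capacitated/classified analogue of the ``even'' posts of Manlove and Sng, not the non-$f$-posts of Abraham et al.

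A second, independent problem is your phase two, which insists on augmenting ``without disturbing the flow already placed on first-choice edges.'' Since preference lists may contain ties, this freezing also fails: with $b$ having tied first choices $p_1,p_2$, applicant $a$ having list $p_3$ then $p_1$, and $c$ having list $p_3$ (all quotas $1$), the popular outcome is $\{(a,p_1),(b,p_2),(c,p_3)\}$, which requires rerouting $b$'s rank-$1$ flow from $p_1$ to $p_2$ during the second phase. The paper handles this by deleting the edges of the form $(T_1\cup U_1,S_1)$ from the residual network and then running an unrestricted max-flow on $H_2$: augmenting paths may traverse residual edges and reroute rank-$1$ flow, and the deletions are precisely what guarantees that the number of rank-$1$ edges never degrades. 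Finally, you explicitly defer the proof of the characterization (``where the real work lies''); that proof, with the corrected $s(a)$, is the bulk of the paper's argument (its lemmas that $M\cap E_1$ must correspond to a max-flow in $H_1$, that $M(a)$ is never strictly between $f(a)$ and $s(a)$ nor strictly worse than $s(a)$, and the sufficiency direction via a vote-compensating exploration). So the proposal as it stands has both an incorrect key definition and an acknowledged missing proof.
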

We complement the above results with a matching hardness result:
\begin{theorem}\label{thm:crmm-hard}
The \CRMM\ and \CPM\ problems are \NP-hard when the classes are non-laminar even when all the preferences are strict, 
and classifications exist on only one side of the bipartition.
\end{theorem}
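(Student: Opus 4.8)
The plan is to prove \NP-hardness by a polynomial-time reduction from \monEsat, which is \NP-complete. An instance of \monEsat\ has variables $x_1,\dots,x_n$ and clauses $c_1,\dots,c_m$, each clause a set of three positive variables, and asks for a truth assignment making \emph{exactly one} variable true in every clause. The observation that drives the reduction is that, for a clause on three variables, ``exactly one true'' is equivalent to the conjunction of two \emph{upper}-quota conditions, ``at most one true'' and ``at most two false'': if all three variables are assigned, at most one true together with at most two false forces exactly one true. Upper quotas are precisely what classes express, so I would encode each clause by two overlapping classes, and it is this overlap (a variable lying in several clauses) that makes each family non-laminar.

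Concretely, for the \CRMM\ instance I would introduce two global posts $T$ and $F$ with quota $n$, and for each variable $x_j$ a private post $P_j$ of quota $1$ together with two applicants $u_j,u_j'$, each of quota one. The strict lists are $u_j : P_j, T$ and $u_j' : P_j, F$. Since $P_j$ has quota one, exactly one of $u_j,u_j'$ occupies it at rank $1$ and the other is pushed to its rank-$2$ global post; placing $u_j$ on $P_j$ (so $u_j'\to F$) encodes $x_j=\textrm{false}$, while placing $u_j'$ on $P_j$ (so $u_j\to T$) encodes $x_j=\textrm{true}$. In either case the gadget contributes exactly one rank-$1$ and one rank-$2$ edge, so the truth value is invisible to the signature and shows up only through class membership. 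On $T$ I place, for each clause $c_i$, the class $C_i^T=\{u_j : x_j\in c_i\}$ with quota $1$, and on $F$ the class $C_i^F=\{u_j' : x_j\in c_i\}$ with quota $2$. Two classes whose clauses share a variable overlap without nesting, so each classification is non-laminar; the lists are strict and only posts carry classifications, matching the hypotheses of the theorem.

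I would then argue that the rank-maximal signature equals $(n,n)$ if and only if the \monEsat\ instance is satisfiable. The maximum number of rank-$1$ edges is $n$, one per post $P_j$, and this is always attainable, so every feasible rank-maximal matching uses exactly $n$ rank-$1$ edges; choosing, for each $j$, which of $u_j,u_j'$ sits on $P_j$ is precisely a choice of a full truth assignment, after which the matching maximizes the number of ``other'' applicants routed to their global posts, i.e.\ the rank-$2$ count. That count attains its maximum $n$ exactly when every other applicant can be placed, which happens iff there is a complete assignment obeying $|C_i^T|\le 1$ and $|C_i^F|\le 2$ for all $i$; by the observation above this is exactly a \monEsat\ solution. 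Hence reading the signature of any \CRMM\ decides \monEsat, establishing \NP-hardness of \CRMM.

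For the \CPM\ problem I would reuse the same variable gadget and the same pair of clause-classes on $T$ and $F$, but attach to each gadget a short auxiliary ``forcing'' sub-gadget of dummy applicants and posts whose purpose is to guarantee that, in any popular matching, both $u_j$ and $u_j'$ are matched and exactly one occupies $P_j$, so that the class quotas again impose the at-most-one-true/at-most-two-false conditions and a popular matching encodes a consistent assignment. The goal is to arrange that a feasible popular matching exists precisely when the formula is satisfiable. I expect this to be the main obstacle: unlike \CRMM, a popular matching need not exist, so I must design the dummies so that unsatisfiability forces the \emph{absence} of any popular matching (every feasible matching being beaten by a majority), while a satisfying assignment yields one. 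Checking popularity would rely on the flow-based, LP-duality characterization underlying Theorem~\ref{thm:pop-poly}, certifying optimality by a fractional witness, and the bulk of the effort is verifying that the forcing sub-gadget neither creates spurious popular matchings nor destroys the intended one; the polynomial size of the reduction is then routine.
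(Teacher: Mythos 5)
Your \CRMM\ reduction is correct, and it is genuinely different from (and more economical than) the paper's. The paper also reduces from \monEsat, but its gadget is per-occurrence: applicants $a_{ij},b_{ij}$ for every occurrence of $x_i$ in $C_j$, per-variable posts $p_i,p_i^t,p_i^f$ and per-clause posts $p_j$, with the ``at most one true / at most two false'' classes placed on each clause post and additional coupling classes $\{a_{ij},a_i\}$, $\{b_{ij},b_i\}$ on the $p_i^t,p_i^f$ posts; the target signature is $(3m+n,3m+n)$. You compress all clause constraints onto two global posts $T,F$ of quota $n$, use the private posts $P_j$ to force a complete assignment, and target $(n,n)$. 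Your correctness argument is sound: the maximum rank-$1$ count is $n$ and is attained exactly when one of $u_j,u_j'$ sits on each $P_j$, and then the rank-$2$ count reaches $n$ iff every clause has at most one true and at most two false variables, i.e.\ exactly one true. One small difference: the paper's classes pairwise intersect in at most one applicant (supporting the remark following the theorem), whereas your clause classes $C_i^T, C_{i'}^T$ can intersect in two applicants if two clauses share two variables; this does not affect the theorem as stated.

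The \CPM\ half, however, is a genuine gap: you never construct the ``forcing'' sub-gadget, and you defer the whole existence/non-existence argument to it, explicitly acknowledging that this is the unresolved core. The missing idea -- which is exactly how the paper handles \CPM\ -- is that no extra gadget is needed: the \emph{same} instance works, via the characterization of popular matchings (maximum number of rank-$1$ edges, and every applicant matched to an $f$- or $s$-post). In your instance $f(u_j)=\{P_j\}$, $s(u_j)=\{T\}$ (and $s(u_j')=\{F\}$), so a popular matching exists iff some feasible matching fills every $P_j$ and matches all $2n$ applicants within $f\cup s$, i.e.\ iff signature $(n,n)$ is achievable, i.e.\ iff $\phi$ is $1$-in-$3$ satisfiable. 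Both directions are short in your instance: if such a matching $M$ exists, any applicant who improves in some $M'$ moves from $T$ or $F$ onto its $P_j$, displacing its pair-mate, who votes against, so $M$ cannot be beaten; conversely, a feasible matching leaving some $u_j$ unmatched is beaten by the three-way exchange that puts $u_j$ on $T$ in place of some $u_k$ occupying a full class $C_i^T$, returns $u_k$ to its rank-$1$ post $P_k$, and evicts $u_k'$ -- two votes for, one against. As written, your proposal proves \NP-hardness only for \CRMM; the \CPM\ claim rests on an unspecified construction and an appeal to machinery (the flow/LP characterization for \emph{laminar} families) whose applicability to your non-laminar instance is precisely what would need to be argued.
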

The hardness holds even when the intersection of the classes in a family is at most one, and the preference lists have length at most $2$.
Even when there are no ranks on edges,
the problem of simply finding a maximum cardinality matching respecting the classifications is \NP-hard if the classes are non-laminar.
\begin{theorem}\label{thm:max-hard}
The problem of finding a maximum cardinality matching is \NP-hard in the presence of non-laminar classifications.
\end{theorem}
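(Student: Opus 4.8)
The plan is to give a polynomial-time reduction from the \textsc{Independent Set} problem, which is \NP-complete: given a graph $H=(V,E_H)$ and an integer $k$, decide whether $H$ has an independent set of size at least $k$. Since the target problem has no ranks, I only need to build an instance whose \emph{maximum-cardinality feasible matching} encodes a maximum independent set of $H$. First I would set $A=V$ and let $P$ consist of a single post $p$, with an edge $(v,p)$ for every $v\in V$. I set $q(v)=1$ for all applicants $v$ and $q(p)=|V|$, so the quota of $p$ never binds. The classification of $p$ is where the instance gets its power: for each edge $e=\{u,w\}\in E_H$ I introduce a class $C_p^{e}=\{u,w\}$ with quota $q(C_p^{e})=1$. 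Applicants carry no classification, so all classes lie on one side of the bipartition, as required by the statement.

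Next I would establish the correspondence between feasible matchings and independent sets. Since $p$ is the only neighbour of each applicant and $q(v)=1$, any matching $M$ simply selects a subset $S=M(p)\subseteq V$ of applicants to match to $p$. Feasibility with respect to the class $C_p^{e}=\{u,w\}$ says $|M(p)\cap\{u,w\}|\le 1$, i.e., $S$ contains at most one endpoint of each edge of $H$; the quota $q(p)=|V|$ is automatically satisfied. Hence the feasible matchings of $G$ are in bijection with the independent sets of $H$, with $|M|=|S|$, and a maximum-cardinality feasible matching has size exactly the independence number of $H$. Thus deciding whether $G$ admits a feasible matching of size at least $k$ is equivalent to deciding whether $H$ has an independent set of size at least $k$, and the construction is clearly polynomial.

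Finally I would verify that the classification produced is genuinely non-laminar, so that the result lies outside the tractable laminar regime of Theorem~\ref{thm:poly}. Two classes $C_p^{e}$ and $C_p^{e'}$ arising from edges $e,e'$ that share exactly one vertex overlap in a single element while neither contains the other, which violates laminarity; any graph $H$ with two incident edges yields such a pair, and trivial instances without incident edges are matchings on which \textsc{Independent Set} is easy. I expect the construction and the bijection to be routine; the only point deserving care is ensuring that the \emph{only} binding constraints are the edge-classes, which I guarantee by choosing $q(p)$ large and giving each applicant the single neighbour $p$. As a by-product the classes here have pairwise intersection at most one and the posts carry no preferences at all, which shows that non-laminarity alone, with no help from ranks or ties, already makes the basic feasibility question intractable.
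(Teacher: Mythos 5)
Your proof is correct, but it takes a genuinely different route from the paper. The paper proves Theorem~\ref{thm:max-hard} as a by-product of its \monEsat\ reduction: it reuses the exact instance $G$ built for the \CRMM/\CPM\ hardness (with the clause posts $p_j$, variable posts $p_i$, $p^t_i$, $p^f_i$, and their classes), strips the ranks, and observes that a maximum cardinality feasible matching has size $|A|$ if and only if $\phi$ has a valid $1$-in-$3$ assignment. You instead reduce directly from \textsc{Independent Set}: a single post $p$ with $q(p)=|V|$, one applicant per vertex of $H$, and a quota-$1$ class $C_p^{e}=\{u,w\}$ per edge $e=\{u,w\}$, so that feasible matchings are exactly the independent sets of $H$. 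Your argument is sound — the bijection is immediate, the reduction is polynomial, and you correctly handle the degenerate case (graphs with no two incident edges are laminar but have trivial independent sets, so hardness survives restricted to non-laminar instances). What each approach buys: the paper's construction is heavier but unified — one gadget simultaneously yields Theorems~\ref{thm:crmm-hard} and~\ref{thm:max-hard}, including hardness with strict preference lists of length two for the preference-based problems. Your reduction is considerably simpler and isolates the message of Theorem~\ref{thm:max-hard} more sharply: even with a single post, no preferences, and all classes of size two (so pairwise intersections of size at most one), non-laminarity alone makes bare feasibility-maximization intractable; it also inherits APX-hardness/inapproximability from \textsc{Independent Set}, which the paper's reduction does not immediately give.
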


\noindent {\bf Related work:}
Irving introduced the rank-maximal matchings  problem as ``greedy matchings" in \cite{greedyI03} for the one-to-one case of strict preferences.
Irving~et~al.~\cite{IKMMP06} generalized the same to preference lists with ties allowed and this was further generalized by Paluch~\cite{Paluch13} for 
the many-to-many setting.
Abraham~et~al.~\cite{AIKM07} initiated the study of Popular Matchings problem in the one-to-one setting and subsequently there have been several results~\cite{MS06,M08,HKMN08} on generalization of this model.
In all the above results where the model is without classifications, the algorithms for computing a rank-maximal matching \cite{IKMMP06,Paluch13} and for computing popular matching in \cite{AIKM07,MS06,M08,HKMN08} have the following template: 
The algorithms are iterative, where iteration $k$ involves the instance restricted to edges of rank at most $k$. All of the above results make crucial use
of the 
well-known Dulmage-Mendelsohn decomposition w.r.t. maximum matchings in bipartite graphs.
The main use of the decomposition theorem in all the literature mentioned above is to identify edges that can not belong to any optimal matching.
Such edges are deleted in each iteration, resulting in a {\em reduced graph}, such that every maximum matching in the reduced graph is an optimal matching in the given instance.

\noindent{\bf Our technique: }
In our setting, we have quotas as well as classifications. Hence 
a feasible matching need not be a maximum matching even in the reduced graph and therefore the Dulmage-Mendelsohn decomposition~\cite{DulmageM58} can not be
used as in \cite{IKMMP06,Paluch13}. 
To solve the \CRMM\ and \CPM\ problems, we present a framework that involves computing max-flows in multiple flow networks.  While the use
of flow network is a natural choice for laminar classifications, it still leaves us with the challenge of identifying the set of
unnecessary edges. We address this by using the fact that, in {\em any} flow network, w.r.t. {\em any} max-flow the vertices can be decomposed 
into three disjoint sets and this decomposition is {\em invariant } of the flow. This simple fact turns out to be surprisingly useful and allows
us to use the forward and reverse edges of a min-cut to identify unnecessary edges. We believe that our technique of flow networks provides a unified
framework for capacitated rank-maximal matchings~\cite{Paluch13} and capacitated house allocation problem~\cite{MS06} and will find further applications
in capacitated matching problems with preferences.
We finally note that the \CRMM\ problem can also be solved using min-cost flows with slightly higher time complexity, but that 
approach involves using exponential weights.
Our algorithm is simple, combinatorial and uses only elementary flow computations and also extends to the \CPM\ problem.




\noindent {\em Organization of the paper:} 
In Section~\ref{sec:algo} we describe our flow network for the laminar \CRMM\ problem and prove  properties of the network.
In Section~\ref{sec:pseudocode} we present our algorithm  and prove its correctness. 
We present the detailed algorithmic results for the \CPM\ problem in Section~\ref{sec:pop}.
In Section~\ref{sec:hardness} we give the hardness for the non-laminar \CRMM\ problem.

\section{Laminar \CRMM}
\label{sec:algo}
In this section, we present the construction of our flow-networks used by the polynomial-time algorithm for the \CRMM\ problem when the classes of each vertex
form a laminar family. Recall that the given instance is a bipartite graph $G=(A\cup P,E)$, along with a preference list
for each $a\in A$, and a laminar classification $\mathcal{C}_u$ for each $u\in A\cup P$.
The algorithm starts by constructing a flow network $H_0$
using the classifications. 
Our algorithm then works in iterations. 
In the $k$-th iteration, we add rank-$k$ edges from $G$
to the flow network $H_{k-1}$ to get a new flow network $H_k$.
We find a max-flow $f_k$ in $H_k$ 
and then identify and delete {\em unnecessary} edges. Throughout the course of the algorithm, we maintain the following invariant: At the end of each iteration $k$,
there is a matching $M_k$ in $G$ corresponding to 
$H_k$, such that the signature of $M_k$ is $(s_1,s_2,\ldots, s_k)$ where $(s_1,\ldots,s_r)$
is the signature of a feasible rank-maximal matching in $G$. 

Algorithm~\ref{algo:main} in Section~\ref{sec:pseudocode} gives a formal pseudocode. We first show the construction and properties of the flow network,
and then give a detailed description and correctness proof for Algorithm \ref{algo:main}.

\subsection{Construction of flow network}
\label{sec:flownw}
We describe the construction of the flow network $H_0$ corresponding to the input bipartite graph $G$ with classifications. 
As mentioned above, the $k$th iteration of the algorithm uses the flow network $H_k$. 
The vertex set of the flow network is the same for each $k$, and hence we refer to the initial flow
network as $H_0 =  (V, \newE_0)$.
We apply the following pre-processing step for every vertex in $G$:

For every $u \in A \cup P$ with classification $C_u$, we add the following classes to $\mathcal{C}_u$.
\begin{itemize}
    \item $C_u^*$: We include a class $C_u^*=N(u)$ into $\mathcal{C}_u$ with capacity $q(C_u^*)$ = $q(u)$.
    \item $C_u^w$: For every $w \in N(u)$ and $u \in A \cup P$, we add a class $C_u^w$ to $\mathcal{C}_u$ with capacity $q(C_u^w)=1$.
\end{itemize}
It is easy to see that this does not
change the set of feasible matchings. In the rest of the paper, we refer to this modified instance as our instance $G$.
\begin{definition}[Classification tree:]
\label{def:classtree}
 Let every vertex $u \in A \cup P$ have a laminar family of classes $\mathcal{C}_u$. Then, the classes in $\mathcal{C}_u$ can be represented as a tree called the {\it classification tree} $\mathcal{T}_u$ with $C_u^*$ being the root of $\mathcal{T}_u$. For two classes $C_u^1,  C_u^2 \in \mathcal{C}_u$, the class $C_u^1$ is a parent of $C_u^2$ in $\mathcal{T}_u$ iff $C_u^1$ is the smallest class in $\mathcal{C}_u$ containing $C_u^2$. Thus for every $w \in N(u)$, the corresponding singleton class $C_u^w$ is a leaf of $\mathcal{T}_u$.
\end{definition}

Through out the paper, we refer to the vertices $V$ of $H_0$ as ``nodes''.
The network $H_0$ has nodes corresponding to every element of $\mathcal{T}_u$ for each
$u \in A \cup P$. In addition to this, there is a source $s$ and a sink $t$. The edges of $H_0$ include an edge from $s$ to the root of $\mathcal{T}_a$ for each $a \in A$, and an  edge from the root of $\mathcal{T}_p$ to $t$, for each $p \in P$.
Each edge of $\mathcal{T}_a$, for each $a \in A$, is directed from parent to child whereas each edge of $\mathcal{T}_p$, $p\in P$ is directed from child to parent in $H_0$. 
This is summarized below:
\begin{eqnarray*}
V = \{s, t\} \cup \{C_u^i \mid C_u^i \in \mathcal{T}_u \mbox { and } u \in A \cup P\}
\end{eqnarray*}
The set of all edges of $H_0$ represented by $\newE_0$ and their capacities are as follows:
\begin{itemize}
\item For every $a \in A$, $\newE_0$ contains an edge $(s, C_a^*)$ with capacity $q(C_a^*)$.
\item For every $p \in P$, $\newE_0$ contains an edge $(C_p^*, t)$ with capacity $q(C_p^*)$.
\item For $a \in A$ and edge $(C_a^1,C_a^2) \in \mathcal{T}_a$ such that $C_a^1$ is the parent of $C_a^2$, $\newE_0$ contains
an edge $(C_a^1,C_a^2)$ with capacity $q(C_a^2)$.
\item For $p \in P$ and edge $(C_p^1,C_p^2) \in \mathcal{T}_p$ such that $C_p^2$ is the parent of $C_p^1$, $\newE_0$ contains
an edge $(C_a^1,C_a^2)$ with capacity $q(C_a^1)$.
\end{itemize}
\tikzset{middlearrow/.style={
        decoration={markings,
            mark= at position 0.5 with {\arrow{#1};} ,
        },
        postaction={decorate}
    }
}
\newcommand{\boundellipse}[3]
{(#1) ellipse (#2 and #3)
}
\begin{figure}[t] 
\begin{minipage}{0.45 \textwidth}
\hspace{-1.3cm}
\scalebox{0.7}{\begin{tikzpicture}[
roundnode/.style={circle, draw=black!100,  inner sep=0pt, minimum size=8pt}, unode/.style={circle, draw=black!100,  inner sep=0pt, minimum size=8pt}, tnode/.style={circle, draw=black!100,  inner sep=0pt, minimum size=8pt},
farrow/.style={-Latex}
]

\node[roundnode]      (a4p1)            [label=above:$C^{p_1}_{a_4}$] {};
\node[tnode]        (a3p2)       [below=7.5mm of a4p1, label=above:$C^{p_2}_{a_3}$] {};
\node[roundnode]        (a5p2)       [below=7.5mm of a3p2, label=above:$C^{p_2}_{a_5}$] {};
\node[tnode]        (a3p3)       [below=7.5mm of a5p2, label=above:$C^{p_3}_{a_3}$] {};
\node[roundnode]        (a4p5)       [below=7.5mm of a3p3, label=above:$C^{p_5}_{a_4}$] {};
\node[roundnode]        (a5p5)       [below=7.5mm of a4p5, label=above:$C^{p_5}_{a_5}$] {};
\node[unode]        (a3p1)       [above=7.5mm of a4p1, label=above:$C^{p_1}_{a_3}$] {};
\node[roundnode]        (a2p1)       [above=15mm of a3p1, label=above:$C^{p_1}_{a_2}$] {};
\node[roundnode]        (a2p5)       [below=6.9mm of a2p1, label=above:$C^{p_5}_{a_2}$] {};
\node[roundnode]        (a1p1)       [above=7.5mm of a2p1, label=above:$C^{p_1}_{a_1}$] {};
\node[roundnode]        (a1p4)       [above=7.5mm of a1p1, label=above:$C^{p_4}_{a_1}$] {};
\node[tnode]        (p4a1)       [right=15mm of a1p4, label=above:$C^{a_1}_{p_4}$] {};
\node[roundnode]        (p1a1)       [right=15mm of a1p1, label=above:$C^{a_1}_{p_1}$] {};
\node[roundnode]        (p1a2)       [right=15mm of a2p1, label=above:$C^{a_2}_{p_1}$] {};
\node[unode]        (p5a2)       [right=15mm of a2p5, label=above:$C^{a_2}_{p_5}$] {};
\node[unode]        (p1a3)       [right=15mm of a3p1, label=above:$C^{a_3}_{p_1}$] {};
\node[tnode]        (p1a4)       [right=15mm of a4p1, label=above:$C^{a_4}_{p_1}$] {};
\node[tnode]        (p2a3)       [right=15mm of a3p2, label=above:$C^{a_3}_{p_2}$] {};
\node[tnode]        (p2a5)       [right=15mm of a5p2, label=above:$C^{a_5}_{p_2}$] {};
\node[tnode]        (p3a3)       [right=15mm of a3p3, label=above:$C^{a_3}_{p_3}$] {};
\node[roundnode]        (p5a4)       [right=15mm of a4p5, label=above:$C^{a_4}_{p_5}$] {};
\node[roundnode]        (p5a5)       [right=15mm of a5p5, label=above:$C^{a_5}_{p_5}$] {};

\node[roundnode]        (a1)       [left=15mm of a1p1, label=above:$C^{*}_{a_1}$] {};
\node[roundnode]        (a2)       [left=15mm of a2p5, label=above:$C^{*}_{a_2}$] {};
\node[tnode]        (a3)       [left=15mm of a4p1, label=above:$C^{*}_{a_3}$] {};
\node[roundnode]        (a4)       [left=15mm of a5p2, label=below:$C^{*}_{a_4}$] {};
\node[roundnode]        (a5)       [left=15mm of a4p5, label=below:$C^{*}_{a_5}$] {};
\node[roundnode]        (s)       [left=15mm of a3, label=left:$s$] {};

\node[tnode]        (p4)       [right=34.5mm of p1a1, label=above:$C^{*}_{p_4}$] {};
\node[roundnode]        (Cp11)       [right=15mm of p1a2, label=above:$C^{1}_{p_1}$] {};
\node[tnode]        (Cp12)       [right=15mm of p1a4, label=above:$C^{2}_{p_1}$] {};
\node[tnode]        (p1)       [right=34.5mm of p1a3, label=above:$C^{*}_{p_1}$] {};
\node[tnode]        (p3)       [right=34.5mm of p3a3, label=below:$C^{*}_{p_3}$] {};
\node[tnode]        (p2)       [above=10.5mm of p3, label=above:$C^{*}_{p_2}$] {};
\node[roundnode]        (p5)       [below=12.5mm of p3, label=below:$C^{*}_{p_5}$] {};
\node[tnode]        (t)       [right=34.5mm of Cp12, label=right:$t$] {};

\draw[black,thick] (0,0.2) ellipse (0.68 cm and 5.8 cm);
\draw[black,thick] (1.8,0.2) ellipse (0.68 cm and 5.8 cm);
\draw (0,-5.65) coordinate[label={below:$L$}] (L);
\draw (2.1,-5.65) coordinate[label={below:$R$}] (R);
\begin{scope}[very thick]
\draw[middlearrow={>}] (s.east) -- (a1.south); 
\draw[middlearrow={>}] (s.east) -- (a2.south);
\draw[middlearrow={>}] (s.east) -- (a3.west); 
\draw[middlearrow={>}] (s.east) -- (a4.north);
\draw[middlearrow={>}] (s.east) -- (a5.north);
\draw[middlearrow={<}] (t.west) -- (p2.east);
\draw[middlearrow={<}] (t.east) -- (p3.north); 
\draw[middlearrow={<}] (t.north) -- (p4.east);
\draw[middlearrow={<}] (t.south) -- (p5.north);
\draw[middlearrow={>}] (a1.east) -- (a1p1.west); 
\draw[middlearrow={>}] (a1.east) -- (a1p4.west); 
\draw[middlearrow={>}] (a2.east) -- (a2p1.west); 
\draw[middlearrow={>}] (a3.east) -- (a3p1.west); 
\draw[middlearrow={>}] (a3.east) -- (a3p2.west); 
\draw[middlearrow={>}] (a3.east) -- (a3p3.west); 
\draw[middlearrow={>}] (a4.north) -- (a4p1.west); 
\draw[middlearrow={>}] (a5.east) -- (a5p5.west); 
\draw[middlearrow={>}] (p4a1.east) -- (p4.west); 
\draw[middlearrow={>}] (p1a1.east) -- (Cp11.west); 
\draw[middlearrow={>}] (Cp11.east) -- (p1.west);
\draw[middlearrow={>}] (p1a2.east) -- (Cp11.west); 
\draw[middlearrow={>}] (p1a3.east) -- (Cp11.west); 
\draw[middlearrow={>}] (p1a4.east) -- (Cp12.west); 
\draw[middlearrow={>}] (Cp12.east) -- (p1.west); 
\draw[middlearrow={>}] (p2a5.east) -- (p2.west); 
\draw[middlearrow={>}] (p3a3.east) -- (p3.west); 
\draw[middlearrow={>}] (p5a4.east) -- (p5.west); 
\draw[middlearrow={>}] (p5a5.east) -- (p5.west); 
\draw[middlearrow={>}]  (p1.east) to node [auto] {$2$} (t.west);
\draw[middlearrow={>}] (a2.east) -- (a2p5.west);

\end{scope}
\begin{scope}[decoration={
    markings, mark=at position 0.45 with {\arrow{>}}}
    ] 
\draw[very thick,postaction={decorate}] (a4.west) -- (a4p5.east); 
\draw[very thick,postaction={decorate}] (a5.east) -- (a5p2.west); 
\draw[very thick,postaction={decorate}]  (p5a2.east)-- (p5.west);
\draw[very thick,postaction={decorate}] (p2a3.east) -- (p2.west); 
\end{scope}

\begin{scope}[very thick, dashed, decoration={
    markings,
    mark=at position 0.5 with {\arrow{>}}}
    ] 

\end{scope}

\end{tikzpicture}
}
\caption{ 
The flow network $H_0$ corresponding to instance in Figure~\ref{fig:example1}. All edges except $(C_{p_1}^*, t)$ have unit capacity. The capacity of $(C_{p_1}^*, t)$ equals $q(p_1) = 2$.}
\label{new1}

\end{minipage}
\begin{minipage}{0.1 \textwidth}
\hspace{0.5cm}
\end{minipage}
\begin{minipage}{0.45 \textwidth}

\vspace{1.2cm}
\scalebox{0.7}{\begin{tikzpicture}[
roundnode/.style={circle, draw=black!100, inner sep=0pt, minimum size=8pt}, unode/.style={circle, draw=red!100, fill = red, inner sep=0pt, minimum size=8pt}, tnode/.style={circle, draw=black!100, fill = black, inner sep=0pt, minimum size=8pt},
farrow/.style={-Latex}
]

\node[roundnode]      (a4p1)            [label=above:$C^{p_1}_{a_4}$] {};
\node[tnode]        (a3p2)       [below=7.5mm of a4p1, label=above:$C^{p_2}_{a_3}$] {};
\node[roundnode]        (a5p2)       [below=7.5mm of a3p2, label=above:$C^{p_2}_{a_5}$] {};
\node[tnode]        (a3p3)       [below=7.5mm of a5p2, label=above:$C^{p_3}_{a_3}$] {};
\node[roundnode]        (a4p5)       [below=7.5mm of a3p3, label=above:$C^{p_5}_{a_4}$] {};
\node[roundnode]        (a5p5)       [below=7.5mm of a4p5, label=above:$C^{p_5}_{a_5}$] {};
\node[unode]        (a3p1)       [above=7.5mm of a4p1, label=above:$C^{p_1}_{a_3}$] {};
\node[roundnode]        (a2p1)       [above=15mm of a3p1, label=above:$C^{p_1}_{a_2}$] {};
\node[roundnode]        (a2p5)       [below=6.9mm of a2p1, label=above:$C^{p_5}_{a_2}$] {};
\node[roundnode]        (a1p1)       [above=7.5mm of a2p1, label=above:$C^{p_1}_{a_1}$] {};
\node[roundnode]        (a1p4)       [above=7.5mm of a1p1, label=above:$C^{p_4}_{a_1}$] {};
\node[tnode]        (p4a1)       [right=15mm of a1p4, label=above:$C^{a_1}_{p_4}$] {};
\node[roundnode]        (p1a1)       [right=15mm of a1p1, label=above:$C^{a_1}_{p_1}$] {};
\node[roundnode]        (p1a2)       [right=15mm of a2p1, label=above:$C^{a_2}_{p_1}$] {};
\node[unode]        (p5a2)       [right=15mm of a2p5, label=above:$C^{a_2}_{p_5}$] {};
\node[unode]        (p1a3)       [right=15mm of a3p1, label=above:$C^{a_3}_{p_1}$] {};
\node[tnode]        (p1a4)       [right=15mm of a4p1, label=above:$C^{a_4}_{p_1}$] {};
\node[tnode]        (p2a3)       [right=15mm of a3p2, label=above:$C^{a_3}_{p_2}$] {};
\node[tnode]        (p2a5)       [right=15mm of a5p2, label=above:$C^{a_5}_{p_2}$] {};
\node[tnode]        (p3a3)       [right=15mm of a3p3, label=above:$C^{a_3}_{p_3}$] {};
\node[roundnode]        (p5a4)       [right=15mm of a4p5, label=above:$C^{a_4}_{p_5}$] {};
\node[roundnode]        (p5a5)       [right=15mm of a5p5, label=above:$C^{a_5}_{p_5}$] {};

\node[roundnode]        (a1)       [left=15mm of a1p1, label=above:$C^{*}_{a_1}$] {};
\node[roundnode]        (a2)       [left=15mm of a2p5, label=above:$C^{*}_{a_2}$] {};
\node[tnode]        (a3)       [left=15mm of a4p1, label=above:$C^{*}_{a_3}$] {};
\node[roundnode]        (a4)       [left=15mm of a5p2, label=below:$C^{*}_{a_4}$] {};
\node[roundnode]        (a5)       [left=15mm of a4p5, label=below:$C^{*}_{a_5}$] {};
\node[roundnode]        (s)       [left=15mm of a3, label=left:$s$] {};

\node[tnode]        (p4)       [right=34mm of p1a1, label=above:$C^{*}_{p_4}$] {};
\node[roundnode]        (Cp11)       [right=15mm of p1a2, label=above:$C^{1}_{p_1}$] {};
\node[tnode]        (Cp12)       [right=15mm of p1a4, label=above:$C^{2}_{p_1}$] {};
\node[tnode]        (p1)       [right=34.5mm of p1a3, label=above:$C^{*}_{p_1}$] {};
\node[tnode]        (p3)       [right=34.5mm of p3a3, label=below:$C^{*}_{p_3}$] {};
\node[tnode]        (p2)       [above=10.5mm of p3, label=above:$C^{*}_{p_2}$] {};
\node[roundnode]        (p5)       [below=12.5mm of p3, label=below:$C^{*}_{p_5}$] {};
\node[tnode]        (t)       [right=34.5mm of Cp12, label=right:$t$] {};

\draw[black,thick] (0,0.2) ellipse (0.68 cm and 5.8 cm);
\draw[black,thick] (1.8,0.2) ellipse (0.68 cm and 5.8 cm);
\draw (0,-5.65) coordinate[label={below:$L$}] (L);
\draw (2.1,-5.65) coordinate[label={below:$R$}] (R);
\begin{scope}[very thick]
\draw[middlearrow={<}] (s.east) -- (a1.south); 
\draw[middlearrow={>}] (s.east) -- (a2.south);
\draw[middlearrow={<}] (s.east) -- (a3.west); 
\draw[middlearrow={<}] (s.east) -- (a4.north);
\draw[middlearrow={>}] (s.east) -- (a5.north);
\draw[middlearrow={>}] (t.west) -- (p2.east);
\draw[middlearrow={<}] (t.east) -- (p3.north); 
\draw[middlearrow={<}] (t.north) -- (p4.east);
\draw[middlearrow={>}] (t.south) -- (p5.north);
\draw[middlearrow={<}] (a1.east) -- (a1p1.west); 
\draw[middlearrow={>}] (a1.east) -- (a1p4.west); 
\draw[middlearrow={>}] (a2.east) -- (a2p1.west); 
\draw[middlearrow={>}] (a3.east) -- (a3p1.west); 
\draw[middlearrow={<}] (a3.east) -- (a3p2.west); 
\draw[middlearrow={>}] (a3.east) -- (a3p3.west); 
\draw[middlearrow={>}] (a4.north) -- (a4p1.west); 
\draw[middlearrow={>}] (a5.east) -- (a5p5.west); 
\draw[middlearrow={<}] (a1p1.east) -- (p1a1.west); 
\draw[middlearrow={>}] (a2p1.east) -- (p1a2.west); 
\draw[middlearrow={>}] (a3p1.east) -- (p1a3.west); 
\draw[middlearrow={<}] (a3p2.east) -- (p2a3.west); 
\draw[middlearrow={>}] (a3p3.east) -- (p3a3.west); 
\draw[middlearrow={<}] (a4p5.east) -- (p5a4.west); 
\draw[middlearrow={>}] (a5p5.east) -- (p5a5.west); 
\draw[middlearrow={>}] (p4a1.east) -- (p4.west); 
\draw[middlearrow={<}] (p1a1.east) -- (Cp11.west); 
\draw[middlearrow={<}, color=gray!60] (Cp11.east) -- (p1.west);
\draw[middlearrow={>}] (p1a2.east) -- (Cp11.west); 
\draw[middlearrow={>}, color=gray!60] (p1a3.east) -- (Cp11.west); 
\draw[middlearrow={>}] (p1a4.east) -- (Cp12.west); 
\draw[middlearrow={>}] (Cp12.east) -- (p1.west); 

\draw[middlearrow={>}] (p2a5.east) -- (p2.west); 
\draw[middlearrow={>}] (p3a3.east) -- (p3.west); 
\draw[middlearrow={<}] (p5a4.east) -- (p5.west); 
\draw[middlearrow={>}] (p5a5.east) -- (p5.west); 
\draw[bend right,middlearrow={>}]  (p1.east) to node [auto] {} (t.west);
\draw[bend left,middlearrow={<}]  (p1.east) to node [auto] {} (t.west);
\draw[middlearrow={>}] (a2.east) -- (a2p5.west);

\end{scope}
\begin{scope}[decoration={
    markings, mark=at position 0.45 with {\arrow{>}}}
    ] 
\draw[very thick,postaction={decorate}] (a4p5.west) -- (a4.east); 
\draw[very thick,postaction={decorate}] (a5.east) -- (a5p2.west); 
\draw[very thick,postaction={decorate}, color=gray!60]  (p5a2.east)-- (p5.west);
\draw[very thick,postaction={decorate}] (p2.east) -- (p2a3.west); 
\end{scope}

\begin{scope}[very thick, dashed, decoration={
    markings,
    mark=at position 0.5 with {\arrow{>}}}
    ] 
\draw[middlearrow={>}] (a4p1.east) -- (p1a4.west); 
\draw[middlearrow={>}] (a5p2.east) -- (p2a5.west); 
\draw[middlearrow={>}] (a1p4.east) -- (p4a1.west); 

\end{scope}

\end{tikzpicture}}
\caption{Thick edges and gray edges form the network $H_1(f_1)$. Thick edges alone form the network $H'_1$. Thick and dashed edges together form
the network $H_2$.
The thick and dashed edges between $L$ and $R$ represent rank-$1$ and rank-$2$ edges respectively. The white, black, and red 
nodes represent $S_1$, $T_1$ and $U_1$ respectively.
}
\label{new2}
\end{minipage}
\end{figure}


We collectively refer to the set of leaves of $\mathcal{T}_a$ for all $a\in A$ as $L$ and similarly, the set of leaves
of $\mathcal{T}_p$ for all $p\in P$ as $R$. Thus
\begin{eqnarray*}
L = \{ C_a^p \mid a \in A \mbox { and } p \in N(a) \}; & &R  =  \{ C_p^a \ \ \ | \ \ \  p \in P \mbox { and } a \in N(p) \}
\end{eqnarray*}
Figure~\ref{new1} shows the flow network corresponding to the example in Figure~\ref{fig:example1}.
The nodes in $L$ (respectively $R$) (shown in the two ellipses in the figure) have a unique
predecessor (successor) in $H_0$. Moreover, $H_0$ can be seen as a disjoint union of two trees, one rooted at $s$ and
another at $t$, the edges of the former being directed from parent to child and those of the latter from child to parent.
We call the two trees as {\em applicant-tree} and {\em post-tree} respectively.

\subsubsection {Decomposition of vertices}
\label{sec:decompose}
In this section, we present a decomposition of the vertices of the flow network w.r.t. a max-flow. As evident,
the graph $H_0$ admits no path from $s$ to $t$, hence has a zero max-flow. Our algorithm in Section~\ref{sec:pseudocode} iteratively adds edges to $H_0$. 
In an iteration $k$, the flow network $H_k$ contains  unit capacity edges of the form $(C_a^p, C_p^a)$ between the sets $L$ and $R$
such that $p \in N(a)$ and the edge $(a, p)$ has rank at most $k$. 
Let $H$ be any such flow network constructed by our algorithm in some iteration and let $f$ be a max-flow in $H$.
We give a decomposition of vertices of $H$ w.r.t. the max-flow $f$. 
We prove in Section~\ref{sec:flow-prop} that the decomposition is {\em invariant } of the max-flow. The decomposition
of the vertices allows us to delete certain edges in $H$ that ensures that signature of the matching $M$ corresponding to $H$ is preserved in the future iterations.
For a flow network $H$ and a max-flow $f$ in $H$, let $\res$ denote the residual network.
We define the sets $S_f, T_f, \SetU_f$ as follows. Since $f$ is a max-flow, it is immediate that the sets
partition the vertex set $V$.
\begin{eqnarray*}
S_f &=& \{ v \mid v \in V \mbox{ and } v \mbox { is reachable from } s \mbox{ in } \res\} \\ 
T_f &=& \{ v \mid v \in V \mbox{ and } v \mbox { can reach } t \mbox{ in } \res\}\\
\SetU_f &=& \{ v \mid v \in V \mbox{ and } v \notin S_f \cup T_f \}
\end{eqnarray*}

\subsection{Properties of the flow network}\label{sec:flow-prop}
We state properties of the flow network which are essential to prove the correctness of Algorithm~\ref{algo:main}.
Lemma~\ref{lem:diffcuts} and Lemma~\ref{lem:for-rev-flow} below are known from theory of network flows (See e.g. \cite{FF}). 
Lemma~\ref{lem:invariant} shows the invariance of the sets $S_f, T_f, U_f$.
We remark that the properties in Lemma~\ref{lem:diffcuts}, \ref{lem:for-rev-flow}, and \ref{lem:invariant} hold for any flow network $H$.

\begin{lemma} \label{lem:diffcuts}
Let $f$ be a max-flow in a flow network $H=(V,E)$ and $S_f, T_f$, and $\SetU_f$ be as defined above using the residual network $\res$.
$(S_f, T_f \cup U_f)$ is a min-$s$-$t$-cut of $H$.
\end{lemma}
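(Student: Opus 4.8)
The plan is to recognise this as an instance of the max-flow min-cut theorem and to verify the two conditions that make $(S_f, T_f \cup U_f)$ a minimum $s$-$t$-cut: that it is a genuine cut, and that its capacity equals the value $|f|$ of the max-flow. First I would check that the partition yields an honest cut. By construction $s$ is reachable from itself in $\res$, so $s \in S_f$, and since $f$ is a max-flow there is no augmenting $s$-$t$-path, i.e. $t$ is not reachable from $s$ in $\res$; hence $t \notin S_f$. As the sets $S_f, T_f, U_f$ partition $V$, we have $T_f \cup U_f = V \setminus S_f$, so $(S_f, T_f \cup U_f)$ separates $s$ from $t$ and is an $s$-$t$-cut.

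Next I would pin down the flow carried by the edges crossing the cut. Consider a forward edge $(u,v) \in E$ with $u \in S_f$ and $v \in V \setminus S_f$. If this edge were not saturated, then $\res$ would contain the forward residual arc $(u,v)$, so $v$ would be reachable from $s$ through $u$, contradicting $v \notin S_f$; hence $f(u,v)$ equals the capacity of $(u,v)$. Symmetrically, consider a backward edge $(v,u) \in E$ with $v \in V \setminus S_f$ and $u \in S_f$. If $f(v,u) > 0$, then $\res$ would contain the reverse residual arc $(u,v)$, again making $v$ reachable from $s$; hence $f(v,u) = 0$. Thus across this cut every forward edge is saturated and every backward edge carries no flow.

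Finally I would combine these facts with the standard conservation identity that the net flow across any $s$-$t$-cut equals $|f|$. Writing this net flow as the total flow on forward cut edges minus the total flow on backward cut edges, the previous step shows it equals the sum of capacities of the forward cut edges minus $0$, which is exactly the capacity of $(S_f, T_f \cup U_f)$. Therefore this cut has capacity $|f|$. Since every $s$-$t$-cut has capacity at least $|f|$ (weak duality, flow across a cut cannot exceed its capacity), the cut $(S_f, T_f \cup U_f)$ attains the lower bound and is a min-$s$-$t$-cut.

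I expect no genuine obstacle here, since the argument is the textbook proof of the hard direction of max-flow min-cut, which the excerpt already attributes to standard flow theory. The only point demanding care is the bookkeeping of residual arcs, namely distinguishing forward residual capacity from reverse residual capacity when justifying that crossing edges are either saturated or empty; this distinction is precisely what drives the cut capacity down to $|f|$ and thereby forces minimality.
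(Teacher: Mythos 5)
Your proof is correct: it is the standard max-flow min-cut argument (reachability set in the residual graph, saturated forward cut edges, empty backward cut edges, weak duality), which is exactly what the paper has in mind --- the paper gives no proof of this lemma at all, citing it as known from network-flow theory (Ford--Fulkerson). So your proposal simply writes out the textbook argument behind that citation, and there is nothing to fault in it.
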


\begin{lemma}
\label{lem:for-rev-flow}
Let $H$ be any flow network and $f$ be a max-flow in $H$. Let $(X, Y)$ be any min-$s$-$t$-cut of $H$.
Then the following hold:
\begin{itemize}
\item For any edge $(a, b)\in E$ such that $a\in X, b\in Y$, we have $f(a, b) = c(a, b)$.
\item For any edge $(b, a)\in E$ such that $a\in X, b\in Y$, we have $f(a, b) = 0$.
\end{itemize}
\end{lemma}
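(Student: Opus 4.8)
The plan is to derive both conclusions simultaneously from a single identity — that the value of $f$ equals its net crossing of the cut $(X,Y)$ — combined with the max-flow min-cut equality. First I would recall the standard flow-value-across-a-cut identity (one of the facts taken from \cite{FF}): for any $s$-$t$ cut $(X,Y)$ with $s\in X$ and $t\in Y$, the value of $f$ can be written as the total flow on forward edges minus the total flow on backward edges, namely
\[
\mathrm{val}(f) \;=\; \sum_{\substack{(a,b)\in E \\ a\in X,\, b\in Y}} f(a,b) \;-\; \sum_{\substack{(b,a)\in E \\ b\in Y,\, a\in X}} f(b,a).
\]
This follows by summing the conservation constraints over all nodes of $X$: the internal nodes contribute zero, the source contributes $\mathrm{val}(f)$, and the flow on edges with both endpoints in $X$ cancels, leaving exactly the two boundary sums.

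Next I would bound the right-hand side using $0 \le f(e) \le c(e)$ for every edge. Discarding the (nonnegative) backward sum and applying $f(a,b)\le c(a,b)$ on each forward edge yields $\mathrm{val}(f) \le \sum_{a\in X,\, b\in Y} c(a,b) = c(X,Y)$, the capacity of the cut. The crux of the argument is then to invoke the max-flow min-cut theorem: since $f$ is a max-flow and $(X,Y)$ is a min-$s$-$t$-cut, we have $\mathrm{val}(f) = c(X,Y)$, so both inequalities above are in fact tight.

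Substituting the identity into the equality $\mathrm{val}(f)=c(X,Y)$ and rearranging gives
\[
\sum_{\substack{(a,b)\in E \\ a\in X,\, b\in Y}} \big(c(a,b) - f(a,b)\big) \;+\; \sum_{\substack{(b,a)\in E \\ b\in Y,\, a\in X}} f(b,a) \;=\; 0.
\]
Every summand here is nonnegative — the first because $f(a,b)\le c(a,b)$, the second because flows are nonnegative — so a sum of nonnegative terms equal to zero forces each term to vanish. Hence $f(a,b)=c(a,b)$ on every forward edge $(a,b)$ with $a\in X,\, b\in Y$, and $f(b,a)=0$ on every backward edge $(b,a)$ with $b\in Y,\, a\in X$, which is precisely the claim (the conclusion for the backward edge being read as $f(b,a)=0$, the flow on the edge that actually exists).

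I expect no genuine obstacle here: this is the standard complementary-slackness argument underlying min-cut optimality. The only point requiring care is correctly orienting the two boundary sums so their signs match the conservation identity; once that identity and the min-cut equality are in place, the nonnegativity bookkeeping closes the proof immediately. Accordingly, I would cite the flow-value-across-a-cut identity as known from \cite{FF} rather than re-deriving it in full.
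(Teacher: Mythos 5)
Your proof is correct. The paper does not actually prove this lemma---it is stated as a known fact from network-flow theory with a citation to \cite{FF}---and your argument (the flow-value-across-a-cut identity, the capacity bound, tightness via max-flow min-cut, and then termwise vanishing of a sum of nonnegative quantities) is precisely the classical proof of that cited fact, including the correct reading of the second bullet's conclusion as $f(b,a)=0$ despite the statement's typo.
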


\begin{lemma} 
\label{lem:invariant}
The sets $S_f, T_f$ and $U_f$ are invariant of the max-flow $f$ in $H$.
\end{lemma}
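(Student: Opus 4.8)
The plan is to characterize each of $S_f$, $T_f$, and $U_f$ purely in terms of the min-$s$-$t$-cuts of $H$, objects which are fixed by the capacity function alone and therefore carry no dependence on the particular max-flow $f$. Concretely, I will show that
$S_f = \bigcap_{(X,Y)} X$ and $T_f = \bigcap_{(X,Y)} Y$, where both intersections range over all min-$s$-$t$-cuts $(X,Y)$ of $H$. Since $U_f = V \setminus (S_f \cup T_f)$, invariance of all three sets is then immediate.

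The single fact that drives the argument is that a min-cut admits no residual edge from its source side to its sink side. Let $(X,Y)$ be any min-$s$-$t$-cut. For a forward edge $(a,b)$ with $a \in X$, $b \in Y$, Lemma~\ref{lem:for-rev-flow} gives $f(a,b) = c(a,b)$, so $(a,b)$ is saturated and there is no residual edge $a \to b$; for a backward edge $(b,a)$ with $b \in Y$, $a \in X$, the same lemma forces zero flow on it, so again there is no residual edge $a \to b$. Hence in $H(f)$ no vertex of $X$ has a residual edge into $Y$.

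Now fix a max-flow $f$ and an arbitrary min-cut $(X,Y)$. Because $s \in X$ and no residual edge leaves $X$ for $Y$, every vertex reachable from $s$ in $H(f)$ remains inside $X$, giving $S_f \subseteq X$; and because $t \in Y$, any vertex that can reach $t$ in $H(f)$ must already lie in $Y$, since a residual path from $X$ to $t \in Y$ would have to traverse a forbidden $X$-to-$Y$ residual edge. Thus $T_f \subseteq Y$. Intersecting over all min-cuts yields $S_f \subseteq \bigcap_{(X,Y)} X$ and $T_f \subseteq \bigcap_{(X,Y)} Y$. For the reverse inclusions I exploit that each of these sets is itself a cut side: Lemma~\ref{lem:diffcuts} states that $(S_f, T_f \cup U_f)$ is a min-cut, so its source side $S_f$ contains $\bigcap_{(X,Y)} X$; and a symmetric saturation argument shows $(S_f \cup U_f, T_f)$ is a min-cut — every edge entering $T_f$ must be saturated and every edge leaving $T_f$ must carry zero flow, for otherwise the associated residual edge would witness a vertex outside $T_f$ reaching $t$ — so its sink side $T_f$ contains $\bigcap_{(X,Y)} Y$. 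Combining the inclusions gives the two claimed equalities, and the three sets are therefore independent of $f$.

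The one step not already packaged in the cited lemmas, and hence the main thing to get right, is the reverse inclusion for $T_f$: whereas the cut status of $S_f$ is handed to us directly by Lemma~\ref{lem:diffcuts}, I must verify from scratch that the cut $(V \setminus T_f, T_f)$ is saturated and hence minimum. This is the mirror image of the reachability argument above, relying only on the observation that the set of vertices able to reach $t$ is closed under traversing residual edges, but it is the piece of the proof that the earlier lemmas leave to be filled in.
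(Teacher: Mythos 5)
Your proof is correct, but it takes a genuinely different route from the paper's. The paper compares two max-flows $f$ and $f'$ head-on: among the vertices in $S_f \setminus S_{f'}$ it picks one, $x$, whose BFS distance from $s$ in the residual network $H(f)$ is smallest, looks at the residual edge $(y,x)$ from its BFS parent $y$ (which lies in $S_{f'}$ by minimality), and applies Lemma~\ref{lem:for-rev-flow} to the min-cut $(S_{f'}, T_{f'} \cup U_{f'})$ to contradict the existence of that residual edge; a symmetric backwards-BFS from $t$ handles $T$, and $U$ follows. You instead identify each set with a flow-independent object: $S_f = \bigcap X$ and $T_f = \bigcap Y$, the intersections ranging over all min-$s$-$t$-cuts $(X,Y)$, so invariance is automatic once the equalities are established. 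Your forward inclusions run on the same engine as the paper (Lemma~\ref{lem:for-rev-flow} implies no residual edge crosses any min-cut from source side to sink side), and your reverse inclusion for $S_f$ is exactly Lemma~\ref{lem:diffcuts}; the one genuinely new piece you supply --- and you supply it correctly, via the saturation argument --- is the mirror of Lemma~\ref{lem:diffcuts}, namely that $(S_f \cup U_f, T_f)$ is also a min-cut. What each approach buys: yours exposes the standard lattice structure of min-cuts ($S_f$ is the unique minimal source side, $V \setminus T_f$ the unique maximal one), avoids the minimal-counterexample bookkeeping, and yields the mirror cut lemma as a reusable byproduct; the paper's stays entirely elementary, never quantifying over the (possibly exponential-sized) family of all min-cuts, at the cost of the more delicate choice-of-closest-vertex argument.
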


\begin{proof}
Let $f$ and $f'$ be two max-flows in $H$.
Let $S_f, T_f, U_f$ be the sets w.r.t. $f$ and $S_{f'}, T_{f'}, U_{f'}$ be the sets w.r.t. $f'$.
We consider the following two cases.
\begin{itemize}
\item We show that, for any node $x \in H$,  $x \in S_f \iff x \in S_{f'}$.
We prove one direction i.e.
$x \in S_f \implies x \in S_{f'}$. The other direction follows by symmetry.
For the sake of contradiction, assume that there exists an $x \in S_f$ such that 
$x \in  T_{f'} \cup U_{f'}$. 
Furthermore among all nodes in $S_f\setminus S_{f'}$, let $x$ be the one whose shortest path distance from $s$
in the residual network $\res$ is as small as possible. 

Let $y$ be the parent of $x$ in the BFS tree rooted at $s$ in $\res$.
By the choice of $x$, it is clear that $y \in S_{f'}$. (We remark that $y$ could be the node $s$ itself.)  Note that
 the edge $(y, x)$ belongs to $\res$.  
 Therefore either $(y,x)\in \res$ or $(x,y)\in\res$. 
If $(y,x)\in \res$, then $(y,x)$ is a forward edge of the min-$s$-$t$-cut $(S_{f'}, T_{f'} \cup U_{f'})$, and hence must be saturated by $f$ as well. Thus, by Lemma~\ref{lem:for-rev-flow}, $f'(y, x) = f(y, x) = c(y, x)$. However, this contradicts
the fact that $(y, x) \in \res$.

If $(x, y) \in H$ then $(x,y)$ is a reverse edge of the min-$s$-$t$-cut $(S_{f'}, T_{f'} \cup U_{f'})$.
Hence by Lemma~\ref{lem:for-rev-flow} we have $f'(x, y) = f(x, y) = 0$. However this contradicts the existence
of the edge $(y, x)$ in $\res$ which must be present because of non-zero flow $f$ on the edge $(x, y)$.
By exchanging $f$ and $f'$ we have: $x \in S_{f'} \implies x \in S_{f}$.

\item The proof of $x \in T_f \iff x \in T_{f'}$ is analogous, except that we need to perform a BFS of $H(f)$ from $t$ by traversing each
edge in the reverse direction.

\end{itemize}

The above two cases immediately imply that $x \in U_f \iff x \in U_{f'}$.
\end{proof}


The next two lemmas, which are specific to our flow network, are useful in proving the rank-maximality of our algorithm 
in the next section.
Consider a flow network $H$ with a max-flow $f$. 
Consider a node $C_a^i \in T \cup U$ such that the predecessor $C$ of $C^i_a$ is in $S$. Such a $C$ must exist since $s\in S$  
and $s$ is an ancestor of $C_a^i$. 

\begin{lemma}\label{lem:O-U-app}
Consider a node $C_a^i \in T \cup U$ such that either the parent $C^j_a$ of $C^i_a$ in $\mathcal{T}_a$  is in $S$ or $C_a^i = C_a^*$.
Then the following hold:
\begin{itemize}
\item (i) Every leaf $C^p_a$ in the subtree of $C_a^i$ in $\mathcal{T}_a$ belongs to $T \cup U$.
\item (ii) Every max-flow $f$ must saturate the edge $(C,C_a^i)$.
\end{itemize}
Conversely, in the applicant-tree, every leaf node $C_a^p \in T \cup U$ has an ancestor $C_a^i \in T \cup U$ 
such that the predecessor $C$ of $C_a^i$ (possibly $s$) is in $S$
and the edge $(C, C_a^i)$ is saturated in every max-flow.
\end{lemma}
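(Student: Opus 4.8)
The plan is to settle the two forward claims and the converse in turn, using only the min-cut structure of Lemma~\ref{lem:diffcuts}, the saturation dichotomy of Lemma~\ref{lem:for-rev-flow}, and the flow-invariance of Lemma~\ref{lem:invariant}. I would first dispose of~(ii). By Lemma~\ref{lem:diffcuts} the pair $(S, T \cup U)$ is a min-$s$-$t$-cut of $H$, and the hypothesis places $C$ in $S$ and $C_a^i$ in $T \cup U$; hence the tree edge $(C, C_a^i)$ (with $C = s$ when $C_a^i = C_a^*$) is a forward edge of this cut. Lemma~\ref{lem:for-rev-flow} then forces $f(C, C_a^i) = c(C, C_a^i) = q(C_a^i)$. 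Because the decomposition $(S, T, U)$ is flow-invariant (Lemma~\ref{lem:invariant}), this same cut, and hence this same forward edge, arises for every max-flow, so $(C, C_a^i)$ is saturated in every max-flow. This also records $f(C, C_a^i) = q(C_a^i) > 0$, which I reuse below.

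For~(i) I would prove the stronger statement that \emph{no} node of the subtree $\mathcal{T}$ rooted at $C_a^i$ lies in $S$, arguing by contradiction. Suppose some node of $\mathcal{T}$ is reachable from $s$ in $\res$; pick a shortest such residual $s$-path and let $w'$ be its first node lying in $\mathcal{T}$. The subtree is joined to the rest of $H$ only through the parent edge $(C, C_a^i)$ and through the leaf-to-post edges $(C_a^p, C_p^a)$. The saturated edge $(C, C_a^i)$ carries no forward residual arc $C \to C_a^i$, so $w'$ cannot be $C_a^i$ entered from above; therefore $w'$ is a leaf $C_a^p$ of $\mathcal{T}$ entered through the reverse residual arc $C_p^a \to C_a^p$, which exists only when $f(C_a^p, C_p^a) > 0$. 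I would then transport reachability upward along the tree path from $C_a^p$ to $C_a^i$: flow conservation at the leaf gives positive flow on its incoming parent edge (equal to $f(C_a^p, C_p^a)$), so the reverse arc of that edge is present and carries reachability one level up; conservation at each internal node keeps its incoming flow positive, so each successive reverse parent arc is present, and reachability climbs all the way to $C_a^i$. This puts $C_a^i \in S$, contradicting $C_a^i \in T \cup U$. Hence every node of $\mathcal{T}$, and in particular every leaf in it, lies in $T \cup U$.

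For the converse I would walk down the root-to-leaf path $C_a^* = v_0, v_1, \dots, v_\ell = C_a^p$ in $\mathcal{T}_a$. The predecessor of $v_0$ is $s \in S$, while $v_\ell = C_a^p \in T \cup U$ by hypothesis. If $v_0 \in T \cup U$, I take $C_a^i = v_0$ with predecessor $s$; otherwise $v_0 \in S$, and since $v_\ell \in T \cup U$ there is a smallest index $i^*$ with $v_{i^*} \in T \cup U$, whence its predecessor $v_{i^*-1} \in S$. In either case $C_a^i \in T \cup U$ has predecessor $C \in S$, and saturation of $(C, C_a^i)$ in every max-flow follows exactly as in~(ii).

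The step I expect to be the real obstacle is the ``reachability from below'' in~(i): a descendant leaf could a priori be reached from $s$ through the post side rather than through $C_a^i$, so the saturation of $(C, C_a^i)$ alone does not immediately seal off the subtree. The device that resolves this is choosing $w'$ to be the first subtree node on a shortest residual path, which forces entry through a positively-loaded leaf edge; flow conservation then lets me move reachability monotonically up the tree until it reaches $C_a^i$ and contradicts $C_a^i \notin S$.
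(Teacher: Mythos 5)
Your proposal is correct, and for part~(ii) and the converse it coincides with the paper's own argument: both identify $(C, C_a^i)$ (respectively, the first $S$-to-$(T\cup U)$ crossing on the directed path from $s$ to $C_a^p$) as a forward edge of the min-cut $(S, T\cup U)$ from Lemma~\ref{lem:diffcuts} and invoke Lemma~\ref{lem:for-rev-flow}; your extra appeal to Lemma~\ref{lem:invariant} in~(ii) is harmless but redundant, since Lemma~\ref{lem:for-rev-flow} already applies to every max-flow and every min-cut. Part~(i) is where you genuinely diverge. The paper's proof is local and self-contained: it picks a nearest descendant $C_a^\ell \in S$ of $C_a^i$, observes that the residual path into $C_a^\ell$ must arrive via a reverse arc from one of its children (arrival from its parent $C_a^x$ would place $C_a^x$ in $S$, contradicting the choice of nearest descendant), so $C_a^\ell$ has positive outgoing flow but zero incoming flow, violating conservation. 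You instead prove~(ii) first and use the saturation of $(C, C_a^i)$ to seal the top of the subtree, force any residual entry to occur at a leaf through a reverse $L$-$R$ arc carrying positive flow, and then climb the tree by conservation until reachability reaches $C_a^i$, contradicting $C_a^i \in T \cup U$. Both arguments are sound; the paper's version is slightly more general in that it uses only $C_a^i \in T\cup U$ (neither the hypothesis on the parent nor part~(ii)), whereas yours makes the dependency order (ii)-then-(i) essential but gives a transparent picture of why the subtree is sealed off from $s$ --- which is exactly the structural fact the algorithm's edge deletions exploit.
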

\begin{proof}
We prove $(i)$  by arguing that every node in the subtree rooted at $C_a^i$ in $\mathcal{T}_a$ belongs
to $T \cup U$. This immediately implies that every leaf in the subtree of $C_a^i$ belongs to $T \cup U$.
For the sake of contradiction, assume that there exists a
descendant $C_a^{\ell}$ of $C_a^i$ in $\mathcal{T}_a$ such that $C_a^{\ell} \in S$
and let $C_a^{\ell}$ be one of the nearest such descendants of $C_a^i$. Let $C_a^x$ be the parent of $C_a^{\ell}$ in $\mathcal{T}_a$.
We remark that $C_a^x \in T \cup U$ because of the choice of the nearest descendant.
 Since $C_a^{\ell} \in S$ there exists
a path from $s$ to $C_a^{\ell}$ in the residual network $H(f)$. Let $y$ be the last node on the $s \leadsto C_a^{\ell}$
path in the $H(f)$. We note that $y \neq C_a^x$. However, all the other edges incident on $C_a^{\ell}$ in $H$
are outgoing edges from $C_a^{\ell}$. Thus, $y$ is one of the children of $C_a^{\ell}$ in $\mathcal{T}_a$.
Furthermore since $(y, C_a^{\ell})$ appears in $H(f)$, it implies that flow along the edge $(C_a^{\ell}, y)$ is non-zero.
However, we note that the flow along the unique incoming edge $(C_a^x, C_a^{\ell})$ must be zero. If not, the edge $(C_a^{\ell}, C_a^x)$ belongs to $H(f)$ and contradicts the fact that $C_a^x \in  T \cup U$.
However, if the incoming flow to $C_a^{\ell}$ is zero and the outgoing flow from $C_a^{\ell}$ is non-zero, then it contradicts
flow conservation. Therefore such a node $C_a^{\ell} \in S$ does not exist. This finishes the proof of $(i)$.
To prove $(ii)$, we observe that the edge $(C_a^j, C_a^i)$ (or $(s, C_a^i)$ in case $C_a^i = C_a^*$) is a forward edge of the min-cut $(S, U \cup T)$ and hence
must be saturated by every max-flow.

To show the converse, note that $C_a^p \in T \cup U$. Consider the directed path from $s$ to $C_a^p$ in $H$. If no edge on this path
is saturated, then, in $H(f)$, $C_a^p \in S$, a contradiction. Thus along the path $s, C_a^* \ldots C_a^p$, there must exist
an edge $(C,C_a^i)$ such that $C \in S$ and $C_a^i \in T \cup U$. The edge $(C, C_a^i)$ is a forward edge of the $(S, T \cup U)$ min-cut
and hence is saturated by every max-flow of $H$.
Suppose there does not exist a node $C_a^j$ in the path such that $C_a^j \in S$. Then we call $C_a^*$ as $C_a^i$ and the edge $(s,C_a^i)$ is a forward edge of the min-cut $(S,U \cup T)$ and must be saturated by every max-flow. This show that the converse is true.
\end{proof}

\noindent An analogous claim can be proved for the leaf classes in the post-tree:
\begin{lemma}\label{lem:analog-O-U-app} Consider a node $C_p^i \in S \cup U$ such that the parent $C$ of $C^i_p$ in the post-tree is in $T$.
Then the following hold:
\begin{itemize}
\item (i) Every leaf $C^a_p$ in the subtree of $C_p^i$ belongs to $S \cup U$.
\item (ii) Every max-flow $f$ must saturate the edge $(C_p^i, C)$.
\end{itemize}
Conversely, for a leaf node $C_p^a \in S \cup U$, there exists an ancestor $C_p^i \in S \cup U$ 
such that the parent $C$ of $C_p^i$ (possibly $t$) is in $T$
and the edge $(C_p^i, C)$ is saturated in every max-flow.
\end{lemma}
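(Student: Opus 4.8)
The plan is to prove Lemma~\ref{lem:analog-O-U-app} by dualizing the proof of Lemma~\ref{lem:O-U-app}: the post-tree is the mirror image of the applicant-tree with every edge reversed and with $t$ (rather than $s$) as the root, so the roles of $S$ and $T$, and of ``reachable from $s$'' and ``can reach $t$,'' are interchanged. I keep the fixed max-flow $f$ and argue entirely inside the residual network $H(f)$, recalling that $s\in S$ and $t\in T$ and that, by Lemma~\ref{lem:invariant}, the sets $S,T,U$ do not depend on which max-flow was used to define them.

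For part (i) I would argue by contradiction that no node in the subtree of $C_p^i$ lies in $T$. Suppose some descendant does; let $C_p^\ell$ be a nearest such descendant (fewest tree-edges from $C_p^i$) and let $C_p^x$ be its parent in $\mathcal{T}_p$. By the minimality of $C_p^\ell$ together with the hypothesis $C_p^i\in S\cup U$, the node $C_p^x$ lies in $S\cup U$. Since $C_p^\ell\in T$, there is a residual path $C_p^\ell\leadsto t$; let $z$ be the node immediately following $C_p^\ell$ on it, so $(C_p^\ell,z)\in H(f)$ and $z\in T$. First I observe $z\neq C_p^x$, for otherwise $C_p^x=z\in T$, contradicting $C_p^x\in S\cup U$. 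In the post-tree the only outgoing flow edge of $C_p^\ell$ goes to its parent $C_p^x$; every other incident edge (from a child, or a rank edge if $C_p^\ell$ is a leaf) is incoming. Hence $(C_p^\ell,z)$ must be the reverse residual of an incoming edge, which forces positive flow into $C_p^\ell$. On the other hand the outgoing flow $f(C_p^\ell,C_p^x)$ must be zero: were it positive, the reverse residual edge $(C_p^x,C_p^\ell)$ would lie in $H(f)$, letting $C_p^x$ reach $t$ through $C_p^\ell$ and forcing $C_p^x\in T$, again a contradiction. Positive inflow with zero outflow violates flow conservation at $C_p^\ell$, establishing part (i), and in particular that every leaf of the subtree lies in $S\cup U$.

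For part (ii) I note that, by the argument symmetric to Lemma~\ref{lem:diffcuts} (running the BFS from $t$ along reversed edges), $(S\cup U,\,T)=(V\setminus T,\,T)$ is also a min-$s$-$t$-cut. Since $C_p^i\in S\cup U$ and its parent $C\in T$, the flow edge $(C_p^i,C)$ crosses this cut from the source side to the sink side, so by Lemma~\ref{lem:for-rev-flow} it is saturated by every max-flow. This is the one genuine point of care in the whole argument, and the place where I would concentrate the written justification: because $C_p^i$ may lie in $U$ rather than in $S$, the cut $(S,\,T\cup U)$ supplied by Lemma~\ref{lem:diffcuts} does not place $(C_p^i,C)$ on its boundary, and I must instead invoke the dual min-cut $(S\cup U,\,T)$.

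For the converse I would take a leaf $C_p^a\in S\cup U$ and walk along the directed path $C_p^a,\ldots,C_p^*,t$ toward $t$. Its starting node lies in $S\cup U$ while $t\in T$, so the path crosses from $S\cup U$ into $T$; let $(C_p^i,C)$ be the first crossing edge, with $C_p^i\in S\cup U$ and its successor $C\in T$ (where $C=t$ when $C_p^i=C_p^*$). By the argument of part (ii), $(C_p^i,C)$ is a forward edge of the min-cut $(S\cup U,T)$ and is therefore saturated in every max-flow, as required. Apart from the appeal to the dual min-cut, every step is a verbatim mirror of Lemma~\ref{lem:O-U-app}, so I expect the remainder to be routine.
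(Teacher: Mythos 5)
Your proof is correct, and it is essentially the dualization the paper itself intends: the paper states this lemma without proof, remarking only that an ``analogous claim can be proved'' for the post-tree, and your part (i) and converse are indeed verbatim mirrors of the proof of Lemma~\ref{lem:O-U-app}. The one place where you supply something the paper does not is exactly the point you flag in part (ii). The paper's proof of Lemma~\ref{lem:O-U-app}(ii) cites the min-cut $(S, T\cup U)$ from Lemma~\ref{lem:diffcuts}, and a literal dualization of that sentence fails here: when $C_p^i \in U$, the edge $(C_p^i, C)$ lies entirely inside the sink side $T\cup U$ of that cut, so Lemma~\ref{lem:for-rev-flow} cannot be applied to it. Your repair --- observing that $(S\cup U,\, T)$ is also a min-$s$-$t$-cut (by running the reachability argument backwards from $t$) and then invoking Lemma~\ref{lem:for-rev-flow} for that cut --- is sound, and nothing in the paper states this dual fact, so it is a genuine addition. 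An equivalent, slightly more direct route avoids naming the dual cut at all: if some max-flow $f'$ left $(C_p^i, C)$ unsaturated, the forward residual edge $(C_p^i, C)$ would exist in $H(f')$, so $C_p^i$ could reach $t$ through $C$, which lies in $T$ also with respect to $f'$ by the invariance of the decomposition (Lemma~\ref{lem:invariant}); this would force $C_p^i \in T$, contradicting the hypothesis. Either formulation closes the gap, and your identification of this as the only non-routine step in the ``analogy'' is accurate.
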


\section{Algorithm for Laminar \CRMM}
\label{sec:pseudocode}

This section gives the detailed pseudo-code for our iterative algorithm for computing a laminar \CRMM\ (see Algorithm~\ref{algo:main}).
At a high level, in each iteration our algorithm operates as follows: it computes a max-flow $f_k$ in a flow network $H_k$ (Step~\ref{step:compute}) and  computes the partition of the vertices $S_k, T_k, U_k$ w.r.t $f_k$ (Step~\ref{step:label}).  
The algorithm then deletes forward and reverse edges of min-cut $(S_k,  T_k \cup U_k)$ (Step~\ref{step:EE-EU}). This step is crucial to ensure that the signature of the matching corresponding to the flow in the subsequent iterations does not degrade.
Finally, the algorithm deletes certain edges of  rank higher than $k$ from the given bipartite  graph (Step~\ref{step:del-higher-rank}) --  we prove that these edges cannot belong to any \CRMM\ and hence can be removed.

We begin by constructing the flow network $H_0$ as described in Section~\ref{sec:flownw}. The max-flow $f_0=0$ in $H_0$ since
there is no $s$-$t$ path in $H_0$.
We partition the edges of $G$ into sets $E_k$, $1\leq k\leq r$ where $r$ is the maximum rank on any edge of $G$ and $E_k$ contains the edges of 
rank $k$ from $G$. Start with $G'_0=G_0=(A\cup P,\emptyset)$. Our algorithm repeatedly constructs the network $H_k$ and maintains the
reduced bipartite graph $G_k'$. Finally the output of our algorithm is the $R$-$L$ edges of the flow network $H_r'$ constructed in the final iteration.

We illustrate these steps on the example in Figure~\ref{fig:example1}. 
Add to $H_0$ (shown in Figure~\ref{new1})
edges of the form $(C_a^p, C_p^a)$ for every rank-1 edge in $G$ to obtain the flow network $H_1$. Let $f_1$ be a max-flow in $H_1$ corresponding to the matching 
$M_1 =  \{(a_1, p_1), ({a_3}, {p_2}), (a_4, {p_5})\}$. That is, for an edge $(a, p) \in M_1$ the unique $s-t$ path containing the edge
$(C_a^p, C_p^a)$ in $H_1$ carries unit flow. 
Figure~\ref{new2} (thick and gray edges) shows the residual network $H_1(f_1)$
along with the partition of the vertices as $S_1, T_1, U_1$. The edge $(C_{p_1}^{a_3}, C_{p_1}^1)$ in $H_1(f_1)$  is an edge of 
the form $(U_1, S_1)$ and hence is deleted as a reverse edge of the min-s-t cut. The edge $(C_{p_1}^*, C_{p_1}^1)$ in $H_1(f_1)$ 
is of the form
$(T_1, S_1)$, however, note that the edge {\em was} a forward edge in $H_1$. Thus we say that $(C_{p_1}^*, C_{p_1}^1)$ is deleted as a forward edge of the min-s-t cut. 
Algorithmically, both these edges
are deleted in Step~\ref{step:EE-EU} of Algorithm~\ref{algo:main}. We denote the flow network obtained after deleting gray edges in Figure~\ref{new2} as $H_1'$. 
Finally, we observe that the edge $(a_2, p_5)$ is a higher rank edge such that $C_{p_5}^{a_2} \in U_1$. Hence this edge is deleted in Step~\ref{step:del-higher-rank} of Algorithm~\ref{algo:main}.
Thus $H_2$ is obtained by adding to $H_1'$ the edges $ (C_{a_1}^{p_4}, C_{p_4}^{a_1}), (C_{a_4}^{p_1}, C_{p_1}^{a_4}), (C_{a_5}^{p_2}, C_{p_2}^{a_5})$.

We remark that if the $(C_{p_1}^*, C_{p_1}^1)$ were not deleted, an augmenting path in $H_2$ of the form 
$\rho_1 = \langle s,C_{a_5}^*,\ldots,C_{p_5}^{*},\ldots, C_{a_4}^{*},C_{a_4}^{p_1},C_{p_1}^{a_4},C_{p_1}^{2},C_{p_1}^{*},C_{p_1}^{1}, \ldots, C_{a_1}^{*}, \ldots ,C_{p_4}^{*},t \rangle$ can be used
to degrade the signature on rank-1 edges. We prove in the subsequent sections that our deletions ensure that the signature is never
degraded.

\begin{algorithm}[ht]
\begin{algorithmic}[1]
\STATE Construct the flow network $H_0 = (V,F_0)$ as described in Section~\ref{sec:flownw}.
\STATE Let $F_0' = F_0$ and for each $i$ set $E'_i = E_i$. 
\FOR{$k=1$ to $ r$}
\STATE $H_{k}=(V,F_k)$ where $F_k=F'_{k-1}\cup \{(C^p_a,C^a_p)\mid (a,p)\in E'_k\}$. \label{step:construct} 
\STATE Let $f_k$ be a max-flow in $H_k$. \label{step:compute} 
Compute the residual graph $\Res$ w.r.t. flow $f_k$.
\STATE Compute the sets $S_k$, $T_k$ and $U_k$. \label{step:label}
\STATE Delete all edges of the form $(T_k\cup U_k,S_k)$ in $H_k(f_k)$. \label{step:EE-EU}
\STATE Delete an edge $(a,p) \in E_j'$ where $j>k$ if   $C_a^p \in T_k \cup U_k$ or $C_p^a \in S_k \cup U_k$. \label{step:del-higher-rank} 
\STATE Let $H'_k = (V, F'_k)$ be the modified $H_k(f_k)$ and let  $G'_k= (A \cup P, \bigcup_{i=1}^k E'_i)$.
 \label{step:red}
\STATE Let $M_k = \{ (a,p) | (C_p^a,C_a^p) \in H'_k \}$.
\ENDFOR
\STATE Return $M_r$.
\end{algorithmic}
\caption{Laminar CRMM}
\label{algo:main}
\end{algorithm}

\begin{lemma}\label{lem:no-LR-del}
Any edge between $C_a^p$ and $C_p^a$ in $H_k(f_k)$ is of the form $S_kS_k$, $T_kT_k$ or $U_kU_k$, irrespective of
its direction in $H_k(f_k)$.
Hence an edge between $L$ and $R$ is never deleted during the course of the algorithm.
\end{lemma}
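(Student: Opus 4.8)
The plan is to prove the stronger statement that every $L$-$R$ edge, i.e.\ every edge between some $C_a^p$ and its partner $C_p^a$, has both endpoints in the \emph{same} one of the three parts $S_k, T_k, U_k$; the claim that such an edge is never deleted then follows at once. First I would record the local structure at the two leaf nodes. By the construction of $H_0$, the node $C_a^p\in L$ has a unique predecessor (its parent in $\mathcal{T}_a$, via a unit-capacity edge) and, since it represents the single post $p$, the unique successor $C_p^a$ through the unit-capacity edge $(C_a^p,C_p^a)$; dually $C_p^a\in R$ has the unique predecessor $C_a^p$ and a unique successor (its parent in $\mathcal{T}_p$). Flow conservation at these two nodes therefore forces the parent edge of $C_a^p$, the edge $(C_a^p,C_p^a)$, and the parent edge of $C_p^a$ all to carry the same amount of flow, which is either $0$ or $1$.

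Next I would split into two cases according to this common flow value, which fixes the orientation of the $L$-$R$ edge in the residual network $H_k(f_k)$. If the flow is $0$, the forward arc $C_a^p\to C_p^a$ is the \emph{only} residual arc entering $C_p^a$, so $C_p^a$ is reachable from $s$ iff $C_a^p$ is, i.e.\ $C_a^p$ and $C_p^a$ agree on membership in $S_k$; dually $C_a^p\to C_p^a$ is the only residual arc leaving $C_a^p$, so they agree on membership in $T_k$. If the flow is $1$, all three edges are saturated and the residual arc between the leaves is the reverse arc $C_p^a\to C_a^p$, which is now the unique residual arc entering $C_a^p$ and the unique residual arc leaving $C_p^a$; the same reasoning again gives agreement on $S_k$ and on $T_k$. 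In either case, agreeing on both $S_k$ and $T_k$ forces agreement on $U_k=V\setminus(S_k\cup T_k)$, so the edge is of type $S_kS_k$, $T_kT_k$ or $U_kU_k$. Equivalently, one may phrase this via Lemma~\ref{lem:for-rev-flow} applied to the two min-cuts $(S_k,T_k\cup U_k)$ and $(S_k\cup U_k,T_k)$, the latter being a min-cut by the symmetric form of Lemma~\ref{lem:diffcuts}, to rule out every way the $L$-$R$ edge could cross either cut.

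I expect the only delicate point to be the saturated case. There the standard observation ``no residual arc leaves $S_k$'' does not by itself place $C_a^p$ and $C_p^a$ in the same part, because the $L$-$R$ arc is reversed and points \emph{into} $C_a^p$; this is exactly where the degree and flow-conservation bookkeeping at the leaves is needed, to guarantee that each leaf has a single relevant residual in-neighbor and a single residual out-neighbor so that reachability transfers across the edge in both directions.

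Finally, the consequence follows directly. The only edge deletions the algorithm performs are in Step~\ref{step:EE-EU}, which removes residual arcs of the crossing form $(T_k\cup U_k,S_k)$, and in Step~\ref{step:del-higher-rank}, which removes edges of rank strictly greater than $k$ from the bipartite graph before they are ever inserted as $L$-$R$ edges. Since every $L$-$R$ edge present in $H_k(f_k)$ has both endpoints in the same part, it is never of the form $(T_k\cup U_k,S_k)$ and hence is untouched by Step~\ref{step:EE-EU}; and, being of rank at most $k$, it is not removed by Step~\ref{step:del-higher-rank}. Therefore no $L$-$R$ edge is ever deleted during the course of the algorithm.
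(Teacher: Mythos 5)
Your proof is correct and takes essentially the same approach as the paper's: a case split on whether the edge $(C_a^p,C_p^a)$ carries flow, using the fact that in either case each of the two leaf nodes has a unique residual in-arc and out-arc, so that reachability from $s$ and reachability to $t$ transfer across the edge in both directions, forcing both endpoints into the same one of $S_k$, $T_k$, $U_k$. The differences are only presentational (your explicit flow-conservation bookkeeping, the side remark about the two min-cuts, and the spelled-out deletion consequence), so nothing further is needed.
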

\begin{proof}
Let $e=(C_a^p, C_p^a)$ be an edge in $H_k$. 
Recall that this is the only outgoing edge for $C_a^p$ and only incoming edge for $C_p^a$ in $H_k$.
Also, $C_a^p$ has an incoming edge of capacity $1$ from its parent and $C_p^a$ has an outgoing edge with capacity
$1$ to its parent.

{\em Case $1$:} Edge $e$ does not carry a flow in $f_k$. Then $C_a^p$ and $C_p^a$ do not receive any flow. In $H_k(f_k)$, $e$ retains
its direction. Thus if $C_a^p$ is in $S_k$, so is $C_p^a$. Conversely, if $C_p^a$ is in $S_k$, then $C_a^p$ has to be in $S_k$, since $C_p^a$
has no other incoming edge, and hence the path from $s$ to $C_p^a$ must use the edge $e$. Similarly, $C_a^p$ is in $T_k$ if and only
if $C_p^a$ is in $T_k$. If $C_a^p$ is in $U_k$, then by the same argument as above, $C_p^a$ can not be in $S_k$ or $T_k$ and hence must be
in $U_k$. 

{\em Case $2$:} Edge $e$ carries a flow of $1$ unit in $f_k$. Then the direction of $e$ is reversed in $H_k(f_k)$, thus $(C_p^a,C_a^p)$ is in
$H_k(f_k)$. Similarly, the direction of the edge to $C_a^p$ from its parent and of the edge from $C_p^a$ to its parent is also reversed.
Thus, both $C_a^p$ and $C_p^a$ still have only one incoming and one outgoing edge in $H_k(f_k)$. Now, if $C_a^p$ is in $S_k$, the only path
possible from $s$ to $C_a^p$ has to be through $C_p^a$ and hence $C_p^a$ must be in $S_k$. Conversely, if $C_p^a$ is in $S_k$, so is
$C_a^p$ since $(C_p^a,C_a^p)\in H_k(f_k)$. An analogous argument holds for containment in $T_k$, and hence in $U_k$ as well.
\end{proof}

\begin{corollary}\label{cor:del}
For every edge $(C_a^p,C_p^a)$ in $H_k$ that carries flow unit flow in $f_k$, either one edge on the path from $s$ to $C_a^p$ in $H_k$
or an edge on the path from $C_p^a$ to $t$ in $H_k$, but not both, is deleted in the $k$-th iteration of Algorithm~\ref{algo:main}.
\end{corollary}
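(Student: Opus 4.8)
The plan is to analyze the entire directed $s$-$t$ path obtained by concatenating the path $P_1$ from $s$ down to $C_a^p$ in the applicant-tree, the edge $e=(C_a^p,C_p^a)$, and the path $P_2$ from $C_p^a$ up to $t$ in the post-tree; call this path $Q$. The first step is to show that every edge of $Q$ carries positive flow under $f_k$. For $e$ this holds by hypothesis. For $P_1$ I would argue by flow conservation moving upward from $C_a^p$: since $C_a^p$ sends one unit along $e$, conservation forces it to receive one unit along its \emph{unique} incoming (parent) edge, and inductively each node on $P_1$ must pull at least the positive flow it forwards to the next node of $P_1$ through its single parent edge, so every edge of $P_1$ carries positive flow. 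The symmetric argument, using that each post-tree node has a unique outgoing (toward-parent) edge, shows the same for $P_2$.

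Next I would invoke Lemma~\ref{lem:for-rev-flow} with the min-cut $(S_k,T_k\cup U_k)$ supplied by Lemma~\ref{lem:diffcuts}. Since reverse edges of a min-cut carry zero flow, no edge of $Q$ can be a reverse edge, i.e.\ no edge of $Q$ runs from $T_k\cup U_k$ into $S_k$. Hence $Q$ is \emph{monotone} across the cut: it starts at $s\in S_k$ and ends at $t\in T_k$, so it must leave $S_k$ at least once and, having no reverse edges, can never return; thus exactly one edge of $Q$ crosses from $S_k$ to $T_k\cup U_k$. By Lemma~\ref{lem:no-LR-del} both endpoints of $e$ lie in the same one of $S_k,T_k,U_k$, so $e$ is \emph{not} the crossing edge, and the unique crossing edge therefore lies on $P_1$ or on $P_2$ but not on both.

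It then remains to identify the crossing edge as the one removed in Step~\ref{step:EE-EU} and to verify that no other edge of $Q$ is removed. The crossing edge is a forward edge of the min-cut, hence saturated by $f_k$ (Lemma~\ref{lem:for-rev-flow}), so its residual back-edge points from $T_k\cup U_k$ into $S_k$ and is deleted in Step~\ref{step:EE-EU}; this is precisely what the paper terms deleting a forward edge of the min-cut. Every other edge of $Q$ has both endpoints on the same side of the cut (both in $S_k$, or both in $T_k\cup U_k$), so neither its original copy nor its residual copy has the form $(T_k\cup U_k,S_k)$ and it survives. Step~\ref{step:del-higher-rank} deletes only higher-rank $L$--$R$ edges, none of which lie on $P_1$ or $P_2$, and Lemma~\ref{lem:no-LR-del} guarantees $e$ itself is never deleted. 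This yields exactly one deleted edge, lying on one of the two paths but not both.

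I expect the main obstacle to be the bookkeeping in the last step: translating the operational rule of Step~\ref{step:EE-EU}, which deletes $(T_k\cup U_k,S_k)$ edges of the \emph{residual} network $H_k(f_k)$, into a clean statement about which original tree edges of $Q$ disappear --- in particular making precise that a saturated forward cut-edge is removed through its residual back-edge while same-side edges never contribute a $(T_k\cup U_k,S_k)$ residual edge. Some care is also needed for edges of capacity larger than one arising from class quotas, where ``saturated'' must be read as carrying full capacity rather than a single unit; the argument is unaffected, since it relies only on positivity of flow together with the forward/reverse dichotomy of Lemma~\ref{lem:for-rev-flow}.
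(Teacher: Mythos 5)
Your proof is correct, but it takes a genuinely different route from the paper's. The paper proves the corollary by a case analysis driven by Lemma~\ref{lem:no-LR-del}: if both endpoints of $(C_a^p,C_p^a)$ lie in $S_k$, the converse part of Lemma~\ref{lem:analog-O-U-app} exhibits a saturated forward cut edge (hence a deletion) on the post-side path, while the applicant-side path, having carried flow, has all of its vertices in $S_k$ (its residual reversals link them to $C_a^p$), so it loses no edge; the case where both endpoints lie in $T_k\cup U_k$ is handled symmetrically via Lemma~\ref{lem:O-U-app}. You instead give a single unified argument: flow conservation along the unique parent edges forces positive flow on the entire $s$-$t$ path $Q$ through $e$; Lemma~\ref{lem:for-rev-flow} then forbids any edge of $Q$ from being a reverse edge of the min-cut $(S_k,T_k\cup U_k)$ of Lemma~\ref{lem:diffcuts}, so $Q$ crosses the cut exactly once; Lemma~\ref{lem:no-LR-del} rules out $e$ as the crossing edge; and the crossing edge, being a saturated forward cut edge, is precisely the edge removed in Step~\ref{step:EE-EU}, while every same-side edge of $Q$ survives (and Step~\ref{step:del-higher-rank} touches no tree edge). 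Your version avoids the case split and bypasses Lemmas~\ref{lem:O-U-app} and~\ref{lem:analog-O-U-app} entirely, and it yields a slightly stronger conclusion --- exactly one edge of the whole path is deleted, not merely ``one side but not both'' --- which is the form implicitly used in the counting argument of Lemma~\ref{lem:signature}. What the paper's route buys is economy given its existing machinery: the two tree lemmas are established anyway and reused elsewhere (e.g.\ in Section~\ref{sec:pop}), so the corollary follows from them in a few lines, whereas your argument is self-contained, relying only on the generic flow facts plus the tree structure of $H_k$.
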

\begin{proof}
By Lemma \ref{lem:no-LR-del}, each edge $(C_a^p,C_p^a)$ has both its end-point in the same set i.e. $S$, $U$, or $T$. If both the end-points
are in $S$, by Lemma~\ref{lem:analog-O-U-app}, an edge on the path from $C_p^a$ to $t$ is deleted in Step $7$ of the algorithm.
We argue that
no edge on the path from $s$ to $C_a^p$ gets deleted.
Let $\rho_A$ be the path from $s$ to $C_a^p$ that carried flow in $H_k$. Then every edge on the path $\rho_A$ is reversed in $H_k(f_k)$
and because $C_a^p \in S$, every vertex on $\rho_{A}$ also belongs to $S$. This implies that no edge on the path $\rho_A$ gets
deleted.

If both the end-points
are in $U$ or $T$, by Lemma \ref{lem:O-U-app}, an edge on the path from $s$ to $C_a^p$ is saturated and hence deleted in Step $7$ of the algorithm. An argument similar to above shows that no edge on the path from $C_a^p$ to $t$ gets deleted in this case.
\end{proof}

\subsection{Rank-maximality of the output}
To prove correctness, we consider flow networks $\tH_i= (V,F_0 \cup \{ (C_a^p,C_p^a) \mid (a,p) \in \bigcup_{j \leq i} E_j \})$
and first establish a one-to-one correspondence between matchings in $G_i$ and
flows in $\tH_i$. With an abuse of notation, we call an edge $(C^p_a,C^a_p)$ in any flow network $H$ a rank $k$ edge
if the corresponding edge $(a,p)$ in $G$ has rank $k$. Also, we refer to directed edges from leaves in the applicant-tree to leaves in the post-tree 
as {\em $L$-$R$ edges} and directed edges from leaves in the post-tree to leaves in the applicant-tree as {\em $R$-$L$ edges}.
In the following lemma, we establish a correspondence between matchings in $G_i$ and flows in $\tH_i$. 
\begin{lemma}\label{lem:flow-mat-corr}
For every feasible matching $M_i$ in $G_i$, there is a corresponding feasible flow $\tf_i$ in $\tH_i$ and vice versa. Moreover, the edges present in $M_i$
are precisely the $L$-$R$ edges in $\tH_i$ that carry one unit flow in $\tf_i$ and hence appear as $R$-$L$ edges in the residual network
$\tH_i(\tf_i)$.
\end{lemma}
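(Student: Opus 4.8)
The plan is to exhibit the correspondence explicitly: I route each matched edge along a unique $s$-$t$ path and, conversely, read off a matching from the $L$-$R$ edges that carry flow. The heart of the argument is that an integral flow in $\tH_i$ is completely determined by its values on the $L$-$R$ edges, so that the capacity constraints of $\tH_i$ translate verbatim into the feasibility constraints of Definition~\ref{defn:feasible}. Throughout I work with integral flows, which is without loss of generality since all capacities of $\tH_i$ are integers.

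First I would treat the direction from matchings to flows. Given a feasible matching $M_i$ in $G_i$, for each edge $(a,p)\in M_i$ I send one unit of flow along the path that starts at $s$, descends the applicant-tree $\mathcal{T}_a$ from its root $C_a^*$ to the leaf $C_a^p$, crosses the $L$-$R$ edge $(C_a^p,C_p^a)$, ascends the post-tree $\mathcal{T}_p$ from the leaf $C_p^a$ to its root $C_p^*$, and reaches $t$. Summing these unit paths over all of $M_i$ yields $\tf_i$; flow conservation is automatic because each matched edge contributes a complete $s$-$t$ path. The capacity checks are then a direct bookkeeping exercise: the flow on $(s,C_a^*)$ equals $|M(a)|\le q(a)$, the flow on $(C_p^*,t)$ equals $|M(p)|\le q(p)$, and for a class-node $C$ of $\mathcal{T}_a$ (respectively $\mathcal{T}_p$) the flow on the tree-edge entering (respectively leaving) $C$ equals $|M(a)\cap C|$ (respectively $|M(p)\cap C|$), which is at most $q(C)$ by feasibility. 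The $L$-$R$ edges carry flow $0$ or $1$ and so respect their unit capacity.

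For the converse I would fix an integral feasible flow $\tf_i$ and set $M_i=\{(a,p)\mid \tf_i(C_a^p,C_p^a)=1\}$. Since each $L$-$R$ edge has unit capacity, $M_i$ is a well-defined set of edges. The key structural observation is that each non-root node of an applicant- or post-tree has a unique incoming (respectively outgoing) tree-edge, so flow conservation forces the flow on the tree-edge above any class-node $C$ to equal the sum of the $L$-$R$ flows over the leaves in the subtree of $C$; by Definition~\ref{def:classtree} these leaves are exactly the singletons $C_u^w$ with $w\in C$. Hence that flow equals $|M(a)\cap C|$ (or $|M(p)\cap C|$), and the capacity bound $q(C)$ yields class-feasibility of $M_i$; taking $C=C_a^*$ and $C=C_p^*$ recovers the quota bounds $q(a)$ and $q(p)$. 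This shows $M_i$ is feasible and, since the two constructions act as identities on the $L$-$R$ flow values, that they are mutually inverse.

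The only real obstacle is the subtree-to-class identification underlying both directions, namely that the leaves in the subtree of a class-node $C$ in $\mathcal{T}_u$ are precisely the singleton classes $C_u^w$ with $w\in C$; once this is established, the capacity-versus-feasibility dictionary is exact. Finally, since a saturated $L$-$R$ edge $(C_a^p,C_p^a)$ appears reversed as the $R$-$L$ edge $(C_p^a,C_a^p)$ in the residual network $\tH_i(\tf_i)$, the edges of $M_i$ are exactly the $R$-$L$ edges of $\tH_i(\tf_i)$, as claimed.
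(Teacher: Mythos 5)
Your proof is correct and follows essentially the same approach as the paper: route one unit of flow along the unique $s$--$t$ path through $C_a^p$ and $C_p^a$ for each matched edge, read the matching back off the saturated $L$-$R$ edges, and match capacities of tree edges to class quotas via the subtree-leaves-equal-class-members identification. The only difference is that you spell out the flow-to-matching direction (which the paper dismisses as ``straightforward to verify'') via the flow-conservation argument on the trees, which is a welcome but not structurally different elaboration.
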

\begin{proof}
Let $g_i$ denote a flow in the network $X_i$. Let  $M_i = \{ (a,p) \mid g_i(C_a^p,C_p^a) = 1\}$ be the corresponding matching
constructed using $\tf_i$. It is straightforward to verify that the matching $M_i$ respects the vertex and the class
capacities due to the construction of our flow network.

To prove the other direction let $M_i$ be any feasible matching in $G_i$.  Construct $\tf_i$ as follows:
Start with a flow function $\tf_i$ which assigns
every edge in $\tH_i$ a zero flow. For every edge $(a, p)$ in $M_i$, consider the 
unique path $\rho = \langle s,C_a^*,\ldots,C_a^p,C_p^a,\ldots,C_p^*,t\rangle$ in $\tH_i$. For every edge $e \in \rho$,
increment the flow $\tf_i(e)$ by one. We argue that $\tf_i$ is feasible in $\tH_i$. 
For any class node $C_p^u$, the matching $M$ assigns  $|M(C_p^u)|$ applicants to the class.
Thus the edge $(C_p^u, C_p^v)$ belongs to exactly $|M(C_p^u)|$ such paths. Here $C_p^v$ is the parent of $C_p^u$ in $\mathcal{T}_p$.
Therefore, $\tf_i(C_p^u, C_p^v) = |M(C_p^u)| \le q(C_p^u)$.
Since this holds for class vertex, we conclude that $\tf_i$ is a feasible flow in $\tH_i$.
\end{proof}

We define signature of a flow to be the signature of the corresponding matching in $G$.
\begin{definition}[Rank-maximal flow]
We call a flow $\tf_i$ in a network $\tH_i$ to be {\em rank-maximal} if the corresponding matching $M_i$ is rank-maximal in $G_i$.
\end{definition}
Thus $\tf_i$ is a rank-maximal flow in $\tH_i$ if it uses the maximum number of rank $1$ edges, subject to that, maximum number of rank $2$
edges and so on. 
By flow-decomposition theorem (see e.g. \cite{AMO93}), a flow $\tf_i$ in $\tH_i$ can be decomposed into flow on $s-t$ paths, such that each path uses
exactly one $L$-$R$ edge. Thus, based on the ranks of the $L$-$R$ edges used, $\tf_i$ can be decomposed into flows $\tf_i^1,\ldots,\tf_i^i$ such that,
for each $j$: $1 \le j \le i$, $\tf_i^j$
uses paths only through $L$-$R$ edges of rank $j$. Thus $\tf_i=\tf_i^1+\ldots +\tf_i^i$. We call $\tf_i^j$ to be the $j$th component of $\tf_i$.


%

\begin{lemma}\label{lem:induction}
Suppose, for each $j\leq i$, the $j$th component $\tf_i^j$ of every rank-maximal flow $\tf_i$ in $\tH_i$ is a max-flow in $H_j$. Then the $(i+1)st$ component
$\tf_{i+1}^{i+1}$ of any rank-maximal flow $\tf_{i+1}$ in $\tH_{i+1}$ is a max-flow in $H_{i+1}$.
\end{lemma}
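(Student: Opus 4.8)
The plan is to prove the statement by showing two things about the $(i+1)$st component $\tf_{i+1}^{i+1}$: that it yields a feasible flow in $H_{i+1}$, and that its value equals the maximum flow value of $H_{i+1}$. First I would gain control of the lower components. Write $g=\tf_{i+1}^1+\cdots+\tf_{i+1}^i$ for the restriction of $\tf_{i+1}$ to ranks at most $i$, and let $(s_1,\ldots,s_{i+1})$ be the signature of the rank-maximal matching underlying $\tf_{i+1}$. By the standard prefix property of rank-maximal signatures (a rank-maximal matching of $G_{i+1}$ restricted to ranks at most $i$ is rank-maximal in $G_i$, since otherwise a rank-maximal matching of $G_i$, viewed in $G_{i+1}$, would dominate it), the flow $g$ is a rank-maximal flow in $\tH_i$ and its $j$th component is exactly $\tf_{i+1}^j$. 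Applying the induction hypothesis to $g$, each $\tf_{i+1}^j$ with $j\le i$ is a max-flow in $H_j$; in particular $\tf_{i+1}^i$ is a max-flow in $H_i$, so by Lemma~\ref{lem:invariant} it induces the very partition $S_i,T_i,U_i$ that the algorithm uses in Steps~\ref{step:EE-EU}--\ref{step:del-higher-rank} to build $H_{i+1}$.

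The heart of the argument, and the step I expect to be the main obstacle, is showing that $\tf_{i+1}^{i+1}$ never routes flow through an edge that Step~\ref{step:del-higher-rank} removed; equivalently, that every rank-$(i+1)$ edge $(C_a^p,C_p^a)$ carrying flow in $\tf_{i+1}^{i+1}$ satisfies $C_a^p\in S_i$ and $C_p^a\in T_i$. I would argue by contradiction on the applicant side. Suppose $C_a^p\in T_i\cup U_i$. By the converse part of Lemma~\ref{lem:O-U-app} there is an ancestor $C_a^{i'}$ of $C_a^p$ with $C_a^{i'}\in T_i\cup U_i$ whose predecessor lies in $S_i$, and whose unique incoming edge is saturated (to capacity $q(C_a^{i'})$) by every max-flow of $H_i$, in particular by $\tf_{i+1}^i$. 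Since the rank components are nonnegative and sum to $\tf_{i+1}$, this incoming edge is already saturated by $\tf_{i+1}$ alone, leaving no room for the unit that the rank-$(i+1)$ path of $\tf_{i+1}^{i+1}$ must push through this same (unique) incoming edge on its descent from $s$ to $C_a^p$ -- contradicting feasibility of $\tf_{i+1}$ in $\tH_{i+1}$. The symmetric claim $C_p^a\in T_i$ follows identically from Lemma~\ref{lem:analog-O-U-app}, using the unique outgoing edge of the offending ancestor in the post-tree. Hence every rank-$(i+1)$ edge used by $\tf_{i+1}^{i+1}$ survives Step~\ref{step:del-higher-rank} into $H_{i+1}$; and since $C_a^p\in S_i$ is reachable from $s$ and $C_p^a\in T_i$ can reach $t$ in the residual structure carried into $H_{i+1}$, routing these $L$-$R$ edges (which by Lemma~\ref{lem:no-LR-del} are themselves never deleted) produces a feasible flow in $H_{i+1}$ of value $s_{i+1}$.

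It remains to show this value is maximum. The value $s_{i+1}$ of $\tf_{i+1}^{i+1}$ is the number of rank-$(i+1)$ edges of the rank-maximal matching. Suppose $H_{i+1}$ admitted a flow of value strictly larger than $s_{i+1}$. By Corollary~\ref{cor:del} and the purpose of the forward/reverse min-cut deletions in Step~\ref{step:EE-EU} (which prevent any augmentation in $H_{i+1}$ from reversing a matched edge of rank at most $i$), the matching read off from such a flow would still have first $i$ coordinates $(s_1,\ldots,s_i)$ while its $(i+1)$st coordinate would exceed $s_{i+1}$. This matching is feasible in $G_{i+1}$ by the flow--matching correspondence, and its signature would be $\succ(s_1,\ldots,s_{i+1})$, contradicting the rank-maximality of $\tf_{i+1}$. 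Therefore the maximum flow value of $H_{i+1}$ is exactly $s_{i+1}$, which $\tf_{i+1}^{i+1}$ attains, so $\tf_{i+1}^{i+1}$ is a max-flow in $H_{i+1}$. The delicate point throughout is the second paragraph: tying the rank-$(i+1)$ component's use of a tree edge to the min-cut saturation forced by the lower-rank max-flow, which is exactly what rules out the deleted edges.
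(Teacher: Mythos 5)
Your overall skeleton is the paper's: restrict $\tf_{i+1}$ to ranks at most $i$, note this prefix is rank-maximal in $\tH_i$, apply the hypothesis to make each $\tf_{i+1}^j$ a max-flow in $H_j$, and obtain maximality at the end by a rank-maximality contradiction. The genuine gap is in your central step, and it is one of coverage. Steps~\ref{step:EE-EU} and~\ref{step:del-higher-rank} of Algorithm~\ref{algo:main} delete edges at \emph{every} iteration $k\le i$, each time with respect to that iteration's partition $(S_k,T_k,U_k)$; an edge survives into $H_{i+1}$ only if it escapes all $i$ rounds of deletion. So your claimed equivalence --- ``$\tf_{i+1}^{i+1}$ avoids deleted edges iff $C_a^p\in S_i$ and $C_p^a\in T_i$'' --- is false as stated: a rank-$(i+1)$ edge used by $\tf_{i+1}^{i+1}$, or a tree edge on one of its flow paths, may already have been deleted at some iteration $k<i$ because $C_a^p\in T_k\cup U_k$ or $C_p^a\in S_k\cup U_k$ held then, and you cannot transfer membership in $S_i$ backwards to $S_k$ without a monotonicity lemma (``once in $T\cup U$, always in $T\cup U$'') that the paper never proves. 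The paper's proof avoids this: it observes that every edge on every flow path of $\tf_{i+1}^{i+1}$ must be unsaturated by the prefix $\tf_{i+1}^1+\cdots+\tf_{i+1}^i$ (else the sum violates capacities), and deduces that all applicant-side nodes of these paths lie in $S_j$ and all post-side nodes in $T_j$ for \emph{every} $j\le i$, which rules out both kinds of deletion in all iterations at once. Your saturation contradiction could be repaired by running it for each $k\le i$ (the hypothesis does make $\tf_{i+1}^k$ a max-flow in $H_k$) and by applying it to the tree edges of the paths as well, not only to the $L$-$R$ edge, but as written it does neither.

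The second missing piece is the capacity bookkeeping to which the paper devotes its second paragraph. For $k\ge 2$ the network $H_k$ is built on the residual graph of the previous iteration (Steps~\ref{step:construct} and~\ref{step:red}), so its capacities are residual capacities determined by the algorithm's flows $f_1,\ldots,f_{k-1}$, which need not coincide edge-by-edge with your components $\tf_{i+1}^1,\ldots,\tf_{i+1}^{k-1}$ (distinct max-flows can load individual non-cut edges differently). This undermines two of your steps. First, ``saturated (to capacity $q(C_a^{i'})$)'' conflates the original quota with the capacity of that edge in $H_i$: saturation in $H_i$ only gives that $\tf_{i+1}^i$ fills the \emph{residual} capacity of the edge, and to conclude that the total $\tf_{i+1}$ overflows $q(C_a^{i'})$ in $\tH_{i+1}$ you additionally need that your lower components place at least as much flow on that edge as the algorithm's flows did --- exactly the consistency statement the paper proves and you omit. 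Second, your feasibility conclusion rests on per-leaf reachability, which is a one-unit statement; it does not show that the tree edges on the paths of $\tf_{i+1}^{i+1}$ are present in $H_{i+1}$ with enough residual capacity to carry all $s_{i+1}$ units simultaneously. The paper's consistency argument (each edge's residual capacity w.r.t.\ the prefix in $\tH_i$ equals its capacity in $H_{i+1}$) is what lets feasibility of $\tf_{i+1}$ in $\tH_{i+1}$ transfer to feasibility of $\tf_{i+1}^{i+1}$ in $H_{i+1}$; without it, neither your contradiction nor your final claim closes.
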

\begin{proof}
The statement clearly holds for $i=1$, since $H_1$ is same as $\tH_1$. Now assume the statement for all $j\leq i<r$. We will prove it for $i+1$.
Moreover, by the definition of rank-maximal flow, $\tf_{i+1}^1+\ldots+\tf_{i+1}^i$ is a rank-maximal flow in $\tH_i$, call it $\tf_i$.

Let $e$ be an edge with residual capacity $c>0$ in $\tH_i$ when the flow $\tf_i$ is set up in $\tH_i$. We show that $e$ has the same
residual capacity in $H_i(\tf_{i+1}^i)$, and hence in $H_{i+1}$. This clearly holds in $H_1(\tf_{i+1}^1)$ since $H_1$ and $\tH_1$ are the
same networks. Inductively, each $\tf_{i+1}^j$ is a flow in $H_j$ for $1\leq j<i$ and hence the same amount of flow is sent through $e$ in
$\tH_j$ as the total flow sent in $H_1,\ldots,H_j$. Hence the residual capacity of $e$ is the same in $\tH_i(\tf_{i+1}^i)$ as in $H_i(\tf_{i+1}^i)$.

Consider a path $\rho$ in $\tH_{i+1}$ that carries a flow of one unit from $\tf_{i+1}^{i+1}$. Let $e_{\rho}$ be
the rank $i+1$ $L$-$R$ edge on $\rho$. Moreover $\rho_A$ and $\rho_P$ be the subpaths of $\rho$ from $s$ to the leaf node in applicant-tree
and from the leaf node to $t$ in the post-tree.

Every edge $e$ on $\rho$ must be unsaturated by $\tf_{i+1}^1+\ldots+\tf_{i+1}^i$. If this is not the case, then $\tf_{i+1}^{i+1}$
can not be routed through $e$ without reducing some flow from $\tf_{i+1}^1+\ldots+\tf_{i+1}^i$ and the resulting flow will not be rank-maximal.
Since each $\tf_{i+1}^j$ for $1\leq j\leq i$ is a max-flow in $H_j$, and all the edges on $\rho_A$ and $\rho_P$ are unsaturated in
each of the flows, every node on $\rho_A$ is in $S$ and each node on $\rho_P$ is in $T$ in each of the first $i$ iterations of the algorithm. Thus no edge of $\rho_A$ or $\rho_P$ is deleted from $H_j$ in the $j$th iteration of the 
algorithm for any $1\leq j\leq i$, and also, $e_\rho$ is not deleted in Step $7$ in any iteration.

Thus, in the flow-decomposition of $\tf_{i+1}$, every path that carries some flow along a rank $i+1$ edge, is also present in $H_{i+1}$. Moreover,
if $c$ such paths pass through an edge $e$, then as proved above, $e$ has a capacity $c$ in $H_{i+1}$.
Hence $\tf_{i+1}^{i+1}$ is a valid flow in $H_{i+1}$. It has to be a max-flow in $H_{i+1}$, otherwise $\tf_{i+1}$ will not be a rank-maximal flow in
$\tH_{i+1}$.
\end{proof}

\begin{lemma}\label{lem:signature}
Define $Y_i$ as the set of $R$-$L$ edges in $H'_i$. For every $i,j, j>i$, the number of edges of rank at most $i$ is the same in
$Y_i$ and $Y_j$. 
\end{lemma}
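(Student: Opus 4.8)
The plan is to translate the statement about the reduced residual networks $H'_k$ into one about the matchings they encode, and then reduce it to a statement about individual ranks. By Lemma~\ref{lem:flow-mat-corr}, the $R$-$L$ edges of $H'_k$ are exactly the edges of the matching $M_k$, so $Y_k = M_k$ and the number of rank-$\le i$ edges in $Y_k$ is simply the number of rank-$\le i$ edges of $M_k$. I would prove the stronger claim that, for each fixed rank $\ell$, the number of rank-$\ell$ edges in $M_k$ equals the value $|f_\ell|$ of a max-flow in the network $H_\ell$ built in iteration $\ell$, for every $k \ge \ell$. Since $|f_\ell|$ depends only on $\ell$ and not on $k$, summing over $\ell = 1,\ldots,i$ gives that the number of rank-$\le i$ edges in $Y_k$ equals $\sum_{\ell=1}^{i}|f_\ell|$ for every $k \ge i$; in particular this common value is attained both at $k=i$ and at $k=j$, which is exactly the assertion.

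To establish the per-rank claim I would pass through the flow correspondence. Let $g_k$ denote the flow in $\tH_k$ that corresponds to $M_k$ under Lemma~\ref{lem:flow-mat-corr}, and decompose it as $g_k = g_k^1 + \cdots + g_k^k$ into its rank components, so that the value of $g_k^\ell$ is precisely the number of rank-$\ell$ edges in $M_k$. The heart of the argument is to show that each component $g_k^\ell$ is a max-flow in $H_\ell$, for all $\ell \le k$; granting this, its value is $|f_\ell|$ and the per-rank claim follows. This is exactly the conclusion that Lemma~\ref{lem:induction} delivers for the components of a rank-maximal flow in $\tH_k$, so the remaining task is to argue that the flow $g_k$ assembled by the algorithm has this component-wise max-flow property — equivalently, that the iteratively computed max-flows $f_1,\ldots,f_k$ combine, level by level, into the components of $g_k$. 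Here I would use that after iteration $\ell$ the reduced network $H'_\ell$ admits no $s$-$t$ path (the deletions in Steps~\ref{step:EE-EU}--\ref{step:del-higher-rank} only remove edges and so cannot create one), whence the rank-$\le\ell$ part of the flow is already maximal and is carried, unchanged in value, into $H_{\ell+1}$.

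The delicate point — and the step I expect to be the main obstacle — is precisely that a later iteration $k > \ell$, while inserting rank-$k$ edges and augmenting along them, must not change the \emph{value} of the component $g^\ell$ in $H_\ell$, even though it may rearrange which individual rank-$\ell$ edges are used. An augmenting path in $H_k$ could in principle traverse the $R$-$L$ residual edge of a matched rank-$\le i$ edge and push it back; the example path $\rho_1$ in Section~\ref{sec:pseudocode} shows that, absent the deletions, such a path would strictly decrease the rank-$1$ count. The way to rule this out is to invoke Corollary~\ref{cor:del} together with Lemmas~\ref{lem:O-U-app} and~\ref{lem:analog-O-U-app}: for every matched edge of rank $\le\ell$, exactly one of the two tree-paths joining it to $s$ and to $t$ contains an edge that is saturated by every max-flow and is deleted in Step~\ref{step:EE-EU}, which blocks any later augmenting path from using that matched edge as an uncompensated pushback. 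Formalizing that this edge-by-edge blocking yields exact value-preservation of every lower-rank component, rather than merely non-degradation of the total, is the technical crux; it is here that the flow-independence of the decomposition $(S_f, T_f, U_f)$ from Lemma~\ref{lem:invariant} is what makes the deletions well-defined and identical across the max-flows one might compute.
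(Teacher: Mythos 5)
Your proposal assembles the right ingredients (Corollary~\ref{cor:del}, Lemmas~\ref{lem:O-U-app} and~\ref{lem:analog-O-U-app}, the flow-invariance of the decomposition) and correctly locates the difficulty, but it stops exactly where the proof has to begin: you declare "formalizing that this edge-by-edge blocking yields exact value-preservation of every lower-rank component" to be the technical crux and never carry it out. That crux is not a routine verification --- it \emph{is} the lemma. Worse, the scaffolding you propose to hang it on is circular: Lemma~\ref{lem:induction} speaks only about the components of a \emph{rank-maximal} flow in $\tH_k$, and at this point in the development nothing tells you that the flow $g_k$ assembled by the algorithm is rank-maximal --- that is precisely Lemma~\ref{lem:crmm}, whose proof in the paper invokes Lemma~\ref{lem:signature}. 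So "the remaining task ... that the iteratively computed max-flows combine, level by level, into the components of $g_k$" cannot be discharged by citing Lemma~\ref{lem:induction}; it needs an independent argument about what augmentation in iteration $k>i$ can do to the rank-$\le i$ edges, which is what your proposal is missing.

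The paper's proof supplies exactly that argument, and it is structural rather than value-based. By Corollary~\ref{cor:del}, every flow-carrying edge of rank $\le i$ is associated with exactly one "cut-off" node (a node, say $C_a^\beta$, whose edge to its parent was deleted), and by Step~\ref{step:del-higher-rank} no leaf in the subtree of such a node ever receives an edge of rank greater than $i$. Since the parent edge is gone, any augmenting path in a later iteration that touches this subtree is forced to enter through an $R$-$L$ edge at one leaf, climb to $C_a^\beta$, descend to another leaf, and exit through an $L$-$R$ edge; augmenting along it therefore swaps one $R$-$L$ edge in the subtree for another, so the \emph{count} of $R$-$L$ edges inside the subtree --- all of rank $\le i$ --- is exactly conserved, not merely non-degraded. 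Finally, a separate disjointness argument (there can be no residual path between an applicant-side cut-off node in $T_i\cup U_i$ and a post-side cut-off node in $S_i$, since any such path would contain a deleted $(T_i\cup U_i, S_i)$ edge) guarantees that no $R$-$L$ edge is counted in two subtrees, so the per-subtree conservation sums to the global statement. Without this enter-and-exit analysis of augmenting paths confined to severed subtrees, your plan reduces the lemma to an equivalent unproved claim rather than proving it.
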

\begin{proof}
By Corollary \ref{cor:del}, for each rank $i$ $L$-$R$ edge $(C_a^p,C_p^a)$ that carries a flow and hence becomes an $R$-$L$ edge in $H'_i$, 
either an edge in the path from $s$ to $C_a^p$ or an edge on the path from $C_p^a$ to $t$ is deleted. Moreover, a node that loses the edge to or
from its parent in iteration $i$ never gets edges of rank more than $i$ on any leaf node in its subtree. Without loss of generality, let $C_a^\beta$ be such a node where $a$ is an applicant and $\beta$ is one of the classes of $a$'s classification. Then every augmenting path $\rho$ in the subsequent 
iterations that involve $C_a^\beta$ is of the form $\langle s,\ldots, C_{p'}^a, C_a^{p'} \ldots, C_a^\beta, C_a^{p''}, C_{p''}^a, \ldots, t \rangle$. That is, every augmenting path involving $C_a^\beta$ goes from $s$ to a leaf
in the subtree of $C_a^\beta$ through an $R$-$L$ edge, then it goes to $C_a^\beta$, then to another leaf in its subtree and finally to $t$ through an $L$-$R$ edge incident on that leaf. Thus, augmentation along this path changes the $L$-$R$ edge to $R$-$L$ edge and vice versa, thereby maintaining
the number of $R$-$L$ edges in the subtree of $C_a^\beta$. Since no leaf in the subtree of $C_a^\beta$ has an edge of rank more than $i$ incident
on it, the number of $R$-$L$ edges of rank at most $i$ in the subtree of $C_a^\beta$ is also preserved.

Now it remains to prove that no $R$-$L$ edge of rank at most $i$ is counted twice in the above counting, once from the trees of each of its end-points. 
For this, we show that, if a node $C_a^\beta$ in the applicant-tree and a node $C_p^\alpha$ in the post-tree get the edge to their respective parent
deleted in the $i$th iteration, then there is no directed path between them that uses an edge between the leaves in their respective subtrees.
Thus, if there is an edge between leaf classes $C_a^p$ and $C_p^a$ respectively in the subtrees of $C_a^\beta$ and $C_p^\alpha$, it can not be
used by an augmenting path $\rho$ described above. This is because of the following:

The node $C_a^\beta$ must be in $T_i\cup U_i$ and $C_p^\alpha$ must be in $S_i$
since the edge between them and their respective parent was deleted in iteration $i$. Hence at the end of iteration $i$, there is no directed path
from $C_p^\alpha$ to $C_a^\beta$, otherwise $C_a^\beta$ would be in $S_i$. If there is a directed path from $C_a^\beta$ to $C_p^\alpha$ in $H_i(f_i)$, one of
the edges on that path must have been deleted, since the path is from a node in $T_i\cup U_i$ to a node in $S_i$, and hence an edge on
the path must have one end-point in $T_i\cup U_i$ and another end-point in $S_i$. Hence an augmenting path $\rho$ as described above can not
go directly from $C_a^\alpha$ to $C_p^\beta$ or the other way, without going through other applicant or post trees. Hence $\rho$ can not use
an $R$-$L$ or $L$-$R$ edge between the leaves in the subtrees of $C_a^\beta$ and $C_p^\alpha$.
This shows that the number of $R$-$L$ edges in $Y_i$ does not change in any subsequent iteration.
\end{proof}

Let $f_i$ be a max-flow in $H_i$ and $H_i(f_i)$ denote the corresponding residual network. Let $Y$ denote the set of $R$-$L$ edges
in $H_i(f_i)$. Corresponding to the $R$-$L$ edges in $Y$, we can set up a flow $\tf_i$ which is a feasible flow in $X_i$. To obtain such a flow, we start with every edge having $\tf_i(e) = 0$. Repeatedly select an unselected edge $e$ from $Y$. Let $\rho_e$
denote the 
unique $s-t$ path  in $\tH_i$ containing $e$. We increase the flow along every edge in $\rho_e$ by one unit.
Using arguments similar to Lemma~\ref{lem:flow-mat-corr} we conclude that $\tf_i$ is a feasible flow in $\tH_i$.
\REM{ 
\begin{lemma}\label{lem:sig}
Let $f_i$ be a max-flow in $H_i$. Set up a flow in $\tH_i$ by sending one unit of flow across every $L$-$R$ edge in $\tH_i$,
which is an $R$-$L$ edge in $H_i(f_i)$. Extend the flow in a natural way i.e. for each node in the applicant-tree, if the node has an outgoing flow 
of $c$ units, set up an incoming flow of $c$ units on its unique incoming edge. Similarly, for each node in the post-tree, if it has an incoming flow
of $c$ units, set up an outgoing flow of $c$ units on its unique outgoing edge. The flow $\tf_i$ set up this way is a valid flow in $\tH_i$. Moreover,
its signature is $(\sigma_1,\ldots,\sigma_i)$ where $\sigma_j, j\leq i$ is the number of $R$-$L$ edges of rank $j$ in $H_i(f_i)$.
\end{lemma}
\begin{proof}
We prove this by induction on $i$. For $i=1$, $\tH_1$ and $H_1$ are the same networks. Hence $f_1$ is a max-flow in $\tH_1$, which is also 
a rank-maximal flow. Assume the statement for $i$, we prove it for $i+1$. Consider the network $H_{i+1}$ with max-flow $f_{i+1}$. 
By Lemma \ref{lem:signature}, the number of $R$-$L$ edges of each rank $j$, $j\leq i$, is the same in $H_{i+1}$ as in $H_i$, which is $\sigma_j$ from
by assumption. So the flow set up in $\tH_i$ corresponding to $R$-$L$ edges of rank up to $i$ in $H_{i+1}$ is a valid flow. Now, $f_{i+1}$ respects the capacity of each edge in $H_{i+1}$, and hence the residual capacity of all the edges in $\tH_i$. Hence the flow set up using $R$-$L$ edges of 
rank $i+1$ is a valid flow in $H_{i+1}$ and the signature of the flow in $\tH_{i+1}$ corresponding to $f_{i+1}$ is $(\sigma_1,\ldots,\sigma_{i+1})$.
\end{proof}

We refer to the flow $\tf_i$ set up by the above procedure as {\em the flow corresponding to $f_i$}.
}
\begin{lemma}\label{lem:crmm}
For every $1\leq k\leq r$, the following hold:
\begin{enumerate}
\item For every rank-maximal flow $\tf_k=\tf_k^1+\ldots+\tf_k^k$ of $\tH_k$, $\tf_i$ is a max-flow in $H_i$ for $1\leq i\leq k$.
\item Conversely, the flow $\tf_k$ (constructed as above) corresponding to the $R$-$L$ edges of $H_k(f_k$)  is a rank-maximal flow in $\tH_k$. 
\end{enumerate}
\end{lemma}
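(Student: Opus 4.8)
The plan is to prove the two parts by a single induction on $k$: Part~1 supplies exactly the inductive hypothesis that Lemma~\ref{lem:induction} consumes, and Part~2 is then extracted from Part~1 together with Lemma~\ref{lem:signature}.

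For Part~1, the base case $k=1$ is immediate, since $\tH_1=H_1$ and a rank-maximal flow in $\tH_1$ is just a max-flow in $H_1$ (it has the single component $\tf_1^1$). For the step, I would fix a rank-maximal flow $\tf_k=\tf_k^1+\cdots+\tf_k^k$ in $\tH_k$ and first observe that its truncation $\tf_k^1+\cdots+\tf_k^{k-1}$ is rank-maximal in $\tH_{k-1}$: otherwise a flow in $\tH_{k-1}$ beating it on the first $k-1$ coordinates could be augmented by $\tf_k^k$ to beat $\tf_k$ lexicographically, contradicting rank-maximality of $\tf_k$. The induction hypothesis (Part~1 at level $k-1$) then makes each $\tf_k^i$, $i\le k-1$, a max-flow in $H_i$; and since this is precisely the premise of Lemma~\ref{lem:induction}, that lemma yields the last component $\tf_k^k$ as a max-flow in $H_k$, finishing Part~1.

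For Part~2, I would start from the construction preceding the lemma (the analogue of Lemma~\ref{lem:flow-mat-corr}), which already gives that the flow $\tf_k$ assembled from the $R$-$L$ edges of $H_k(f_k)$ is feasible in $\tH_k$ and has $j$th signature component $\sigma_j$ equal to the number of rank-$j$ $R$-$L$ edges in $H_k(f_k)$. The target is to show $(\sigma_1,\dots,\sigma_k)$ is the rank-maximal signature $(x_1,\dots,x_k)$ of $G_k$. First, Lemma~\ref{lem:signature} transports each count back: the number of rank-$j$ $R$-$L$ edges is the same in $H_k(f_k)$ as in $H_j(f_j)$. Because rank-$j$ $L$-$R$ edges are only inserted in iteration $j$, that number is exactly the number of rank-$j$ edges matched by $f_j$, which I would argue equals $|f_j|$: each augmenting path created in iteration $j$ crosses one fresh rank-$j$ edge while, by Corollary~\ref{cor:del} and Lemma~\ref{lem:signature}, the net count of every lower rank is conserved. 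Hence $\sigma_j=|f_j|$, the value of a max-flow in $H_j$. On the other side, applying Part~1 to any rank-maximal flow of $\tH_k$ shows its $j$th component is a max-flow in $H_j$ whose value is exactly $x_j$; thus $x_j=|f_j|=\sigma_j$ for all $j$, so $\tf_k$ attains the rank-maximal signature and is rank-maximal in $\tH_k$.

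The hard part will be the identity $\sigma_j=|f_j|$ in Part~2. The difficulty is that $H_j$ is a residual network carrying reversed edges from earlier iterations, so ``the value of a max-flow'' and ``the number of rank-$j$ $R$-$L$ edges'' are a priori distinct: a max-flow in $H_j$ could in principle spend capacity re-routing lower-rank edges rather than committing new rank-$j$ edges. Pinning these two quantities together is what forces the careful use of Corollary~\ref{cor:del} (each flow-carrying $L$-$R$ edge deletes exactly one edge on one of its two root-to-leaf paths, never both) and of Lemma~\ref{lem:signature} (lower-rank $R$-$L$ counts are preserved across iterations), to certify that augmentation in iteration $j$ conserves every lower-rank total and adds precisely one rank-$j$ edge per augmenting path. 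I would also be careful to apply Lemma~\ref{lem:signature} only in its forward-preserving direction and to justify the decomposition of each $\tf_k^i$ into unit $s$-$t$ paths using the unit capacities of the $L$-$R$ edges.
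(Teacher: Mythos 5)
Your proposal is correct and follows essentially the same route as the paper: induction on $k$, Part~1 obtained from the truncation claim together with Lemma~\ref{lem:induction}, and Part~2 obtained from Lemma~\ref{lem:signature} combined with equating max-flow values in the networks $H_j$ (you are in fact more explicit than the paper about the identification of the number of new rank-$j$ $R$-$L$ edges with $|f_j|$, a step the paper glosses over). One small repair: your justification of the truncation claim does not work as stated, since adding $\tf_k^k$ to a different flow in $\tH_{k-1}$ may violate capacities in $\tH_k$ --- but no augmentation is needed, because any flow that lexicographically beats the truncation on the first $k-1$ coordinates already lexicographically beats $\tf_k$ when viewed as a flow in $\tH_k$ with zero flow on rank-$k$ edges.
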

\begin{proof}
We prove this by induction on $k$. When $k=1$, $\tH_1$ and $H_1$ are the same networks. A rank-maximal flow $\tf_1$ in $\tH_1$ is 
just a max-flow in $\tH_1$ and hence in $H_1$. Algorithm~\ref{algo:main} also computes a max-flow in $H_1$. Hence both the statements
hold for $k=1$.

Assume the statements to be true for each $j\leq i$. We prove them for $i+1$. The first statement follows from Lemma \ref{lem:induction}.
We prove the second statement. By induction hypothesis, $\tf_i$ corresponding to $f_i$ is a rank-maximal flow in $\tH_i$, let its signature be $(\sigma_1,\ldots, \sigma_i)$. Let the signature of a rank-maximal flow in $\tH_{i+1}$ be $(\sigma_1,\ldots, \sigma_{i+1})$. By Lemma \ref{lem:signature}, the number of $R$-$L$ edges of rank $j$
 in $H'_{i+1}$
and hence in $H_{i+1}(f_{i+1})$ is the same as in $H'_i$, for each $j\leq i$. Thus the signature of $\tf_{i+1}$ in $\tH_{i+1}$ corresponding 
to $f_{i+1}$ is $(\sigma_1,\ldots,\sigma_i,\sigma'_{i+1})$ where $\sigma'_{i+1}\leq \sigma_{i+1}$. However, by Lemma \ref{lem:induction},
the $(i+1)$st component of a rank-maximal flow in $\tH_{i+1}$ is a max-flow in $H_{i+1}$. Since $f_{i+1}$ is also a max-flow in $H_{i+1}$
it must be of the same value and hence the corresponding flow $\tf_{i+1}$ of $f_{i+1}$ must have signature $(\sigma_1,\ldots,\sigma_{i+1})$. 
\end{proof}

\noindent {\bf Running time:} The size  of our flow network is determined by the total number of classes. Due to the tree structure of $T_u$, the size of the flow network is equal to the total size of all preference lists which is $O(|E|)$. The  maximum matching size in our instance is upper bounded by $|E|$ and the max-flow in our network is also at most $O(|E|)$. This gives an upper bound of $O(|E|^2)$ on the running time. Thus we establish Theorem~\ref{thm:poly}.

\section{Classified Popular matchings}\label{sec:pop}
In this section, we address the notion of popularity, an alternative notion which has been well-studied
in the context of one-sided preference lists. We consider the problem of computing a popular matching in the many-to-one setting
with laminar classifications, if one exists, referred to as the \LCPM\ problem here onwards. 
The same problem without classifications has been considered by Manlove and Sng~\cite{MS06} 
as the {\em capacitated house allocation problem with ties} (\CHAT). 

Let $G = (A\cup P, E)$, along with quotas and laminar classifications for each post be the
given \LCPM\ instance. Introduce a unique last resort post $\ell_a$ for each $a\in A$ as the last choice of $a$. Call the modified instance $G$. 
A simple modification of our algorithm from Section~\ref{sec:algo} outputs a popular matching in a given \LCPM\ instance (if it exists) in $O(|A|\cdot |E|)$ time. The correctness proof
of the algorithm also gives the characterization of popular matchings in an \LCPM\ instance. The main steps in the algorithm that computes a popular matching amongst feasible matching (if one exists) are as follows:
\begin{algorithm}[ht]
\begin{algorithmic}[1]

\STATE Construct the flow network $H_0 = (V, F_0)$ as described in Section~\ref{sec:flownw}.\\
\STATE Define $f(a)=$ set of rank-$1$ posts of $a$. 
\STATE Let $H_1=(V,F_1)$, where $F_1=F_0\cup\{(C_a^p,C^a_p)\mid p\in f(a)\}$. 
\STATE \label{step:start} Let $f_1$ be a max-flow in  $H_1$ and let $H_1(f_1)$ be the corresponding residual network.
\STATE Define the sets $L$ and $R$ as
\begin{eqnarray*}
L = \{C_a^p \mid a \in A \mbox { and } p \in N(a) \}; & &R  =  \{ C_p^a \mid  p \in P \mbox { and } a \in N(p) \}
\end{eqnarray*}
\vspace{-0.15in}
\STATE Compute the sets $S_1,T_1,U_1$.
\STATE \label{step:delete} Delete edges of the  form $(T_1\cup U_1, S_1)$ in $H_1(f_1)$. Rename the remaining edges as $F_1'$.
\STATE For each $a$ such that $C_a^* \in S_1$, let $s(a)=$ the set of most preferred posts $p$  of $a$ such that $C_p^a \in T_1$.\\
\COMMENT {Note that $s(a)\neq \emptyset$ due to the last resort post $\ell_a$. }
\STATE \label{step:end} Let $H_2=(V,F_2)$ where $F_2=F_1'\cup \{(C_a^p,C_p^a)\mid C_a^*\in S_1, p\in s(a)\}$.
\STATE Let $f_2$ be a max-flow in $H_2$ and let $H_2(f_2)$ be the corresponding residual network.
\STATE Let $M = \{(a, p) \mid (C_p^a, C_a^p) \in H_2(f_2)\}$.
\STATE If $|M| = |A|$, return $M$, else return ``No popular matching".
\end{algorithmic}
\caption{Laminar CPM}
\label{algo:main-pop}
\end{algorithm}

\REM{
We make the following observation about the $R$-$L$ edges in $H_1(f_1)$.

\begin{claim}
\label{claim1}
Let $(C_p^a, a)$ be an $R$-$L$ edge in $H_1(f_1)$. Then exactly one of the following hold:
\begin{enumerate}
\item $a \in U_1 \cup T_1$. In this case we say that the edge $(C_p^a, a)$ is associated with $a$. \\
OR
\item There exists a unique ancestor $C_p^u$ of $C_p^a$ such that $C_p^u \in S_1$ and the parent $C_p^v$ of $C_p^u$ belongs to $U_1 \cup T_1$. In this
case we say that the edge $(C_p^a, a)$ is associated with $C_p^u$.
\end{enumerate}
This implies that for every $R$-$L$ edge in $H_1(f_1)$ exactly one of the two edges $(a, s)$ or $(C_p^v, C_p^u)$  gets deleted in $H_1(f_1)$.
Conversely if an edge $(a, s)$ gets deleted then there is a exactly one edge $(C_p^a, a)$ associated with it. Similarly, if
an edge $(C_p^v, C_p^u)$ gets deleted as a forward edge in $H_1(f_1$), then there are exactly $q(C_p^u)$ $R$-$L$ edges associated with $C_p^u$.
\end{claim}
}

\subsection{Correctness and characterization of classified popular matchings}
We show that the algorithm described above outputs a popular matching, and thereby, give a characterization of popular matchings similar to that of 
Abraham~et~al.~\cite{AIKM07} and  \cite{MS06}.

\begin{lemma}
\label{lem:max-rank-1}
Let $M$ be a popular matching amongst all the feasible matchings in a given \LCPM\ instance $G$. 
Then the max-flow $f_1$ in $H_1$ has value $|M \cap E_1|$.
\end{lemma}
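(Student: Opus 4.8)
The plan is to prove the equality by establishing the two inequalities $|M \cap E_1| \le \mathrm{val}(f_1)$ and $|M \cap E_1| \ge \mathrm{val}(f_1)$ separately; the first is a routine consequence of feasibility, while the second is where popularity of $M$ is used.

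For the upper bound I would observe that $M \cap E_1$ is itself a feasible matching that uses only rank-$1$ edges. Since all constraints in Definition~\ref{defn:feasible} are upper quotas and $M \cap E_1 \subseteq M$, we have $|(M\cap E_1)(u)| \le |M(u)| \le q(u)$ and $|(M\cap E_1)(u)\cap C_u^i| \le q(C_u^i)$ for every vertex $u$ and class $C_u^i$. The network $H_1$ of Algorithm~\ref{algo:main-pop} contains exactly the rank-$1$ edges (so $H_1$ coincides with $\tH_1$), hence by Lemma~\ref{lem:flow-mat-corr} there is a feasible flow in $H_1$ of value $|M \cap E_1|$, and therefore $\mathrm{val}(f_1)\ge |M\cap E_1|$.

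For the lower bound I would argue by contradiction: assume $\mathrm{val}(f_1) > |M\cap E_1|$ and produce a feasible matching strictly more popular than $M$. Let $g$ be the feasible flow in $H_1$ corresponding to $M\cap E_1$ via Lemma~\ref{lem:flow-mat-corr}. Since $g$ is not a max-flow, $H_1(g)$ contains an augmenting $s$-$t$ path $\rho$, which read at the applicant level is an alternating path: it begins at an applicant $a_1$ not matched to a rank-$1$ post in $M$, alternately adds a rank-$1$ edge and removes a rank-$1$ edge of $M\cap E_1$, and ends at a post $p_k$ whose rank-$1$ load in $H_1$ lies below its quota. I would define $M'$ by toggling the rank-$1$ edges along $\rho$ in $M$ (and discarding $a_1$'s old higher-rank edge, if any). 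Then $a_1$ moves from being unmatched or matched to a worse post onto a rank-$1$ post and so strictly prefers $M'$; every other applicant on $\rho$ is moved from one of its rank-$1$ posts to another rank-$1$ post and is therefore indifferent; all remaining applicants are untouched. Hence $M'$ collects at least one extra vote and loses none, contradicting popularity of $M$.

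The main obstacle is showing that $M'$ is \emph{feasible}. The path $\rho$ lives in $H_1$, whose capacities are the full quotas $q(u)$ and $q(C_u^i)$; these do not record the capacity that the \emph{higher-rank} edges of $M$ already consume at a post. Thus a class of an intermediate post $p_i$, or of the terminal post $p_k$, that appears to have rank-$1$ slack in $H_1(g)$ may actually be saturated in $M$ by a higher-rank applicant, so a naive toggle could push $M'$ over a class quota. I expect this to be the delicate step, and I would resolve it by showing that popularity prevents such interference from blocking the promotion: whenever completing the exchange would overload a class that $M$ fills with a higher-rank applicant $b$, popularity forces $b$ to possess a no-worse alternative, so $b$ can be shifted as part of the same exchange, extending the chain of toggles; the last-resort posts $\ell_a$ guarantee that this chain terminates and that no applicant is made worse off. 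Combining this with the strictly improved $a_1$ still yields a matching more popular than $M$, so $\mathrm{val}(f_1) \le |M\cap E_1|$, and together with the upper bound we obtain $\mathrm{val}(f_1) = |M\cap E_1|$.
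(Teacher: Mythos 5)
Your upper bound is correct and is exactly the paper's opening move: since all constraints are upper quotas, $M\cap E_1$ is itself feasible, so Lemma~\ref{lem:flow-mat-corr} converts it into a flow of value $|M\cap E_1|$ in $H_1$, giving $\mathrm{val}(f_1)\ge |M\cap E_1|$. Your lower bound also starts the same way as the paper (assume the flow $f_1'$ corresponding to $M\cap E_1$ is not maximum, take an augmenting path $\rho$, and try to toggle along it), and you correctly isolate the crux: a class that has rank-$1$ slack in $H_1$ may already be saturated in $M$ by higher-rank edges, so the naive toggle can violate feasibility. Up to this point you are aligned with the paper.

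The gap is in how you resolve that crux; as written, your resolution is not a proof. First, ``popularity forces $b$ to possess a no-worse alternative'' is not a legitimate step: popularity is a global property of $M$ relative to other matchings, and it does not hand any individual applicant an alternative post with free class capacity --- you cannot lean on the very hypothesis you are contradicting in order to build the more popular matching. Second, the invariant you want your chain to maintain, ``no applicant is made worse off,'' is both unachievable in general and unnecessary; correspondingly, the appeal to last-resort posts for termination does not do what you need (parking an applicant at $\ell_a$ typically \emph{does} make it worse off). The paper's resolution is concrete, one step long, and hinges on vote counting rather than on harmlessness. Take the saturated class $C_p^u$ on the tail of $\rho$ nearest to the leaf; since $C_p^u$ is under-subscribed in $M\cap E_1$ but full in $M$, it contains an applicant $a_k$ matched to $p$ along a \emph{non-rank-$1$} edge. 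Remove $(a_k,p)$ (this frees every class on the tail), augment along $\rho$, and then re-match $a_k$ to one of its rank-$1$ posts $p'$, evicting one applicant $a_t$ there if some class of $p'$ is full. Now $a_1$ and $a_k$ both strictly improve ($a_k$ moves from a non-rank-$1$ post to a rank-$1$ post), while only $a_t$ is worse off, so the resulting feasible matching is more popular than $M$ --- contradiction. The displacement stops after one step precisely because $a_k$'s guaranteed promotion pays for $a_t$'s eviction; this ``two gains versus one loss'' accounting is the ingredient missing from your sketch, and without it your chain of toggles has no reason to terminate in a matching that actually beats $M$. (One point in your favor: you also flag intermediate posts on $\rho$, which the paper's written proof treats only implicitly by handling the tail; but the remedy there is again eviction-plus-promotion with vote counting, not a popularity-driven chain.)
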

\begin{proof}
Let $M_1 = M \cap E_1$. Note that $M_1$ is feasible in $G$ since $M$ is feasible in $G$. Therefore,
$M_1$ has a corresponding flow $f'_1$ in $H_1$. Hence the max-flow $f_1$ in $H_1$ has value at least $|M_1|$.
For contradiction, assume that $f_1$ has value strictly larger than $|M_1|$. We show how to obtain a feasible matching that is more popular
than $M$, contradicting the popularity of $M$. 

Since $f'_1$ is not a max-flow in $H_1$, there exists an augmenting path w.r.t. $f'_1$ in $H_1$.
Let $\rho = \langle s, C_{a_1}^*, C_{a_1}^{p_1},  C_{p_1}^{a_1}, \ldots,$ $ C_{a_j}^*, C_{a_j}^p, C_p^{a_j}, C_p^{1}, C_p^{2}, \ldots , C_p^*, t \rangle$ be the augmenting path. 
Let the last node from $R$ present on $\rho$ be $C_p^{a_j}$.
The subpath of $\rho$, denoted as $tail (\rho)$, is the subpath from $C_p^{a_j}$ to its ancestor $C_p^*$. 
Here $(a_j, p) \in E$. Clearly, every node $C_p^u  \in tail(\rho)$ is such that $|M_1(C_p^u)| < q(C_p^u)$, that $C_p^u$ is  under-subscribed in $M_1$. We consider two cases:
\begin{itemize}
\item {\em Every node  $C_p^u \in tail(\rho)$ is under-subscribed in $M$: }
In this case, we can augment the flow $f_1'$, 
and hence modify the matching $M_1$ and consequently $M$, 
to match applicant $a_1$ to its rank-$1$ post. Note that the rest of the applicants on $\rho$
continue to be matched to their rank-$1$ post since the augmentation is done using only rank-$1$ edges.
Thus we obtain a matching $M'$ that is more popular than $M$, a contradiction.
\item There exists some node $C_p^u \in tail (\rho)$ such that $|M( C_p^u)| = q(C_p^u)$. Consider such a class node $C_p^u \in tail(\rho)$ that is
nearest to $C_p^{a_j}$. Let $a_k \in M(C_p^u)$ be such that $a_k$ treats $p$ as a non-rank-$1$ post. Such an applicant $a_k$ must exist 
because $C_p^u$ is not saturated w.r.t. $f_1'$ (since the augmenting path exists in $H_1$) but $C_p^u$ is saturated in $M$.   
Recall $M_1 = M \cap E_1$ and let $M_2  = M \setminus M_1$.
Construct the matching $\hat{M} = M_1 \cup (M_2 \setminus \{(a_k, p)\})$. With respect to $\hat{M}$, every node on $tail(\rho)$ is under-subscribed. 
Now we are in the similar case as above and we  can augment $f_1'$ along $\rho$ to get $M_1'$. In $M_1'$, apart from $a_1$ which gets matched to its rank-1 post $p_1$,
every other applicant on $\rho$ continues to be matched to one of its rank-1 posts.  
Now, $M' = M_1' \cup M_2 \setminus \{(a_k, p)\}$
and  $M'(a_1) = p_1$. Note that for any post $p' \neq p$, for any class node $C_{p'}^u$, we have $|M(C_{p'}^u)| = |M'(C_{p'}^u)|$ 
and hence 
$M'$ is a feasible matching in $G$.

Finally
consider any $p' \in f(a_k)$ and let $Y = \langle C_{p'}^{a_k}, C_{p'}^1, \ldots, C_{p'}^* \rangle$ denote the unique path from
$C_{p'}^{a_k}$ to $C_{p'}^*$ in $\mathcal{T}_{p'}$. If every class $C_{p'}^j \in Y$ is such that $|M(C_{p'}^j)| < q(C_{p'}^j)$ then
we can construct $N = M' \cup \{(a_k, p')\}$. Here, both $a_1$ and $a_k$ prefer $N$ over $M$ a contradiction to the popularity of
$M$.  Thus, in this case we are done with the proof.
Assuming we do not fall in the above case, there must exist a class node $C_{p'}^u \in Y$ such that 
$|M(C_{p'}^u)| = q(C_{p'}^u)$
and let $C_{p'}^u$ denote the nearest  such class from $C_{p'}^{a_k}$.  Let $a_t \in M(C_{p'}^u)$. Construct the matching $N = M' \setminus \{(a_t, p')\} \cup \{(a_k, p')\}$.
The matching $N$ is feasible in $G$ and both $a_1$  and $a_k$ prefer $N$ to $M$ whereas the applicant $a_t$ prefers $M$ to $N$. Thus
we have obtained a feasible matching that is more popular than $M$, a contradiction.
\end{itemize}
This completes the proof of the lemma.
\end{proof}

We now show that, in a popular matching, every applicant $a$ has to be matched to a post belonging to  $f(a) \cup s(a)$. For the sake of brevity,
we refer to a post $p$ {\em an $f$-post} (respectively {\em an $s$-post}) if there is an applicant $a$ such that $p\in f(a)$ (respectively, $p\in s(a)$).

\begin{lemma}
\label{lem:bet-fa-sa}
Let $M$ be a popular matching amongst all feasible matchings in an \LCPM\ instance $G$, then for any $a \in A$, $M(a)$ is never
strictly between $f(a)$ and $s(a)$. 
\end{lemma}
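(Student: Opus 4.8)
The plan is to argue by contradiction and reduce the claim to a violation of a class quota of $M$, so that popularity enters only through Lemma~\ref{lem:max-rank-1}; no fresh exchange argument is needed. Suppose $M$ is popular and some $a$ has $M(a)=p$ with $f(a)\succ p\succ s(a)$ strictly. First I would check that $s(a)$ is even well defined, i.e. that $C_a^*\in S_1$. Since $f(a)\succ p$, the post $p$ is not a rank-$1$ post of $a$, so the rank-$1$ restriction $M_1=M\cap E_1$ leaves $a$ unmatched. By Lemma~\ref{lem:max-rank-1} the flow $f_1'$ corresponding to $M_1$ is a max-flow in $H_1$, and as $a$ is unmatched in $M_1$ the edge $(s,C_a^*)$ carries no flow and is residual; hence $C_a^*\in S_{f_1'}=S_1$, the last equality by the invariance of the decomposition (Lemma~\ref{lem:invariant}). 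Thus $s(a)$ is defined and the hypothesis is meaningful.

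The key step, which I expect to be the main obstacle, is to translate the preference hypothesis into a statement about the flow decomposition and then read off a saturated class containing $a$. Because $s(a)$ is the set of \emph{most preferred} posts $p'$ of $a$ with $C_{p'}^a\in T_1$, and $p$ is strictly preferred to $s(a)$, the leaf $C_p^a$ cannot lie in $T_1$; hence $C_p^a\in S_1\cup U_1$. Applying the post-tree analogue (Lemma~\ref{lem:analog-O-U-app}) to $C_p^a$ then yields an ancestor class $C_p^i\in S_1\cup U_1$ of $C_p^a$ in $\mathcal{T}_p$, whose parent lies in $T_1$ and whose edge to that parent is saturated in \emph{every} max-flow, in particular in $f_1'$.

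It remains to convert saturation into a quota violation. Since $f_1'$ corresponds to $M_1$, the correspondence of Lemma~\ref{lem:flow-mat-corr} identifies the flow on this post-tree edge with $|M_1(C_p^i)|$, the number of applicants of class $C_p^i$ matched to $p$ by rank-$1$ edges of $M$; saturation gives $|M_1(C_p^i)|=q(C_p^i)$. Now $a$ belongs to the class $C_p^i$ (an ancestor of the leaf $C_p^a$) and is matched to $p$ at rank greater than $1$, so $a\in M(C_p^i)\setminus M_1(C_p^i)$ while $M_1(C_p^i)\subseteq M(C_p^i)$. Using feasibility of $M$ this forces $|M_1(C_p^i)|\le |M(C_p^i)|-1\le q(C_p^i)-1$, contradicting $|M_1(C_p^i)|=q(C_p^i)$. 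The only points needing care are the well-definedness argument above and the bookkeeping that the flow on a post-tree edge equals the occupancy of the corresponding class; the rest is a direct chain of the cited lemmas.
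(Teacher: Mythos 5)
Your proof is correct and follows essentially the same route as the paper's: use the definition of $s(a)$ to place $C_p^a \in S_1\cup U_1$, apply the converse of Lemma~\ref{lem:analog-O-U-app} to obtain an ancestor class whose parent-edge is saturated in every max-flow, invoke Lemma~\ref{lem:max-rank-1} to conclude the flow corresponding to $M\cap E_1$ is such a max-flow, and derive a quota contradiction from $a$ occupying that class via a non-rank-$1$ edge. Your additional well-definedness check for $s(a)$ and the explicit appeals to Lemmas~\ref{lem:flow-mat-corr} and~\ref{lem:invariant} are careful refinements of steps the paper leaves implicit, not a different argument.
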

\begin{proof}
 For contradiction, assume that $M(a) = p$ and $p$ is strictly between $f(a)$ and $s(a)$.  
Since $p \notin s(a)$,
it implies that $C_p^a \in S_1 \cup U_1$ with respect to the max-flow $f_1$ in $H_1$. By converse of Lemma~\ref{lem:analog-O-U-app} for posts,
we claim that there must exist an  ancestor $C_p^u$ of $C_p^a \in \mathcal{T}_p$ such that $C_p^u \in S_1 \cup U_1$ and its  parent $C_p^v \in T_1$.
Thus by Lemma~\ref{lem:for-rev-flow} (a), the edge $(C_p^u, C_p^v)$ must  be
saturated w.r.t. every max-flow of $H_1$. This implies that in the matching $N$ corresponding to any max-flow in $H_1$, we have $|N(C_p^u)| = q(C_p^u)$.
Consider the flow $f_1'$ corresponding to $M_1 = M \cap E_1$ in $H_1$. By Lemma~\ref{lem:max-rank-1}
$f_1'$ must be a max-flow in $H_1$. Thus $|M_1 (C_p^u)| = q(C_p^u)$. Note that $M(a) = p$ and $a$ does not treat $p$ as its rank-1 post.
Thus for $M$ to be feasible, it must be the case that  $|M_1(C_p^u)| < q(C_p^u)$, a contradiction.
This completes the proof that $M(a)$ cannot be strictly between $f(a)$ and $s(a)$.
\end{proof}
\begin{lemma}
\label{lem:worse-than-sa}
Let $M$ be a popular matching amongst all feasible matchings in an \LCPM\ instance $G$, then for any $a \in A$, $M(a)$ is never
strictly worse than $s(a)$.
\end{lemma}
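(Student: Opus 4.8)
The plan is to argue by contradiction. Suppose $M$ is popular yet $M(a)=p'$ is strictly worse than $p:=s(a)$ for some applicant $a$; I will construct a feasible matching that beats $M$ in the vote. First I would record the structural facts forced by $p=s(a)$. Since $s(a)$ is defined we have $C_a^*\in S_1$, so $a$ carries no rank-$1$ flow in $f_1$ and its whole tree is idle; consequently every rank-$1$ post $q$ of $a$ must satisfy $C_q^a\notin T_1$, as otherwise $s\to C_a^*\leadsto C_a^q\to C_q^a\leadsto t$ would be an augmenting path in $H_1(f_1)$. In particular $p$, the most preferred post with $C_p^a\in T_1$, is a non-rank-$1$ post of $a$. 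By Lemma~\ref{lem:max-rank-1} the restriction $M_1:=M\cap E_1$ corresponds to a max-flow $f_1'$ in $H_1$, and by Lemma~\ref{lem:invariant} the decomposition $S_1,T_1,U_1$ is unchanged for $f_1'$. Hence $C_p^a\in T_1$ holds with respect to $f_1'$ as well, so the residual path from $C_p^a$ up to $C_p^*$ exists, i.e. every class node on the path from $C_p^a$ to $C_p^*$ in $\mathcal{T}_p$ is under-subscribed in $M_1$.

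Next I would attempt to promote $a$ to $p$, splitting on whether $p$ has room in $M$ itself (not merely in $M_1$). In the easy case every class on the path from $C_p^a$ to $C_p^*$ is also under-subscribed in $M$; then $M':=(M\setminus\{(a,p')\})\cup\{(a,p)\}$ is feasible, only $a$ changes partner, and $a$ strictly improves, so $M'$ is more popular than $M$. Otherwise let $C_p^u$ be the class nearest to $C_p^a$ on this path that is saturated in $M$. Because $C_p^u$ is under-subscribed in $M_1$ but saturated in $M$, some $a_k\in M(C_p^u)$ is matched to $p$ by a non-rank-$1$ edge. Removing $a_k$ frees one unit at $C_p^u$, and since every class strictly between $C_p^a$ and $C_p^u$ is under-subscribed while removing $a_k$ only lowers counts along its own branch, laminarity guarantees that $(M\setminus\{(a,p'),(a_k,p)\})\cup\{(a,p)\}$ is feasible.

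It remains to re-place the displaced $a_k$, and here I would reuse the bookkeeping from the end of the proof of Lemma~\ref{lem:max-rank-1}. Fix any $p''\in f(a_k)$ and follow the path $Y$ from $C_{p''}^{a_k}$ to $C_{p''}^*$ in $\mathcal{T}_{p''}$. If every class on $Y$ is under-subscribed in $M$, set $N:=(M\setminus\{(a,p'),(a_k,p)\})\cup\{(a,p),(a_k,p'')\}$; both $a$ (from $p'$ to $p$) and $a_k$ (from the non-rank-$1$ post $p$ to its rank-$1$ post $p''$) strictly prefer $N$ and no applicant is worse off, so $N$ is more popular. If some class on $Y$ is saturated, let $C_{p''}^w$ be the nearest such to $C_{p''}^{a_k}$, pick $a_t\in M(C_{p''}^w)$, and set $N:=(M\setminus\{(a,p'),(a_k,p),(a_t,p'')\})\cup\{(a,p),(a_k,p'')\}$; the same laminar accounting (swapping $a_t$ for $a_k$ inside $C_{p''}^w$) keeps $N$ feasible, $a$ and $a_k$ both improve while only $a_t$ is worse off, giving a net gain of one vote. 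In every branch $M$ is beaten, the desired contradiction.

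I expect the main obstacle to be the bookkeeping in the last two paragraphs: certifying feasibility of each reconstructed matching under the laminar class quotas (that replacing one member of the nearest saturated class by another leaves all ancestor and descendant classes within quota) and making the vote count precise, so that the two gains strictly dominate the single loss --- exactly the two-gain/one-loss pattern already used in Lemma~\ref{lem:max-rank-1}. The degenerate coincidences (e.g. $p''=p'$, or collisions among $a,a_k,a_t$) are routine but must be checked so that the constructed edge set is genuinely a feasible matching.
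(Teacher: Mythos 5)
Your overall plan (promote $a$ into a post of $s(a)$, displace a non-rank-$1$ occupant of the nearest saturated class, then re-place that occupant at its $f$-post, possibly evicting one further applicant) coincides with the paper's argument for \emph{one} of its two cases, but it rests on two structural claims that are false, and these false claims make you conclude that the other case never arises. First, $C_a^*\in S_1$ does \emph{not} imply that $a$ carries no rank-$1$ flow: an applicant matched along a rank-$1$ edge can still be reachable from $s$ in the residual network. For example, if $a$ is matched to its top post $q$ (quota $1$) and an unmatched applicant $x$ also has $q$ as a top post, then $s\to C_x^*\to C_x^q\to C_q^x\to C_q^*\to C_q^a\to C_a^q\to C_a^*$ is a residual path, so $C_a^*\in S_1$. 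Second, and this is the fatal one, $C_p^a\in T_1$ does \emph{not} imply that every class on the path from $C_p^a$ to $C_p^*$ in $\mathcal{T}_p$ is under-subscribed in $M_1$: the residual path witnessing $C_p^a\in T_1$ need not climb $\mathcal{T}_p$; it may leave the tree and return through other applicants' trees. Concretely, suppose $C_p^u\supseteq\{a,b\}$ has quota $1$, applicant $b$ is matched to $p$ along a rank-$1$ edge (so $C_p^u$ is already saturated in $M_1$), and $b$ has another rank-$1$ post $q'$ with spare capacity; then $C_p^a\to C_p^u\to C_p^b\to C_b^p\to C_b^*\to C_b^{q'}\to C_{q'}^b\to\cdots\to t$ is a residual path, so $C_p^a\in T_1$ even though the ancestor class $C_p^u$ is full of rank-$1$ flow.

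In that situation your pivotal deduction --- ``$C_p^u$ is under-subscribed in $M_1$ but saturated in $M$, hence some $a_k\in M(C_p^u)$ is matched to $p$ by a non-rank-$1$ edge'' --- produces no applicant $a_k$ at all, and the displacement and re-placement machinery of your last two paragraphs never gets started. This is exactly the case the paper treats separately (its case ``applicant $a'$ treats $p'$ as a rank-$1$ post''), and it cannot be resolved by sending the blocker to a post of its $f$-set, since the blocker is already matched to one. Instead one must exploit the residual path from $C_p^a$ to $t$ to reroute rank-$1$ assignments --- in the example above, move $b$ from $p$ to $q'$, which changes no votes --- thereby freeing $C_p^u$ so that $a$ can be promoted; this is the augmentation argument the paper borrows from Lemma~\ref{lem:max-rank-1}. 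Your handling of the complementary case (the blocker occupies $p$ via a non-rank-$1$ edge) is essentially the paper's, including the two-gains/one-loss vote count, but the missing rank-$1$-blocker case is a genuine gap, not one of the routine degeneracies you list at the end.
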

\begin{proof}
 Assume that $M(a) = p$ where $p$ is strictly worse than $s(a)$ on the preference list of $a$. If there exists a post $p' \in s(a)$ such
that every node on the path from $C_{p'}^a$ to $C_{p'}^*$ in $\mathcal{T}_{p'}$ is under-subscribed in $M$, then we are done. This is because
we can construct a feasible matching $M' = M \setminus \{(a, M(a)\} \cup \{(a, p')\}$ which is more popular than $M$, completing the proof.

Thus it must be the case that, for every $p' \in s(a)$, some node $C_{p'}^u$ in the path mentioned above is saturated in $M$. Moreover, let $C_{p'}^u$
be the class closest to $C_{p'}^a$ in $\mathcal{T}_{p'}$ that is saturated in $M$.
Let $a' \in M(p')$ such that both and $a$ and $a'$ belong to $C_{p'}^u$. 
We break the proof into two parts based on whether $a'$ treats $p'$ as a rank-$1$ post or as a non-rank-$1$ post.
\begin{itemize}
\item  {\em Applicant $a'$ treats $p'$ as a non-rank-1 post:} 
In this case, we can construct another matching $M'=(M\setminus \{(a,p),(a',p')\})\cup\{(a,p'),(a',p'')\}$ where $p''\in f(a')$. 
If $M'$ does not exceed the quota of any class of $p''$, we are done, since both $a$ and $a'$ prefer $M'$ over $M$.

In case $M'$ exceeds quota of some class of $p''$ 
containing $a'$, we pick an arbitrary applicant $b\neq a'$ from $M(p'')$ such that $b$ belongs to the class closest to $C^{a'}_{p''}$ in $\mathcal{T}_{p''}$ whose quota is exceeded in $M'$ and reconstruct $M'$
as $M'=(M\setminus \{(a,p),(a',p'),(b,p'')\})\cup\{(a,p'),(a',p'')\}$. Clearly, $M'$ is feasible in $G$. Also, $a,a'$ prefer $M'$ over $M$ whereas only $b$ prefers $M$ over $M'$.
Therefore $M'$ is more popular than $M$, contradicting the assumption about the popularity of $M$.
\item {\em Applicant $a'$ treats $p'$ as a rank-$1$ post: }Since $p' \in s(a)$, it implies that $C_{p'}^a \in T_1$ in $H_1$. That is,
there is a path $\rho$ from $C_{p'}^a$ to $t$ in the residual network $H_1(f_1)$. 
In this case, we use arguments similar to Lemma~\ref{lem:max-rank-1} to come up with a matching more popular than $M$.
\end{itemize}
This completes the proof of the lemma.
\end{proof}

\begin{lemma} Let $M$ be a feasible matching in an \LCPM\ instance $G$. The matching $M$ is popular amongst feasible matchings in $G$ if and only if $M$ satisfies the following two properties:
\begin{itemize}
\item $M \cap E_1$ has a max-flow corresponding to it in $H_1$, and 
\item For every $a \in A$, $M(a) \in f(a) \cup s(a)$.
\end{itemize}
\label{lem:char}
\end{lemma}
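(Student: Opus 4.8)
The plan is to prove both directions of the characterization, leveraging the three preceding lemmas for the necessity direction and constructing an explicit more-popular matching via the flow structure for the sufficiency direction.

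First I would establish necessity (the ``only if'' direction). Suppose $M$ is popular amongst feasible matchings. The first property, that $M \cap E_1$ has a max-flow corresponding to it in $H_1$, is exactly the content of Lemma~\ref{lem:max-rank-1}, so this is immediate. For the second property, I would combine Lemma~\ref{lem:bet-fa-sa} and Lemma~\ref{lem:worse-than-sa}. By Lemma~\ref{lem:worse-than-sa}, $M(a)$ is never strictly worse than $s(a)$, and by Lemma~\ref{lem:bet-fa-sa}, $M(a)$ is never strictly between $f(a)$ and $s(a)$. It remains to rule out the case where $M(a)$ is strictly better than every post in $f(a)$, but this is impossible since $f(a)$ is the set of rank-$1$ (most preferred) posts of $a$. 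Hence $M(a)$ must lie in $f(a) \cup s(a)$, giving the second property.

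The harder direction is sufficiency (the ``if'' direction): assuming $M$ satisfies both properties, I must show $M$ is popular. The natural approach is by contradiction: suppose some feasible matching $M'$ is more popular than $M$. I would analyze the symmetric difference $M \oplus M'$ and track, applicant by applicant, who votes for $M'$ over $M$ and vice versa. The key structural fact is that every applicant $a$ is matched in $M$ to a post in $f(a) \cup s(a)$, and by the construction of $s(a)$ in Step~8 of Algorithm~\ref{algo:main-pop}, the posts in $s(a)$ are the \emph{most preferred} posts $p$ with $C_p^a \in T_1$, while $f(a)$ consists of rank-$1$ posts. An applicant voting for $M'$ over $M$ must therefore be matched in $M'$ to something strictly preferred to its $M$-partner; if its $M$-partner is already in $f(a)$ (a rank-$1$ post), it cannot improve, so only applicants matched to an $s$-post in $M$ can vote against $M$, and only by moving to an $f$-post. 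I would then argue, using the saturation properties guaranteed by Lemma~\ref{lem:for-rev-flow} and Lemma~\ref{lem:analog-O-U-app} on the $T_1$-nodes, that each such improvement forces a compensating displacement of an applicant who strictly prefers $M$, so that the votes cannot strictly favor $M'$.

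I expect the main obstacle to be the bookkeeping in the sufficiency direction: carefully showing that the gains from applicants moving to $f$-posts are always offset by losses elsewhere, respecting the class quotas. The crux is that because $M \cap E_1$ corresponds to a max-flow in $H_1$, there is no ``slack'' allowing an applicant to move to a rank-$1$ post without evicting another applicant from a saturated class node in $T_1$ — this is precisely what the min-cut structure $(S_1, T_1 \cup U_1)$ and the saturation of forward edges encode. I would make this precise by associating each vote-for-$M'$ with a distinct vote-for-$M$ through an alternating-path / flow-augmentation argument in $H_1$ (and $H_2$), analogous to the counting arguments in Lemma~\ref{lem:max-rank-1}, thereby concluding that $M'$ cannot obtain strictly more votes than $M$, contradicting the assumed popularity of $M'$ and establishing that $M$ is popular.
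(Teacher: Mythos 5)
Your overall strategy coincides with the paper's own proof: necessity is obtained exactly as you say, by citing Lemma~\ref{lem:max-rank-1} for the first property and Lemmas~\ref{lem:bet-fa-sa} and \ref{lem:worse-than-sa} (plus the observation that nothing beats a rank-$1$ post) for the second, and sufficiency is proved by contradiction, charging each applicant who prefers $M'$ to a distinct applicant who prefers $M$ via the saturation structure of $H_1$. The necessity half of your plan is complete and correct. (One small remark: the paper's sufficiency argument uses only $H_1$ and the sets $S_1,T_1,U_1$; the network $H_2$ plays no role in this lemma.)

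Two steps in your sufficiency bookkeeping, however, are wrong or incomplete as stated. First, the claim that an applicant matched in $s(a)$ can vote for $M'$ ``only by moving to an $f$-post'' cannot be taken as a starting point: $M'(a)=p'$ may be any post strictly preferred to $s(a)$, including one strictly between $f(a)$ and $s(a)$. What the definition of $s(a)$ actually gives is only that $C_{p'}^{a}\in S_1\cup U_1$; it is then the converse part of Lemma~\ref{lem:analog-O-U-app}, combined with property (1) (that $M\cap E_1$ corresponds to a max-flow of $H_1$), that yields a saturated ancestor class $C_{p'}^{u}$ whose quota $q(C_{p'}^u)$ is already exhausted by rank-$1$ edges of $M$, so that feasibility of $M'$ forces an eviction of some $a_1\in M(p')\setminus M'(p')$ with $p'\in f(a_1)$. (That $p'$ is an $f$-post for someone is a consequence of this saturation, not a premise.) Second, your assertion that ``each such improvement forces a compensating displacement of an applicant who strictly prefers $M$'' fails in a single step: since rank-$1$ choices may be tied, the evicted applicant $a_1$ may be matched in $M'$ to another post of $f(a_1)$ and hence be indifferent between $M$ and $M'$. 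This is precisely why the paper runs an exploration $a_1,a_2,\ldots$: one must show that each intermediate node $C_{p_i}^{a_i}$ with $p_i=M'(a_i)\in f(a_i)$ again lies in $S_1\cup U_1$ (otherwise a residual path would place $C_{p'}^{a}$ in $T_1$, a contradiction), so the eviction argument can be repeated; the chain never revisits applicants, never returns to $a$ (who is not rank-$1$ matched in $M$), and therefore terminates at some $a_k$ with $M(a_k)\in f(a_k)$ but $M'(a_k)\notin f(a_k)$, who strictly prefers $M$. Your ``alternating-path'' remark gestures at this, but the invariant keeping the chain inside $S_1\cup U_1$, the termination condition, and the disjointness of explorations for different improving applicants are the actual content of the sufficiency proof and are missing from your plan.
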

\begin{proof}
The necessity of the above properties has already been shown. We now show that they are sufficient.
Let $M$ be a feasible matching that satisfies both the conditions of the lemma and for contradiction assume that $M$ is not popular amongst feasible matchings in $G$.
Let $M'$ be a feasible matching more popular than $M$ and let $a$ be an applicant that prefers $M'$ over $M$. 
Our goal is to show that for each $a$ there exists a unique applicant $b$  that prefers
$M$ over $M'$. 

Since $a$ prefers $M'$ over $M$, it implies that $M(a)=p$ is not a  rank-$1$ post for $a$. 
Furthermore since $M(a) \in s(a)$ (as $M$ satisfies the conditions of the lemma) and $M'(a) = p'$ it implies that $C_{p'}^{a} \in S_1 \cup U_1$ in $H_1$. 

Consider the node $C_{p'}^{a}$. Observe that $a \in M'(p') \setminus M(p')$ by choice of $a$. 
We claim that there exists some applicant $ a_1 \in  M(p') \setminus M'(p')$ such that $p' \in f(a_1)$.  
Since $C_{p'}^{a} \in S_1 \cup U_1$ and by converse of Lemma~\ref{lem:analog-O-U-app}
there exists an ancestor  $C_{p'}^u$ of $C_{p'}^{a}$ which is saturated w.r.t. $f_1$. 
If $a \in C_{p'}^u$, then  since $C_{p'}^u$ is saturated w.r.t. the flow $f_1$ there is an applicant $a_1 \in C_{p'}^u$ such that $M(a_1) \in f(a_1)$ and 
$M'(a_1) \neq M(a_1)$. 
Otherwise $a \notin C_{p'}^u$. Again if $M(C_{p'}^u) \neq M'(C_{p'}^u)$ we find the desired applicant $a_1 \in M(C_{p'}^u) \setminus M'(C_{p'}^u)$. Therefore assume  that  $M(C_{p'}^u)  = M'(C_{p'}^u)$. However, note that $M$ restricted to rank-1 edges
is a max-flow in $H_1$. Since $M'(p')$ has at least one more applicant matched along rank-1 edges (that is the applicant $a$), 
it implies that  
there is some applicant $a_1$ such that $M(a_1) \neq M'(a_1)$ and $M(a_1) \in f(a_1)$.
If $a_1$ is not matched
to a rank-1 post in $M'$ we are done, since $a_1$ is our desired applicant $b$. 

Else we consider $p_1 = M'(a_1)$. We claim that the node $C_{p_1}^{a_1} \notin T_1$. Otherwise the path $\langle C_{p'}^{a}$\\ $\ldots C_{p'}^{u} \ldots  C_{p'}^{a_1} \ldots  C_{p_1}^{a_1} \ldots t \rangle$ shows that $C_{p'}^a \in T_1$
a contradiction to the fact that $C_{p'}^{a} \in S_1 \cup U_1$. Thus $C_{p_1}^{a_1} \in S_1 \cup U_1$. We now find an applicant $a_2 \in M(p_1) \setminus M'(p_1)$ such that $p_1 \in f(a_2)$ and $a_2 \neq a_1 \neq a$.
Again if $a_2$ is not matched to a rank-1 post in $M'$ we are done since $a_2$ is the desired applicant $b$.
We note that our exploration which has started at $C_{p'}^{a}$ must find these distinct applicants $a_1, a_2, \ldots, a_k$ since
the corresponding post nodes were in $S_1 \cup T_1$. We also note that the applicant $a$ cannot be one of the $a_i$, $1 \le i \le k$
since $a$ is not matched to a rank-1 post in $M$. 
Thus the exploration terminates at an applicant $a_k$ such that $M(a_k) \in f(a_k)$ and $M'(a_k) \notin f(a_k)$.
The applicant $a_k = b$ is the desired applicant which prefers $M$ over $M'$.

Note that we need to ensure that for every $a$ there is a unique $b$ such that the votes are compensated. Hence for 
another applicant $a'$ which prefers $M'$ over $M$, we use the same arguments as above, except that we do not consider
any applicant that was already used in a prior exploration. We are guaranteed to find such an applicant, since the corresponding post node
is in $S_1 \cup U_1$, implying that some ancestor of the node is saturated w.r.t. the max-flow $f_1$.
This completes the proof.
\end{proof}

\begin{lemma}
Let $M$ be the matching produced by Algorithm~\ref{algo:main-pop}. Then $M$ satisfied both the conditions of Lemma~\ref{lem:char}. \end{lemma}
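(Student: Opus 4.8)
The goal is to verify that the matching $M$ output by Algorithm~\ref{algo:main-pop} satisfies the two characterization conditions of Lemma~\ref{lem:char}: first, that $M \cap E_1$ corresponds to a max-flow in $H_1$, and second, that every applicant $a$ is matched to a post in $f(a) \cup s(a)$. The plan is to trace through the two-phase structure of the algorithm and read off both conditions from the construction of $H_1$ and $H_2$.

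For the first condition, I would argue as follows. The network $H_2$ is obtained from $H_1$ by first deleting edges of the form $(T_1 \cup U_1, S_1)$ (Step~\ref{step:delete}) and then adding only $s$-post edges $(C_a^p, C_p^a)$ with $p \in s(a)$ (Step~\ref{step:end}). The key point is that the rank-$1$ ($f$-post) edges surviving into $H_2$ still admit the max-flow $f_1$: by Lemma~\ref{lem:no-LR-del}, no $L$-$R$ edge between leaf classes is ever deleted, and the deletions in Step~\ref{step:delete} are exactly the forward and reverse edges of the min-cut $(S_1, T_1 \cup U_1)$, which by Lemma~\ref{lem:for-rev-flow} are saturated or carry zero flow respectively, so $f_1$ remains a feasible flow of the same value after deletion. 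Hence when we compute the max-flow $f_2$ in $H_2$, the flow restricted to rank-$1$ edges can be taken to be a max-flow of $H_1$; I would verify that augmenting using only the newly added $s$-post edges cannot destroy the saturation of $H_1$'s min-cut, so the rank-$1$ part of $M$ has value equal to the max-flow in $H_1$. This gives $M \cap E_1$ a corresponding max-flow in $H_1$.

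For the second condition, the decisive observation is that $H_2$ contains \emph{only} edges of two kinds between $L$ and $R$: the $f$-post edges inherited from $H_1$ and the $s$-post edges added in Step~\ref{step:end}. Since $M = \{(a,p) \mid (C_p^a, C_a^p) \in H_2(f_2)\}$ consists precisely of the $L$-$R$ edges carrying flow in $f_2$ (by Lemma~\ref{lem:flow-mat-corr}), every matched edge of $M$ is either an $f$-post edge or an $s$-post edge. Therefore $M(a) \in f(a) \cup s(a)$ for every matched applicant, and because of the last-resort post $\ell_a$ guaranteeing $s(a) \neq \emptyset$, every applicant in the returned matching is matched to a post in $f(a) \cup s(a)$. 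The only subtlety is applicants $a$ with $C_a^* \in T_1 \cup U_1$, for whom no $s$-post edge is added; I would note that such an applicant is already saturated along rank-$1$ edges in every max-flow of $H_1$ (by Lemma~\ref{lem:O-U-app}), so $M(a) \in f(a)$ automatically.

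I expect the main obstacle to be the first condition, specifically showing that the max-flow $f_2$ in $H_2$ does not reroute rank-$1$ flow in a way that decreases $|M \cap E_1|$ below the max-flow value of $H_1$. The concern is exactly the kind of degrading augmenting path illustrated by $\rho_1$ in Section~\ref{sec:pseudocode}; the deletions in Step~\ref{step:delete} are designed to forbid these. I would handle this by invoking the invariance of the decomposition (Lemma~\ref{lem:invariant}) together with Lemma~\ref{lem:O-U-app} and Lemma~\ref{lem:analog-O-U-app}: an $s$-post edge added for $a$ has $C_a^* \in S_1$ and its head $C_p^a \in T_1$, so any $s$-$t$ path through it enters the post-tree via a node in $T_1$ and cannot traverse a deleted min-cut edge, meaning such augmentations only route additional (non-rank-$1$) flow without disturbing the saturated rank-$1$ min-cut of $H_1$. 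Establishing this non-interference cleanly is the heart of the proof.
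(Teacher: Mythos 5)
Your handling of the second condition is correct and matches the paper: the only $L$-$R$ edges present in $H_2$ are rank-$1$ edges and the added $s$-post edges, so any matched applicant is matched inside $f(a)\cup s(a)$. The genuine gap is in the first condition, and you effectively concede it yourself ("I would verify\ldots", "Establishing this non-interference cleanly is the heart of the proof"). Moreover, the claim you sketch is not quite the right one: it is false that augmentations through $s$-post edges "only route additional (non-rank-$1$) flow without disturbing the saturated rank-$1$" flow. An $s$-$t$ path in $H_2$ can, and in general does, traverse rank-$1$ $R$-$L$ residual edges, thereby unmatching applicants that $f_1$ had matched along rank-$1$ edges; for instance, when $f_2$ routes flow through a saturated class node $C_p^u\in S_1$ whose parent edge was deleted, the flow must enter that subtree through a matched rank-$1$ edge in reverse. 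So $f_2$ does disturb the rank-$1$ flow; what must be proved is that the \emph{count} of rank-$1$ $R$-$L$ edges in $H_2(f_2)$ is invariant, i.e., every rank-$1$ edge that gets unmatched is compensated by a newly matched rank-$1$ edge. Your appeal to Lemma~\ref{lem:invariant} and Lemma~\ref{lem:O-U-app} cannot substitute for this, because those lemmas speak about max-flows of $H_1$, while $M$ arises from a flow in the modified network $H_2$; that $M\cap E_1$ corresponds to a max-flow of $H_1$ is precisely what is being proved. (Your "subtlety" in the condition-2 argument, that applicants with $C_a^*\in T_1\cup U_1$ stay matched to $f$-posts, silently assumes condition 1 for the same reason.)

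The paper fills exactly this hole with a charging argument. It writes $|f_1|$ as the total capacity of the deleted forward cut edges, each of which is either $(s,C_a^*)$ with capacity $1$ or $(C_p^u,\mathrm{parent})$ with capacity $q(C_p^u)$, and then shows that the subtree of each such node retains exactly its charged number of rank-$1$ $R$-$L$ edges in $H_2(f_2)$. Two facts make this work: (a) no $s$-post edge is incident to any leaf of a charged subtree --- applicants with $C_a^*\in T_1\cup U_1$ receive no $s$-post edges, and for $C_p^u\in S_1$ Lemma~\ref{lem:analog-O-U-app} places all leaves of its subtree in $S_1\cup U_1$, whereas $s$-post edges end at leaves in $T_1$; and (b) since the edge linking the subtree to $s$ (resp.\ to its parent) has been deleted, any flow of $f_2$ through the subtree must both enter and exit through leaf edges, swapping an $R$-$L$ edge for an $L$-$R$ edge in pairs and preserving the count. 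Finally, one must show no $R$-$L$ edge is charged to both an applicant node and a post node, which holds because such a flow path would have to contain an edge from $T_1\cup U_1$ into $S_1$, all of which were deleted in Step~\ref{step:delete}. Your cut-based intuition can probably be developed into an alternative proof (every $s$-$t$ path of $H_2$ crosses from $S_1$ to $T_1$ exactly once, necessarily via an $s$-post edge, and hence uses rank-$1$ $L$-$R$ and $R$-$L$ edges in equal numbers; one then applies flow decomposition to the arbitrary max-flow $f_2$, checking that flow cycles contribute nothing), but that argument is not present in your proposal, and without it the first condition is unproved.
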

\begin{proof}
We first prove that
the number of rank-1 edges in $M$ is equal 
to the value of  max-flow in $H_1$.
Let $f_1$ be the max-flow 
$H_1$ and by max-flow min-cut theorem, the value of $f_1$ is equal to the sum of capacities of the forward edges of the min-s-t cut ($S_1, U_1 \cup T_1)$. Thus,
\begin{eqnarray*}
|f_1| = \sum_{(x, y) \in H_1: x \in S_1, y \in U_1 \cup T_1} c(x,y)
\end{eqnarray*}
We observe that such an edge $(x, y)$ appears as a $(y, x)$ edge in the residual network $H_1(f_1)$ and gets deleted during Step~\ref{step:delete} of our algorithm. 
We note that by Lemma~\ref{lem:no-LR-del} no edge between $L$ and $R$ is deleted by our algorithm.
Therefore, an $(x, y)$ edge in $H_1$ whose corresponding $(y, x)$ edge gets deleted in $H_1(f_1)$ has to be either of the two types:
\begin{itemize}
\item $(x, y) = (s, C_a^*)$ for some applicant $a$. In this case $c(x, y) = 1$.
\item $(x, y) = (C_p^u, y)$ for some post $p$. In this case $c(x, y) = q(C_p^u)$.
\end{itemize}
The node $C_a^*$ is saturated in $f_1$ thus there is exactly one $R$-$L$ edge incident on $C_a^*$ in $H_1(f_1)$. 
Similarly, the node $C_p^u$ is saturated in $f_1$ and hence in $H_1(f_1)$ there are exactly $q(C_p^u)$ rank-1 $R$-$L$ edges 
in the subtree of $C_p^u$. By Corollary~\ref{cor:del} an $R$-$L$ edge is counted for either $C_a^*$ or $C_p^u$ but not both.

We show that (i) for an applicant $a$ if the edge $(C_a^*, s)$ got deleted, then  there is one rank-1 $R$-$L$ edge in the subtree of $C_a^*$ in $H_2(f_2)$ and 
(ii) for a node $C_p^u$ if the edge $(y, C_p^u)$ got deleted, then there are $q(C_p^u)$ many rank-1 $R$-$L$ edges in the subtree of $C_p^u$ in $H_2(f_2)$.
\begin{itemize}
\item Consider the node $C_a^*$. Since $C_a^* \in U_1 \cup T_1$, no node in the subtree of $C_a^*$ gets any non rank-1 edges on it during construction of $H_2$.
If $f_2$ does not use the node $C_a^*$, the $R$-$L$ edge in the subtree of $C_a^*$ in $H_1(f_1)$  continues to exist in $H_2(f_2)$ and we are done.
If the flow $f_2$ uses the node $C_a^*$, since the edge $(C_a^*, s)$ is deleted, the flow must be via a path of the form $\langle \ldots C_p^a, C_a^p, C_a^*, C_a^{p'}, C_{p'}^a, \ldots\rangle $.
Note that $p'$ is a rank-1 post of $a$ and hence $(C_{p'}^a, C_a^{p'})$ is the $R$-$L$ edge in the subtree of $C_a^*$ in $H_2(f_2)$.

\item Consider the node $C_p^u$. Let $\mathcal{T}(C_p^u)$ denote the subtree of $\mathcal{T}_p$ rooted at $C_p^u$.
Since $C_p^u \in S_1$, by Lemma~\ref{lem:analog-O-U-app} every leaf in $\mathcal{T}(C_a^v)$ is in $S_1 \cup U_1$. 
Thus, none of the leaf nodes in $\mathcal{T}(C_p^u)$ is $s(a')$ for any applicant $a'$. Thus, none of these nodes get non-rank-1 edges incident on them. 
If $f_2$ does not use $C_p^u$ we are  done, since the $q(C_p^u)$ many rank-1 $R$-$L$ edges in $\mathcal{T}(C_p^u)$ from $H_1(f_1)$ continue to exist in $H_2(f_2$).
If $f_2$ uses $C_p^u$, since the edge $(C_p^v, C_p^u)$ is deleted (by our algorithm), the flow must enter and exit via leaves of the subtree $\mathcal{T}(C_p^u)$. This implies that for
every $R$-$L$ edge via which the flow enters to reach $C_p^u$, there must be a unique $L$-$R$ edge via which the flow leaves $C_p^u$. This ensures that the number of rank-1
$R$-$L$ edges in the subtree of $\mathcal{T}(C_p^u)$ remains invariant between $H(f_1)$ and $H(f_2)$.
\end{itemize}
To complete the proof we argue that a single $R$-$L$ edge in $H_2(f_2)$ does not get counted for an applicant node $C_a^*$ {\it and} for a post node $C_p^u$.
Assume for the sake of contradiction, an $R$-$L$ edge $(C_p^a, C_a^p)$ is counted for both $C_a^*$ and $C_p^u$. This implies that in $f_2$, there is unit
flow along the edge $(C_a^p, C_p^a)$. 
Let the flow via $(C_a^p, C_p^a)$ in $f_2$  be along the path $ \rho =  \langle s, \ldots, C_a^*, C_a^p, C_p^a, \ldots, C_p^u, \ldots, t \rangle$.
Recall that since $(C_a^*, s)$ was deleted, the node $C_a^* \in T_1 \cup U_1$. Lemma~\ref{lem:O-U-app} implies that $C_a^p \in T_1 \cup U_1$.
Similarly, $C_p^u \in S_1$ implies that $C_p^a \in S_1 \cup U_1$. The path $\rho$ must have some edge $(x, y)$ such that $x \in T_1 \cup U_1$ and $y \in S_1$. However, all such edges from $T_1 \cup U_1$ to $S_1$ were deleted by 
our algorithm. Thus the path $\rho$ does not exist in $H_2$ which implies that a single $R$-$L$ edge cannot be counted twice.
Hence the number of rank-1 edges in $M$ is exactly equal to the value of the max-flow $f_1$ in $H_1$. 

It is straightforward to see that $M$ matches every applicant $a$ to a post in $f(a) \cup s(a)$, since these are the only edges 
added during the course of the algorithm. This completes the proof of the correctness of our algorithm.
\end{proof}

{\bf Running time: }The flow network has $O(|E|)$ vertices and edges. The maximum flow is at most $|A|$. So the running time of the algorithm is bounded by $O(|A||E|)$.

\section{Hardness for non-laminar classifications}
\label{sec:hardness}
In this section, we consider the \CRMM\ and \CPM\ problems where the classifications are not necessarily laminar.
We show that the following decision version of the \CRMM\ problem is \NP-hard: Given an instance $G=(A\cup P, E)$ of the \CRMM\ problem
and a signature vector $\sigma=(\sigma_1,\ldots,\sigma_r)$, 
does there exist a feasible matching $M$ in $G$ such that $M$ has a signature $\rho$ such that $\rho\succeq\sigma$? We give a reduction from the \monEsat\  problem to the above decision version of \CRMM. 
Throughout this section, we refer to this decision version as the \CRMM\ problem.
Our reduction also works for showing the hardness for \CPM\ problem, since only posts have classifications, and each applicant can be matched to at most one post.
Also, the reduction shows that the two problems remain \NP-hard for non-laminar classifications even when preference lists are strict and are of length two.

The \monEsat\ problem is a variant of the boolean satisfiability problem where the input is a conjunction of $m$ clauses.
Each clause is a disjunction of exactly three variables and no variable appears in negated form.
The goal is to decide whether there exists a truth assignment to the variables such that every clause has exactly one true variable
and hence two false variables. This problem is known to be NP-hard~\cite{Schaefer78}.
 Let $\phi$ 
be the given instance of the \monEsat\ problem, with $n$ variables $x_1,\ldots,x_n$
and $m$ clauses $C_1,  C_2, \ldots, C_m$.
We construct an instance $G = (A \cup P, E)$ of the \CRMM\ problem as follows:

\noindent {\bf Applicants: }For each variable $x_i$ in $\phi$, there are two applicants $a_i,b_i$ in $A$. For each occurrence of $x_i$ in clause $C_j$,
there are two applicants $a_{ij},b_{ij}$. Thus $A=\{a_i,b_i,a_{ij},b_{ij}\mid x_i \in \phi, x_i\in C_j\}$ and $|A| = 2n+6m$.  

\noindent {\bf Posts: }For each variable $x_i$, there are three posts $p_i, p^t_i$ and $p^f_i$. For each clause $C_j$, there is a post $p_j$. Thus $P=\{p_i, p^t_i,p^f_i\mid x_i\in \phi\}\cup \{p_j\mid C_j\in \phi\}$ and $|P| = 3n+m$.

\noindent {\bf Preferences of applicants: }The applicants have following preferences:

\begin{minipage}{0.5\linewidth}
\begin{eqnarray*}
a_i & : &\quad p_i, \quad p^t_i\\
b_i & : & \quad p_i, \quad p^f_i\\
\end{eqnarray*}
\end{minipage}
\begin{minipage}{0.4\linewidth}
\begin{eqnarray*}
a_{ij} & : & \quad p_j, \quad p^t_i\\
b_{ij} & : & \quad p_j, \quad p^f_i\\
\end{eqnarray*}
\end{minipage}

\noindent {\bf Quotas and classifications of posts: }
\begin{enumerate}
\item Let $C_j = x_i\vee x_{i'}\vee x_{i''}$;
the corresponding  post $p_j$ has quota $3$, and following classes:
\begin{enumerate}
\item $S_{ij}=\{a_{ij},b_{ij}\}$ with quota $1$ for each $x_i\in C_j$.
\item $S_{1j}=\{a_{ij},a_{i'j},a_{i''j}\}$ with quota $1$. 
\item $S_{2j}=\{b_{ij},b_{i'j},b_{i''j}\}$ with quota $2$.
\end{enumerate}
\item Each post $p^t_i$ has quota $k_i$=the number of occurrences of $x_i$ in $\phi$ and following classes:

$S^t_j=\{a_{ij},a_i\}$ with quota $1$, for each $j$ such that $x_i\in C_j$.
\item Each post $p^f_i$ has quota $k_i$=the number of occurrences of $x_i$ in $\phi$ and following classes:

$S^f_j=\{b_{ij},b_i\}$ with quota $1$, for each $j$ such that $x_i\in C_j$.
\item Each post $p_i$ has quota $1$ and no classes.
\end{enumerate}

We now show the correctness of the reduction for the \CRMM\ problem (Theorem \ref{thm:crmm-hard} stated in Section \ref{sec:intro}).
\begin{theorem}
The instance $G$ constructed above, corresponding to a given formula $\phi$ with $n$ variables and $m$ clauses, 
has a matching of signature $\sigma=(3m+n,3m+n)$ if and only if $\phi$ has a satisfying assignment.
\end{theorem}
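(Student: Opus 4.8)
The plan is to prove both directions by exhibiting a correspondence between a $1$-in-$3$ truth assignment and a matching that saturates every post-quota using exactly the rank-$1$ edges, leaving every remaining applicant to be matched at rank-$2$. I would first record the structural upper bound that drives everything: every rank-$1$ edge enters some $p_i$ (from $a_i$ or $b_i$) or some $p_j$ (from $a_{ij}$ or $b_{ij}$). Since $q(p_i)=1$, at most $n$ rank-$1$ edges enter the $p_i$'s; since $q(p_j)=3$ while $S_{1j}$ (quota $1$) caps the ``$a$''-applicants at one and $S_{2j}$ (quota $2$) caps the ``$b$''-applicants at two, at most $3m$ rank-$1$ edges enter the $p_j$'s. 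Hence any feasible matching has at most $3m+n$ rank-$1$ edges, and because $|A|=2n+6m=2(3m+n)$ and every applicant has quota one, a signature of $(3m+n,3m+n)$ is attainable only if \emph{every} applicant is matched, half at rank-$1$ and half at rank-$2$, with both bounds above tight.

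For the forward direction, given a satisfying assignment I would match, for each TRUE variable $x_i$: $b_i\to p_i$, $a_i\to p_i^t$, and for every clause $j$ containing $x_i$, $a_{ij}\to p_j$ and $b_{ij}\to p_i^f$; and for each FALSE variable $x_i$: $a_i\to p_i$, $b_i\to p_i^f$, and for every clause $j$, $b_{ij}\to p_j$ and $a_{ij}\to p_i^t$. I would then verify feasibility class by class: at $p_i^t$ the matched applicants --- either $a_i$ alone (TRUE) or all $a_{ij}$ (FALSE) --- meet each $S_j^t$ with equality and respect the quota $k_i$, and symmetrically at $p_i^f$; at $p_j$ the exactly-one-true-per-clause property yields exactly one ``$a$'' and two ``$b$'' applicants, satisfying $S_{1j}$, $S_{2j}$ and every $S_{ij}$. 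Counting then gives $n+3m$ rank-$1$ edges and $n+3m$ rank-$2$ edges, i.e. signature $(3m+n,3m+n)$.

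For the converse, I start from a feasible $M$ of signature $(3m+n,3m+n)$. Tightness of the bound forces each $p_i$ to receive exactly one of $a_i,b_i$ at rank-$1$ and each $p_j$ to be saturated by exactly one ``$a$'' and two ``$b$'' applicants; the three classes $S_{ij}$ (each quota $1$) then force exactly one of $a_{ij},b_{ij}$ into $p_j$ from each variable slot. I would set $x_i$ TRUE iff $a_i$ is matched at rank-$2$ (equivalently $b_i\to p_i$, which forces $a_i\to p_i^t$ since $a_i$ is matched). The crux is the gadget coupling: if $a_i\to p_i^t$ then $a_i$ fills the quota-$1$ class $S_j^t=\{a_{ij},a_i\}$ for every occurrence, so no $a_{ij}$ can reach $p_i^t$ and, being matched, every such $a_{ij}$ must go to $p_j$; symmetrically $b_i\to p_i^f$ forces every $b_{ij}$ into $p_j$. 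Thus $a_{ij}\to p_j$ precisely when $x_i$ is TRUE, so the unique ``$a$'' applicant admitted to each $p_j$ identifies exactly one TRUE variable in $C_j$, yielding a $1$-in-$3$ assignment.

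The main obstacle I anticipate is the reverse direction, specifically arguing that the equality bounds propagate all the way into the variable gadgets: one must combine the global counting (all applicants matched, both post-families saturated) with the local quota-$1$ ``blocking'' classes $S_j^t,S_j^f$ to force the clean dichotomy ``$a_{ij}\to p_j$ iff $x_i$ TRUE''. Establishing, without overcounting, that saturation of $p_j$ together with the $S_{ij}$ caps forces exactly one applicant per variable slot and exactly one ``$a$'' overall is the delicate step; the forward construction and its feasibility checks are routine by comparison.
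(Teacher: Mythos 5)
Your proposal is correct and follows essentially the same route as the paper: the forward construction is identical, and your converse uses the same tightness counting (all applicants matched, each $p_i$ and $p_j$ saturated with exactly one ``$a$'' and two ``$b$'' applicants per clause post) combined with the same propagation through the quota-$1$ blocking classes $S_j^t, S_j^f$. The only cosmetic difference is that you define the assignment by whether $a_i$ is matched at rank $2$, while the paper starts from the unique $a_{ij}\in M(p_j)$ and deduces the same chain; these are equivalent.
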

\begin{proof}
Let $\phi$ have a satisfying assignment $\eta$. We show that $G$ has a matching $M$ with signature $\sigma$.
\begin{itemize}
\item If $x_i=1$ in $\eta$, set $M(a_{ij})=p_j$ and $M(b_{ij})=p^f_i$ for each clause $C_j$ containing $x_i$.
Set $M(a_i)=p^t_i$ and $M(b_i)=p_i$. 
\item If $x_i=0$ in $\eta$, set $M(b_{ij})=p_j$ and $M(a_{ij})=p^t_i$ for each clause $C_j$ containing $x_i$. 
Set $M(b_i)=p^f_i$ and $M(a_i)=p_i$. 
\end{itemize}
Note that $\eta$ satisfies the property that, in each clause $C_j$, $\eta$ assigns value $1$ to exactly one variable, 
say $x_i$, and value $0$ to remaining two variables $x_{i'}$ and $x_{i''}$. 
It is easy to see that $M$ satisfies all the quotas. Further, $M$ has signature $\sigma$, since each post $p_j$ gets matched to three applicants, each post $p_i$
is matched to one applicant, total number of applicants matched to posts $p^t_i$ and $p^f_i$ is exactly $\sum_{i=1}^n(k_i+1)=3m+n$.

Now consider a matching $M$ with signature $\sigma$ in $G$. We construct a satisfying assignment $\eta$ corresponding to $M$. There are $2n+6m$ applicants, 
so all the applicants must be matched in $M$ to achieve signature $\sigma$. Since the number of rank $1$
edges in $M$ is $3m+n$, $|M(p_j)|=3$ for each $j$. Let $C_j=x_i\vee x_{i'}\vee x_{i''}$. Due to the class $S_{1j}$ of $p_j$, at most one of $a_{ij}, a_{i'j}, a_{i''j}$
can be matched to $p_j$. Also, all three applicants matched to $p_j$ can not be $b_{ij}, b_{i'j}, b_{i''j}$ because of class $S_{2j}$. 
Therefore, exactly one of $a_{ij}, a_{i'j}, a_{i''j}$ must be in $M(p_j)$.
Without loss of generality, let $a_{ij}\in M(p_j)$. Then, due to class $S_{ij}$, $b_{ij}\notin M(p_j)$. Then $b_{i'j}, b_{i''j}\in M(p_j)$.
Also, $b_{ij}\in M(p^f_i)$, and due to class $S^f_j$, $b_i\notin M(p^f_i)$. Therefore $b_i\in M(p_i)$ and consequently, due to quota $1$ of $p_i$, $a_i\in M(p^t_i)$. 
This implies that, for each $C_j$ such that $x_i\in C_j$, $a_{ij}\notin M(p^t_i)$, due to the quota constraint of class $S^t_j$. Thus, if $a_{ij}\in M(p_j)$, $a_{ij'}\in M(p_{j'})$ 
for each clause $C_{j'}$ containing $x_i$. We set $x_i=1$ in $\eta$ in this case.

Now consider the case when $a_{ij}\notin M(p_j)$ for some $j$. Then $M(a_{ij})=p^t_i$ and hence $M(a_i)=p_i$, $M(b_i)=p^f_i$, forcing $M(b_{ij})=p_j$ for each $j$ such that $x_i$ appears in clause $C_j$ in $\phi$. Therefore, $M(a_{ij})=p^t_i$ for each $j$, where $x_i\in C_j$. We set $x_i=0$ in this case. It can be seen that exactly one variable
from each clause is set to $1$ in $\eta$ and hence $\eta$ is a satisfying assignment.
\end{proof}

We show correctness of the reduction for the \CPM\ problem using Theorem \ref{thm:pop-hard} below.

\begin{theorem}\label{thm:pop-hard}
The instance $G$ admits a popular matching amongst all feasible matchings if and only if the formula $\phi$ has a satisfying assignment.
\end{theorem}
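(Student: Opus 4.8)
The plan is to reuse the gadget from the previous theorem almost verbatim, establishing that a popular matching exists in $G$ if and only if the signature $\sigma = (3m+n, 3m+n)$ is achievable by a feasible matching, which by the previous theorem happens exactly when $\phi$ is satisfiable. First I would argue the forward direction of that signature claim. Using Lemma~\ref{lem:char}, any popular matching $M$ must match every applicant $a$ to a post in $f(a)\cup s(a)$, and $M\cap E_1$ must carry a max-flow in $H_1$. The key observation is to compute $f(a)$ and $s(a)$ explicitly in this instance and show that a feasible matching saturating all rank-$1$ capacity forces exactly the structure counted in the previous proof: every $p_j$ receives three applicants, every $p_i$ receives one, and the total load on the $p^t_i, p^f_i$ posts is $\sum_i(k_i+1) = 3m+n$. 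This would let me conclude $\sigma_M = (3m+n, 3m+n)$, so by the previous theorem $\phi$ is satisfiable.

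For the converse, suppose $\phi$ has a satisfying assignment. I would take exactly the matching $M$ constructed in the proof of the previous theorem (the one with signature $(3m+n, 3m+n)$) and verify that it satisfies the two conditions of Lemma~\ref{lem:char}, hence is popular. The first condition, that $M\cap E_1$ corresponds to a max-flow in $H_1$, follows because $M$ matches the maximum possible number of applicants along rank-$1$ edges: every $p_j$ is filled to its quota $3$ and every $p_i$ to its quota $1$, and no feasible matching can do better on rank-$1$ edges because of the class constraints $S_{1j}, S_{2j}, S_{ij}$. The second condition, $M(a)\in f(a)\cup s(a)$ for every $a$, would need a short computation of $s(a)$ from the residual structure of $H_1$, checking that the rank-$2$ posts actually used by $M$ (namely $p^t_i$ and $p^f_i$) are precisely the $s$-posts. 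Invoking Lemma~\ref{lem:char} then gives popularity.

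The main obstacle I anticipate is the converse direction's verification that the rank-$2$ assignments in $M$ genuinely land in $s(a)$ rather than being strictly worse than $s(a)$; this requires understanding the set $T_1$ of post-nodes reachable to $t$ in the residual network $H_1(f_1)$ for this specific gadget. Concretely, I must confirm that for an applicant like $a_i$ whose rank-$1$ post $p_i$ is saturated, the node $C_{p^t_i}^{a_i}$ lies in $T_1$, so that $p^t_i\in s(a_i)$; and symmetrically for the clause-applicants $a_{ij}$ and $b_{ij}$ whose second-choice posts are $p^t_i$ and $p^f_i$. Because the posts $p^t_i, p^f_i$ have quota $k_i$ equal to the number of occurrences while their classes $S^t_j, S^f_j$ each have quota $1$, the residual reachability must be traced carefully through these capacity-one classes to guarantee every applicant's second choice is exactly their $s$-post. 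Once this reachability bookkeeping is pinned down, both directions follow cleanly from Lemma~\ref{lem:char} combined with the previous theorem, and the hardness of \monEsat\ completes the \NP-hardness claim for \CPM.
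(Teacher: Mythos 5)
Your proposal follows essentially the same route as the paper: invoke the characterization of Lemma~\ref{lem:char} to argue that any popular matching must match all $2n+6m$ applicants within $f(a)\cup s(a)$ while being maximum on rank-$1$ edges (hence having signature $(3m+n,3m+n)$, so $\phi$ is satisfiable by Theorem~\ref{thm:crmm-hard}), and conversely to verify that the matching $M$ built from a satisfying assignment in the proof of Theorem~\ref{thm:crmm-hard} meets both conditions of the characterization, with the $p_j,p_i$ as $f$-posts and the $p^t_i,p^f_i$ as $s$-posts. The reachability bookkeeping you flag as the main obstacle is exactly the step the paper leaves implicit, and it goes through easily: the posts $p^t_i,p^f_i$ have no rank-$1$ edges incident on them, so their tree nodes carry no flow in $f_1$ and trivially reach $t$ in the residual network, hence lie in $T_1$ and are the $s$-posts of every applicant whose rank-$1$ post is saturated.
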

\begin{proof}
From the characterization in Section~\ref{sec:pop}, a popular matching in $G$ must match all the applicants, all the $p_j$ and $p_i$
posts are $f$-posts whereas all the $p^t_i, p^f_i$ posts are $s$-posts, and any matching in $G$ that is a maximum feasible matching on rank-$1$ edges
and matches all the applicants is popular in $G$. Note that we do not need last resort posts here, since $s(a)\neq \emptyset$ for each $a\in A$.

The matching $M$ referred to in the proof of Theorem \ref{thm:crmm-hard} satisfies the above characterization, and hence is popular in $G$.
Thus the same proof as that of Theorem~\ref{thm:crmm-hard} works here.
\end{proof}

\noindent{\bf Remark:} The same instance $G$ without ranks on edges is useful in showing \NP-hardness of maximum cardinality feasible matching in the presence of classifications. 
This is because, in $G$, a maximum cardinality feasible matching has size $|A|$, that is, it matches all applicants, if and only if $\phi$ has a valid $1$-in-$3$ SAT assignment. 
Thus we establish Theorem~\ref{thm:max-hard} (Section~\ref{sec:intro}).

\bibliography{references}

\newpage

\end{document}